\def\input@path{{template/}{sections/}}
\newcommand\vldbdoi{XX.XX/XXX.XX}
\newcommand\vldbpages{XXX-XXX}
\newcommand\vldbvolume{14}
\newcommand\vldbissue{1}
\newcommand\vldbyear{2020}
\newcommand\vldbauthors{Jianting Zhang,
    Lefteris Kokoris-Kogias,
    Tasos Kichidis,
    Arun Koshy,
    Mingwei Tian,
    Ilya Sergey,
    Alberto Sonnino}
\newcommand\vldbtitle{\shorttitle}
\newcommand\vldbavailabilityurl{https://anonymous.4open.science/r/beluga-0F4F}
\newcommand\vldbpagestyle{plain}
\newif\ifpublish
\newif\ifarxiv
\newtheorem{assumption}{Assumption}
\newtheorem{theorem}{Theorem}
\newtheorem{lemma}{Lemma}
\newtheorem{corollary}{Corollary}
\newtheorem{definition}{Definition}
\pgfplotsset{compat=1.9}
\algrenewcommand\textproc{}
\newcommand{\para}[1]{\vspace{0.5em}\noindent\textbf{#1}}
\newcommand{\sysname}{Beluga\xspace} 
\newcommand{\syncopt}{DAG-based uncertified\xspace}
\newcommand{\synccert}{DAG-based certified\xspace}
\newcommand{\syncrbc}{DAG-based RBC\xspace}
\newcommand{\syncmt}{Multi-chain certified\xspace}
\newcommand{\dagbc}{\mathit{block\_propose}}
\newcommand{\dagdeli}{\mathit{block\_store}}
\newcommand{\dagpoa}{\mathit{block\_accept}}
\newcommand{\codelink}{
    \ifpublish
        \url{https://github.com/asonnino/beluga/tree/mysticeti-pull-attack} (commit \texttt{40ebbb8})
    \else
        \url{https://anonymous.4open.science/r/beluga-0F4F}
    \fi
}
\newcommand{\dashboardlink}{
    \ifpublish
        \url{https://github.com/asonnino/beluga/blob/mysticeti-pull-attack/crates/orchestrator/assets/grafana-dashboard.json} (commit \texttt{
            40ebbb8
        })
    \else
        \url{https://anonymous.4open.science/r/beluga-0F4F/crates/orchestrator/assets/grafana-dashboard.json}
    \fi
}
\setlist[itemize]{leftmargin=*,itemsep=1pt,topsep=2pt,parsep=0pt}
\setlist[enumerate]{leftmargin=*,itemsep=1pt,topsep=2pt,parsep=0pt}
\definecolor{eclipseStrings}{RGB}{42,0.0,255}
\definecolor{eclipseKeywords}{RGB}{127,0,85}
\colorlet{numb}{magenta!60!black}
\lstdefinelanguage{json}{
    basicstyle=\normalfont\ttfamily,
    commentstyle=\color{eclipseStrings},
    stringstyle=\color{eclipseKeywords},
    stepnumber=1,
    numbersep=8pt,
    showstringspaces=false,
    breaklines=true,
    string=[s]{"}{"},
    comment=[l]{:\ "},
    morecomment=[l]{:"},
    literate=
        *{0}{{{\color{numb}0}}}{1}
        {1}{{{\color{numb}1}}}{1}
        {2}{{{\color{numb}2}}}{1}
        {3}{{{\color{numb}3}}}{1}
        {4}{{{\color{numb}4}}}{1}
        {5}{{{\color{numb}5}}}{1}
        {6}{{{\color{numb}6}}}{1}
        {7}{{{\color{numb}7}}}{1}
        {8}{{{\color{numb}8}}}{1}
        {9}{{{\color{numb}9}}}{1}
}
\newcommand\YAMLcolonstyle{\color{red}\mdseries}
\newcommand\YAMLkeystyle{\color{black}\bfseries}
\newcommand\YAMLvaluestyle{\color{blue}\mdseries}
\newcommand\language@yaml{yaml}
\lstdefinelanguage
\newcommand\ProcessThreeDashes{\llap{\color{cyan}\mdseries-{-}-}}
\begin{document}

\title{\sysname: Block Synchronization for BFT Consensus Protocols}

\author{Jianting Zhang$^{4}$,
    Lefteris Kokoris-Kogias$^{1}$,
    Tasos Kichidis$^{1}$,
    Arun Koshy$^{1}$,
    Mingwei Tian$^{1}$,
    Ilya Sergey$^{1,3}$,
    Alberto Sonnino$^{1,2}$}

\affiliation{%
    \institution{$^{1}$Mysten Labs;
        $^{2}$University College London;
        $^{3}$National University of Singapore;
        $^{4}$Purdue University}
    \country{}
}
\email{}

\begin{abstract}

    Modern high-throughput BFT consensus protocols use streamlined push-pull mechanisms to disseminate blocks and keep happy-path performance optimal. Yet state-of-the-art designs lack a principled and efficient way to exchange blocks, which leaves them open to targeted attacks and performance collapse under network asynchrony.
    This work introduces the \emph{block synchronizer}, a simple abstraction that drives incremental block retrieval and enforces resource-aware exchange. Its interface and role sit cleanly inside a modern BFT consensus stack. We also uncover a new attack, where an adversary steers honest validators into redundant, uncoordinated pulls that exhaust bandwidth and stall progress.
    \sysname is a modular and scarcity-aware instantiation of the block synchronizer. It achieves optimal common-case latency while bounding the cost of recovery under faults and adversarial behavior. We integrate \sysname into Mysticeti, the consensus core of the Sui blockchain, and show on a geo-distributed AWS deployment that \sysname sustains optimal performance in the optimistic path and, under attack, delivers up to $3\times$ higher throughput and $25\times$ lower latency than prior designs. The Sui blockchain adopted \sysname in production.

\end{abstract}

\maketitle

\pagestyle{\vldbpagestyle}
\begingroup\small\noindent\raggedright\textbf{PVLDB Reference Format:}\\
\vldbauthors. \vldbtitle. PVLDB, \vldbvolume(\vldbissue): \vldbpages, \vldbyear.\\
\href{https://doi.org/\vldbdoi}{doi:\vldbdoi}
\endgroup
\begingroup
\renewcommand\thefootnote{}\footnote{\noindent
    This work is licensed under the Creative Commons BY-NC-ND 4.0 International License. Visit \url{https://creativecommons.org/licenses/by-nc-nd/4.0/} to view a copy of this license. For any use beyond those covered by this license, obtain permission by emailing \href{mailto:info@vldb.org}{info@vldb.org}. Copyright is held by the owner/author(s). Publication rights licensed to the VLDB Endowment. \\
    \raggedright Proceedings of the VLDB Endowment, Vol. \vldbvolume, No. \vldbissue\ %
    ISSN 2150-8097. \\
    \href{https://doi.org/\vldbdoi}{doi:\vldbdoi} \\
}\addtocounter{footnote}{-1}\endgroup
\ifdefempty{\vldbavailabilityurl}{}{
    \vspace{.3cm}
    \begingroup\small\noindent\raggedright\textbf{PVLDB Artifact Availability:}\\
    The source code, data, and/or other artifacts have been made available at \url{\vldbavailabilityurl}.
    \endgroup
}

\section{Introduction}\label{sec:introduction}

The last decade of research in high-throughput Byzantine fault-tolerant (BFT) consensus protocols has unveiled that achieving world-class performance requires two key design choices.
Firstly, modern blockchains decouple block ordering from bulk data dissemination~\cite{yang2019prism,danezis2022narwhal}, and secondly, they chain the disseminated data with causal references to past disseminated data~\cite{buchman2016tendermint,cohen2022aware,danezis2022narwhal}. Both properties depart from the blueprint used by legacy protocols such as PBFT~\cite{castro1999practical}, where data is disseminated solely by a rotating leader using monotonically increasing view numbers; instead, they adopt a concurrent data dissemination approach. This means each validator is expected to assemble transactions into blocks or batches and broadcast them to all other validators~\cite{danezis2022narwhal}.

This dissemination usually imposes chaining or causal dependencies between blocks, but protocols implement this with varying degrees of rigidity.
Some protocols, such as Narwhal-Hotstuff~\cite{danezis2022narwhal} and Autobahn~\cite{giridharan2024autobahn} achieve this through
direct dissemination, 
after which a separate module imposes a causal order over 
this data.
Others, such as HashGraph~\cite{green2021hashgraph} and Blocklace-based systems~\cite{blocklace,grass-route} have 
validators assemble transactions into a block 
and disseminate the block by referencing as many blocks 
as possible.
Finally, protocols such as Bullshark~\cite{spiegelman2022bullshark} and Mysticeti~\cite{babel2025mysticeti} force validators  
to only disseminate blocks upon collecting at least a quorum of blocks from the previous round.
Regardless of the approach, to implement atomic broadcast~\cite{cachin2011introduction}, 
they must implement the reliable broadcast abstraction.
Specifically, even if the BFT consensus protocol ensures consistency, the dissemination layer must still enforce a liveness-relevant property---totality~\cite{cachin2011introduction}: if one honest validator receives a block, then all other validators must be able to obtain it as well.


Through manual inspection of numerous high-performant BFT codebases~\cite{sui-code,narwhal-code,hotstuff-code,mysticeti-code,jolteon-code,diem-code,autobahn-code,mahi-mahi-code,ditto-code}, we observed that contrary to the liveness proofs and descriptions of these protocols, none of the state-of-the-art protocols actually implement an upfront reliable broadcast, such as Bracha broadcast~\cite{bracha1987asynchronous}. That is, they do not implement the totality property through a double-echo. The reason is clear: this would be prohibitively expensive as it consumes precious bandwidth, and the double-echo is rarely needed (only upon faults or poor network conditions)~\cite{danezis2022narwhal,shrestha2025towards}.
We empirically uncover the \emph{same} implicit two-phase pattern: an optimistic \emph{push} of block identifiers 
followed by a probabilistic \emph{pull} for missing blocks. We confirm this through a systematic inspection of diverse production and prototype codebases of state-of-the-art protocols, examining dissemination modules, recovery loops, timers, and traces. The optimistic push is implemented through weaker primitives such as Byzantine consistent broadcast (lacking \emph{totality})~\cite{Bullshark-code,narwhal-code,sailfish-code} or even best-effort broadcast~\cite{mysticeti-code,sui-code}. A recovery mechanism asynchronously \emph{pulls} any missing blocks deemed ``useful''. 


\para{The missing component.}
Despite its importance, this push-pull mechanism remains unstructured glue code, often involving uncoordinated pre-dissemination of block identifiers, unoptimized random pulls, and no explicit bounds on recovery message complexity (\autoref{sec-basic-sync}).
Protocols run ad hoc logic on received blocks to determine whether and when to trigger the pull part of the protocol.
This hidden component governs throughput under load, tail latency, and resilience against adversarial behavior and network conditions. Yet it lacks a specification or provable bounds.

As an almost expected consequence of this discrepancy between theory and practice, we uncover how an adversary can induce a targeted performance degradation that we call the \emph{pull induction attack}. Byzantine replicas selectively withhold messages during initial dissemination so that only a small subset of honest validators receive a block. This then triggers redundant and overlapping pull requests from the remaining honest validators. The attack can be repeated every round and can drastically increase latency and consume recovery bandwidth (\autoref{sec-attack-insights}).
We even observe that some implementations entirely omit the pull mechanism, thus claiming happy path performance while silently failing to achieve liveness under adversarial settings.


In this work, we formalize the module of modern high-throughput BFT consensus responsible for block synchronization, which we call the \emph{block synchronizer} (\autoref{sec-block-sync-def}). 
This module decides how validators push their blocks to other validators, how to fetch any missing causal dependencies, and runs a principled admission control on whether to include the block in its local dependency graph (i.e., use it as a parent).
An ideal block synchronizer module must satisfy two main goals: (G1) optimal push latency along the optimistic path; and (G2) bounded amplification even under adversarial scheduling.
Existing block synchronizer protocols fall short on G1 due to conservative multi-step push procedures (such as explicit consistent broadcast) and/or on G2 because of ad hoc push-pull designs. We provide a summary of these protocols in \autoref{sec-basic-sync} (\Cref{tab:comparison}).

We then present \sysname, a block synchronizer module that satisfies all the goals above (\autoref{sec-sysname}). \sysname is modular and integrates into existing protocols without altering their safety guarantees or ordering logic.
%
%
The key insight behind the construction of \sysname is that heavyweight upfront dissemination proofs (such as running an explicit Bracha broadcast) are \emph{sufficient but not necessary} for robustness. Instead, it structures the push phase to carefully select the causal history of blocks based on scarcity signals (as opposed to simply selecting the first that arrives) and opportunistically leverages implicit dissemination evidence. The detection of missing blocks is performed through the reception of messages containing digests of past blocks unknown to the validator. In other words, the pull mechanism leverages the chaining and causal history of messages. Together, these mechanisms enable scarcity-aware pulls that provably bound recovery complexity while retaining optimal optimistic-path performance. In short, \sysname aims to preserve the optimistic high performance of current implementations while reintroducing the robustness guarantees of classic reliable broadcast.


\para{Real world impact.}
We implement \sysname within Mysticeti, the consensus core of the Sui blockchain~\cite{sui}. Our evaluation on a geo-distributed AWS testbed shows that \sysname maintains optimal optimistic-path latency and throughput while significantly improving performance under targeted attacks and adverse network conditions. Under attack scenarios, \sysname improves throughput by over 3$\times$ and reduces latency by 25$\times$ compared to baseline Mysticeti with its default block synchronizer (\autoref{sec-experiment}).
We collaborated with the Sui team for over a year to integrate \sysname into the Sui blockchain. \sysname has been live since January 2025 with \texttt{mainnet-v1.42.0}, helping to secure over 5B USD in assets. \Cref{fig:production-latency} shows the direct impact of \sysname: under network degradation attacks, Sui's tail latency improved by 5x after deploying \sysname.

\begin{figure}[t]
    \centering
    \includegraphics[height=4.9cm]{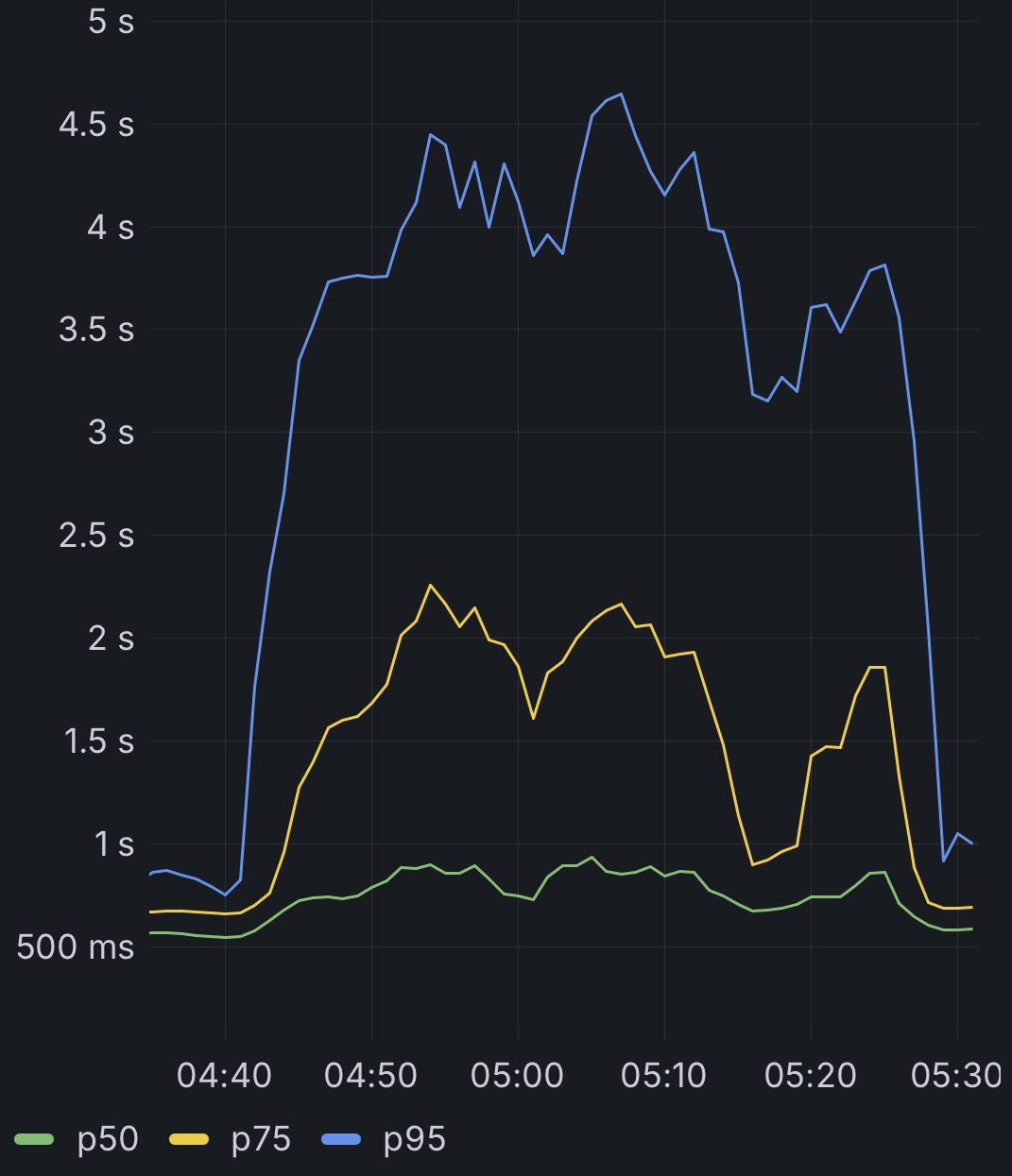}
    \includegraphics[height=4.9cm]{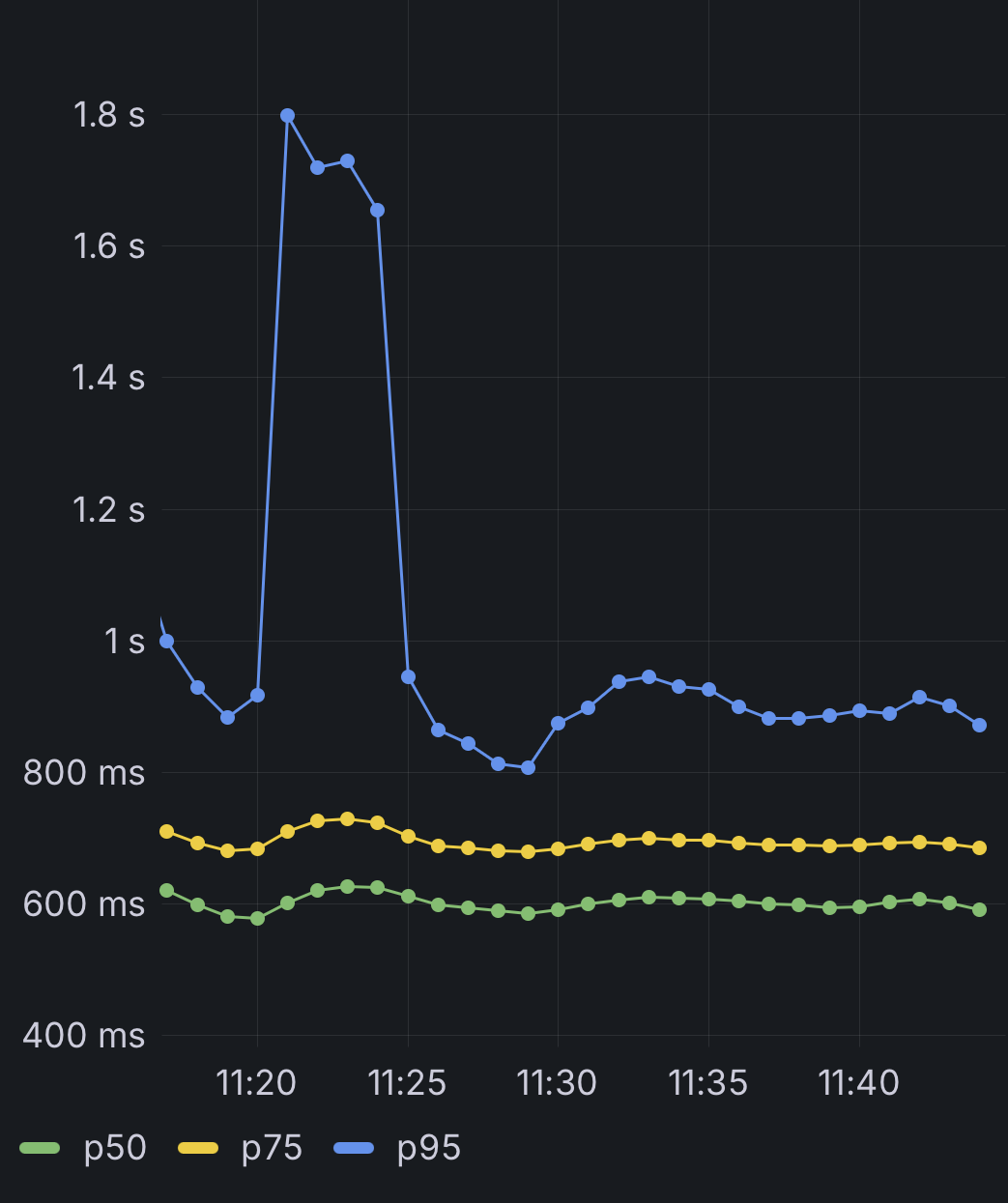}
    \caption{
        Latency of the Sui blockchain under network attacks before (left) and after (right) deploying \sysname.
    }
    \label{fig:production-latency}
\end{figure}


\para{Contributions.} We make the following contributions:
\begin{enumerate}
    \item We formalize the block synchronizer interface and threat model, and derive baseline limitations of existing ad hoc mechanisms.
    \item We design and analyze the \emph{pull induction attack}, which increases latency by up to 50$\times$ and degrades throughput by up to 15$\times$ in existing systems.
    \item We propose \sysname, a structured synchronizer with scarcity-aware detection and diversity-maximizing pull scheduling that preserves optimistic latency while bounding amplification.
    \item We provide a formalisation of \sysname's main properties in Lean prover~\cite{Moura021} with machine-checked proofs of all non-probabilistic theorems in this paper.
    \item We implement and integrate \sysname into Mysticeti/Sui, demonstrating no optimistic-path regression and significant under-attack throughput (3$\times$)  and latency (25$\times$) improvements.
\end{enumerate}
\section{Problem Definition}\label{sec-problem-def}
\para{Network model:}
We assume a set $\mathcal{V}$ of $n=3f+1$ validators (or parties; both are used interchangeably throughout this work) $\{v_0,\dots, v_{n-1}\}$ and a static adversary $\mathcal{A}$ that can corrupt up to $f$ of the parties arbitrarily, at any point.
A party is \emph{crashed} if it halts prematurely at some point during execution. Parties that deviate arbitrarily from the protocol are called \emph{Byzantine} or bad.
Parties that are never crashed or Byzantine are called \emph{honest}.
Parties are communicating over a partially synchronous network~\cite{JACM:DwoLynSto}, in which there exists a special event called Global Stabilization Time ($GST$) and a known finite time bound $\Delta$ (we use $\delta$ to represent the actual network latency), such that any message sent by a party at time $x$ is guaranteed to arrive by time $\Delta+ \max\{GST, x\}$.

\para{Threat model:}
The adversary is computationally bounded. Pairwise points of communication between any two honest parties are considered \emph{reliable}, i.e., any honest message is \emph{eventually} (after a finite, bounded number of steps) delivered.
However, until $GST$ the adversary controls the delivery of all messages in the network, with the only limitation that the messages must be eventually delivered. After $GST$, the network becomes synchronous, and messages are guaranteed to be delivered within $\Delta$ time after the time they are sent, potentially in an adversarially chosen order.

\subsection{The Block Synchronizer Problem} \label{sec-block-sync-def}
We formalize the block synchronization problem here. We call it \emph{block synchronizer}. As preliminaries, a block $B$ contains the following basic data fields: round number $r$, block digest $d$, block creator $author$, connected parent blocks (in digest) $parents$, a list of transactions $payload$, and the creator's $signature$ on $B$.
\begin{definition}[Block synchronizer] \label{def-dag-sync}
    In the block synchronizer problem, a group of $n$ validators $\mathcal{V}$ (of which up to $f$ are Byzantine) collectively builds a structured, non-empty, and ever-growing set of blocks that are indexed by a monotonically increasing round number. Each validator $v_i$ can call $\dagbc_i(B, r)$ to push its block $B$ in round $r$ (where $B.r{=}r$) to the system. Each validator $v_i$ outputs $\dagpoa_i(B.d)$ to accept $B$. The block $B$ must include $p$ blocks (from distinct validators) in round $B.r{-}1$ as $B.parents$ that have been output via $\dagpoa_i$ by $v_i$, where $p$ is a parameter specifying how blocks are structured. Each validator $v_i$ then outputs $\dagdeli_i(B)$ to store $B$. The protocol should satisfy the following properties:
\end{definition}


\begin{itemize}


    \item \emph{\textbf{Round-Progression:}}
          In each global round $r$ of the system, at least $2f{+}1$ validators (not necessarily honest) invoke $\dagbc$ to create and disseminate their blocks. In other words, for each given $r {\geq} 0$, at least $2f{+}1$ produce blocks that have $r$ as their round.

    \item \emph{\textbf{Round-Termination:}}
          For each global round $r$ of the system, every honest validator eventually outputs $\dagpoa$ for the blocks produced in this round by least $2f{+}1$ validators. That is, for each given $r \geq 0$, each honest validator accepts block proposals, whose assigned round is $r$, from at least $2f{+}1$ different validators.
    \item \textbf{\emph{Block availability}}:
          For some block $B$ produced in round $r$, if an honest validator $v_i$ outputs $\dagpoa_i(B.d)$, then $v_i$ eventually outputs $\dagdeli_i(B)$.
    \item \emph{\textbf{Causal availability}}:
          For some block $B$ produced in round $r$, if an honest validator $v_i$ outputs $\dagpoa_i(B.d)$, then for every block $B' \in causal(B)$, $v_i$ eventually outputs $\dagpoa_i(B'.d)$, where $causal(B)$ represents $B$'s causal history (i.e., all blocks for which there is a connection or path from $B$ to them).
\end{itemize}

\begin{table*}[!tbp]
    \centering
    \caption{Comparison of synchronizer protocols and their integrations to BFT consensus, after GST and the leader is honest}
    \footnotesize
    \label{tab:comparison}
    \begin{threeparttable}
        \begin{tabular}{@{}lccccccccc@{}}
            \toprule
            \multirow{2}{*}[-5pt]{\makecell{Consensus                                                                                                                            \\ Protocols}} & \multirow{2}{*}[-5pt]{\makecell{Synchronizer \\ Protocols}} & \multicolumn{3}{c}{Optimistic Case}                                                                   & \multicolumn{3}{c}{Adverse Case\tnote{(2)}} & \multirow{2}{*}[-5pt]{\makecell{Optimal \\ Push?}} & \multirow{2}{*}[-5pt]{\makecell{Bounded \\ Amplification?}}                                                                                           \\
            \cmidrule(lr){3-5} \cmidrule(lr){6-8}
                                                                                                                                                                                                &                                                             & \makecell{Push                                                                            \\ Latency} & \makecell{Consensus \\ Latency\tnote{(1)}}  & \makecell{Pull \\ Complexity}                      & \makecell{Push \\ Latency}                                  & \makecell{Consensus \\ Latency} & \makecell{Pull \\ Complexity}                         \\
            \midrule
            \multirow{2}{*}[0pt]{Star~\cite{duan2024dashing}}                                                                                                                                   & \syncmt                                                     & 2$\delta$                                                                                             & 5$\delta$                                   & $O(n)$                                             & 2$\delta$                                                   & 5$\delta$                       & $O(n)$                        & \ding{55} & \ding{51} \\
            \cmidrule(lr){2-10}
                                                                                                                                                                                                & \sysname                                                    & $\delta$                                                                                              & 5$\delta$                                   & $O(1)$                                             & $2\kappa\delta$\tnote{(3)}                                  & $4\kappa\delta{+}3\delta$       & $O(n)$                        & \ding{51} & \ding{51} \\
            \midrule
            \multirow{2}{*}[0pt]{Autobahn~\cite{giridharan2024autobahn}}                                                                                                                        & \syncmt                                                     & 2$\delta$                                                                                             & 7$\delta$                                   & $O(n)$                                             & 2$\delta$                                                   & 7$\delta$                       & $O(n)$                        & \ding{55} & \ding{51} \\
            \cmidrule(lr){2-10}
                                                                                                                                                                                                & \sysname                                                    & $\delta$                                                                                              & 7$\delta$                                   & $O(1)$                                             & $2\kappa\delta$                                             & $4\kappa\delta{+}5\delta$       & $O(n)$                        & \ding{51} & \ding{51} \\
            \midrule
            \multirow{2}*{Bullshark~\cite{spiegelman2022bullshark}}                                                                                                                             & \synccert                                                   & 3$\delta$                                                                                             & 6$\delta$                                   & $O(n)$                                             & 5$\delta$                                                   & 10$\delta$                      & $O(n)$                        & \ding{55} & \ding{51} \\
            \cmidrule(lr){2-10}
                                                                                                                                                                                                & \sysname                                                    & $\delta$                                                                                              & 4$\delta$                                   & $O(1)$                                             & $2\kappa\delta$                                             & $8\kappa\delta$                 & $O(n)$                        & \ding{51} & \ding{51} \\
            \midrule
            \multirow{2}*{Shoal++~\cite{arun2024shoal++}}                                                                                                                                       & \synccert                                                   & 3$\delta$                                                                                             & 4$\delta$                                   & $O(n)$                                             & 5$\delta$                                                   & 8$\delta$                       & $O(n)$                        & \ding{55} & \ding{51} \\
            \cmidrule(lr){2-10}
                                                                                                                                                                                                & \sysname                                                    & $\delta$                                                                                              & 3$\delta$                                   & $O(1)$                                             & $2\kappa\delta$                                             & $6\kappa\delta$                 & $O(n)$                        & \ding{51} & \ding{51} \\
            \midrule
            \multirow{3}*{Sailfish~\cite{shrestha2024sailfish}}                                                                                                                                 & \syncrbc~\cite{abraham2021good}                                                   & 2$\delta$                                                                                             & 3$\delta$                                   & None                                               & 2$\delta$                                                   & 3$\delta$                       & None                          & \ding{55} & \ding{55} \\
                            & \syncrbc~\cite{das2021asynchronous}                                                    & 4$\delta$                                                                                             & 5$\delta$                                   & None                                               & 4$\delta$                                                   & 5$\delta$                       & None                          & \ding{55} & \ding{55} \\
            \cmidrule(lr){2-10}
                                                                                                                                                                                                & \sysname                                                    & $\delta$                                                                                              & 3$\delta$                                   & $O(1)$                                             & $2\kappa\delta$                                             & $6\kappa\delta$                 & $O(n)$                        & \ding{51} & \ding{51} \\
            \midrule
            \multirow{2}*{Sailfish++~\cite{shrestha2025optimistic}}                                                                                                                             & \syncrbc                                                    & 2$\delta$                                                                                             & 3$\delta$                                   & None                                               & 3$\delta$                                                   & 4$\delta$                       & None                          & \ding{55} & \ding{55} \\
            \cmidrule(lr){2-10}
                                                                                                                                                                                                & \sysname                                                    & $\delta$                                                                                              & 3$\delta$                                   & $O(1)$                                             & $2\kappa\delta$                                             & $6\kappa\delta$                 & $O(n)$                        & \ding{51} & \ding{51} \\
            \midrule
            \multirow{2}*{Mysticeti~\cite{babel2025mysticeti}}                                                                                                                                  & \syncopt                                                    & $\delta$                                                                                              & 3$\delta$                                   & $O(1)$                                             & $\geq$$3\delta$          & $\geq$$9\delta$                  & $O(1)$                          & \ding{51}                     & \ding{55}             \\
            \cmidrule(lr){2-10}
                                                                                                                                                                                                & \sysname                                                    & $\delta$                                                                                              & 3$\delta$                                   & $O(1)$                                             & $2\kappa\delta$                                             & $6\kappa\delta$                 & $O(n)$                        & \ding{51} & \ding{51} \\
            \bottomrule
        \end{tabular}
        \begin{tablenotes}
            \footnotesize
            \item[{(1)}] Consensus latency specifies how long the protocol requires to reach a consensus on transactions of a leader block.
            \item[{(2)}] The adverse case is when the adversary can \emph{successfully} stall protocol progress via pull-induction attacks (detailed in \autoref{sec-attack}) or by deviating from optimistic termination conditions in~\cite{shrestha2024sailfish}. The latency of \sysname under adverse cases is upper bound in theory and might not be reached in practice without assuming a powerful adversary who can arbitrarily reschedule message delivery. More importantly, when honest reputations dominate, \sysname achieves optimal push latency~\cite[Lemma 3]{kichidis2025beluga}
            \item[{(3)}] $\kappa=1+\frac{f}{R_L}$, where $f$ is the number of malicious validators, and $R_L$ is a system parameter (see \autoref{sec-sysname-push} for more details). By setting $R_L {\gg} f$, \sysname achieves $\kappa{=}1$.
        \end{tablenotes}

    \end{threeparttable}
\end{table*}

{
    After comprehensively reviewing the existing relevant high-throughput consensus protocols, we observe that they have some integrated block synchronization modules, each of which is composed of a push protocol and an optional pull protocol. For the sake of comparison, we define several performance-related metrics: (i) \emph{push latency}, which is the network latency required to complete one round of block production, i.e., the interval between the creation of blocks in two consecutive rounds, (ii) \emph{pull latency}, which is the network latency to fetch a block, and (iii) \emph{pull complexity}, which is the communication complexity per node of fetching a block. Intuitively, push latency measures the protocol's data dissemination speed, and we consider push latency to be \emph{optimal} if it equals $\delta$ throughout this paper, while pull latency and complexity indicate the overhead of fetching blocks.
}

{
    With the defined metrics, we consider that an ideal block synchronizer module/protocol must satisfy the following goals:
    \begin{enumerate}[label=\textbf{(G\arabic*)}]
        \item \textbf{Optimal push:} The push latency is optimal under optimistic cases where the protocol progress is not stalled by the adversary and the network is synchronous.
        \item \textbf{Bounded amplification:} Under adverse cases, block retransmission and pull latency caused by the adversary are bounded when the network is synchronous. That is, the adversary cannot trigger infinite block retransmission and unbounded latency to synchronize blocks that have been received by honest validators.
    \end{enumerate}
}

\subsection{Existing Synchronizer Protocols}\label{sec-basic-sync}
In this section, we explore the block synchroniser protocols in the existing works, where we model each class to the best of our abilities by both studying their research and implementations. The comparison is summarized in \Cref{tab:comparison}.

\subsubsection{Multi-chain Certified Synchronizer Protocol}
Many multi-chain BFT protocols, such as Autobahn~\cite{giridharan2024autobahn} and Star~\cite{duan2024dashing}, are built on the \syncmt synchronizer protocol. 
Validators structure blocks into multiple parallel chains, where each block includes only one block from the same validator in the last round as parents (i.e., $p{=}1$). The \syncmt synchronizer protocol consists of a weak quorum-based push protocol to push blocks and a deterministic pull protocol to fetch missing blocks.

\para{Weak quorum-based push.}
For each round $r$, every validator $v_i$ calls $\dagbc_i(B, r)$ to push a round $r$ block $B$ into the system. Specifically, upon outputting $\dagpoa$ for its own block in round $r{-}1$, $v_i$ moves to round $r$ and create a new block $B$ with $v_i$'s round $r{-}1$ block in $B.parents$. Then, $v_i$ employs a two-step Propose-Vote scheme to disseminate $B$.
In the first step, $v_i$ broadcasts $B$. In the second step, other validators respond to $v_i$ with signatures on $B$ and cache $B$. 
After that, $v_i$ uses $f{+}1$ signatures to construct a \emph{weak} certificate $WC(B.d)$ for $B$ and outputs $\dagpoa_i(B.d)$ and $\dagdeli_i(B)$. $WC(B.d)$ will be piggybacked into $v_i$'s round $r{+}1$ block $B'$. The other validator $v_j$ receiving $B'$ outputs $\dagpoa_j(B.d)$ and outputs $\dagdeli_j(B)$ if receiving $B$.


\para{Deterministic pull.}
Validators in the \syncmt synchronizer protocol perform the pull protocol independently from the push protocol. Specifically, when a validator $v_i$ needs to pull a missing block $B$, $v_i$ deterministically chooses a set of $f{+}1$ validators $\mathcal{V}_B$ in $WC(B.d)$ (i.e., validators who sign $B$) and sends a request message to $\mathcal{V}_B$. Since validators in $\mathcal{V}_B$ cache $B$, at least one honest validator can serve as the provider of $B$.
Therefore, $v_i$ must be able to receive $B$. $v_i$ outputs $\dagdeli_i(B)$ once $\dagpoa_i(B.d)$ succeeds and it receives $B$.

The \syncmt synchronizer protocol achieves bounded amplification (G2), but not optimal push latency (G1). Specifically, it achieves a bounded pull latency of $2\delta$ with the deterministic pull protocol and a bounded data retransmission, since each validator stops retransmitting blocks after receiving $f{+}1$ acknowledgments. The pull complexity is $O(n)$. However, the Propose-Vote scheme requires a push latency of $2\delta$, which is not optimal.


\subsubsection{DAG-based Certified Synchronizer Protocol}\label{sec-basic-sync-cert}
The certified DAG protocols, such as Bullshark~\cite{spiegelman2022bullshark} and Shoal++~\cite{arun2024shoal++}, are built on a \synccert synchronizer protocol. 
Validators organize blocks into a directed acyclic graph (DAG) using the quorum-based broadcast primitive. It consists of a consistent broadcast (CBC)-based push protocol to push certified blocks and a deterministic pull protocol to fetch missing blocks.

\para{CBC-based push.}
For each round $r$, every validator $v_i$ pushes a round $r$ block $B$ into the system by calling $\dagbc_i(B, r)$.
Specifically, upon outputting $\dagpoa$ for blocks in round $r{-}1$ from at least $2f{+}1$ validators, $v_i$ moves to round $r$ and creates a new block $B$ with \emph{all} these round $r{-}1$ blocks as $B.parents$ (thus, $p{=}2f{+}1$).
Then, $v_i$ employs a three-step certificate scheme to disseminate $B$. In the first step, $v_i$ broadcasts $B$. In the second step, other validators respond to $v_i$ with signatures on $B$ and cache $B$. 
Then, $v_i$ uses these $2f{+}1$ signatures to construct a certificate $C(B.d)$. In the third step, $v_i$ broadcasts $C(B.d)$. Moreover, when receiving $C(B'.d)$, $v_i$ outputs $\dagpoa_i(B'.d)$ and output $\dagdeli_i(B')$ if receving $B'$.

\para{Deterministic pull.}
The \synccert synchronizer protocol employs a deterministic pull protocol as well.
$v_i$ requests a missing block from the set of validators $\mathcal{V}_B$ in $C(B.d)$. $v_i$ outputs $\dagdeli_i(B)$ after $\dagpoa_i(B.d)$ succeeds and receiving $B$.

The \synccert protocol ensures bounded amplification (G2), but not optimal push latency (G1). Specifically, it achieves a pull latency of $2\delta$ and a bounded data retransmission like the \syncmt protocol. The pull complexity is $O(n)$.
However, the CBC-based push protocol requires a push latency of $3\delta$. 

\subsubsection{\syncrbc Synchronizer Protocol} \label{sec-basic-sync-rbc}
Many recent DAG-based BFT consensus protocols, such as Sailfish~\cite{shrestha2024sailfish} and its variant Sailfish++~\cite{shrestha2025optimistic}, are built on a \syncrbc synchronizer protocol. 
Validators organize blocks into a DAG using the reliable broadcast (RBC) primitive. The RBC ensures that if an honest validator receives a block, all honest validators will eventually receive it (i.e., ensuring the totality property). Therefore, the pull protocol is \emph{theoretically} not needed in the \syncrbc synchronizer protocol, and it only contains an RBC-based push protocol.

\para{RBC-based push.}
For each round $r$, every validator $v_i$ pushes a round $r$ block $B$ into the system by calling $\dagbc_i(B, r)$.
Specifically, upon outputting $\dagpoa$ for blocks in round $r{-}1$ from at least $2f{+}1$ validators, a validator $v_i$ moves to round $r$ and creates a new block $B$ referencing \emph{all} these round $r{-}1$ blocks as $B.parents$. $v_i$ disseminates $B$ using an RBC protocol~\cite{shrestha2025optimistic, bracha1987asynchronous,abraham2021good,das2021asynchronous}. If $v_i$ is honest, according to the RBC's Validity property, every other honest validator $v_j$ will eventually deliver $B$ (i.e., $v_j$ will receive $B$).
In addition, when $v_i$ delivers $B'$ that is reliably broadcast by another validator $v_j$, $v_i$ checks if it can accept $B'$ by checking if it has output $\dagpoa_i(B''.d)$ for every block $B'' \in casual(B')$. If yes, $v_i$ outputs $\dagpoa_i(B'.d)$ and $\dagdeli_i(B')$ right after. If no, $v_i$ will put $B'$ into a pending list and update the list whenever it outputs $\dagpoa$ for a new block.

The \syncrbc synchronizer protocol fails to guarantee both optimal push latency (G1) and bounded amplification (G2). Specifically, implementing an upfront reliable broadcast requires honest validators to continually transmit messages to an unresponsive adversary, leading to unbounded retransmissions. In addition, the push latency depends on the RBC protocol used, but none of them are optimal (i.e., $\delta$). For instance, Sailfish~\cite{shrestha2024sailfish} employs the RBC protocol from~\cite{das2021asynchronous}, leading to a push latency of $4\delta$. Sailfish++~\cite{shrestha2025optimistic} adopts picky RBC protocol to achieve a push latency of $2\delta$ (still not optimal) in optimistic cases (where at least $\lceil\frac{n+2f-2}{2}\rceil$ validators behave honestly). Sailfish++'s RBC protocol has a round latency of $3\delta$ when optimistic cases do not hold.

\subsubsection{DAG-based Uncertified Synchronizer Protocol}\label{sec-basic-sync-opt}
Mysticeti~\cite{babel2025mysticeti} consensus protocol uses a \syncopt synchronizer protocol.
Validators structure blocks into a DAG.
The \syncopt synchronizer protocol consists of a best-effort broadcast (BEB)-based push protocol to push blocks and a random pull protocol to fetch missing blocks.

\para{BEB-based push.}
For each round $r$, every validator $v_i$ pushes a round $r$ block $B$ into the system by calling $\dagbc_i(B, r)$.
When receiving a block $B'$ from other validators, $v_i$ checks if it has output $\dagpoa_i(B''.d)$ for every block $B'' \in casual(B')$. If yes, $v_i$ outputs $\dagpoa_i(B'.d)$ and $\dagdeli_i(B')$ right after. If no, $v_i$ uses the pull protocol to get \emph{all} missing blocks. In essence, $v_i$ must synchronize the whole causal history of $B'$ before outputting $\dagpoa$.
%
Upon outputting $\dagpoa$ for blocks in round $r{-}1$ from at least $2f{+}1$ validators, $v_i$ moves to round $r$ and creates a new block $B$ with \emph{all} these round $r{-}1$ blocks as $B.parents$. $v_i$ broadcasts $B$ in a best-effort way.

\para{Random pull.}
To pull a missing block $B$ in the \syncopt synchronizer protocol, the validator $v_i$ randomly chooses a constant set of validators $\mathcal{V}_B \subseteq \mathcal{V}$ and sends a request message to $\mathcal{V}_B$.
$v_i$ repeatedly sends the request message to different sets of validators at intervals until receiving $B$.
However, since $B$ is not certified and $v_i$ requests $B$ from randomly selected validators, there is no guarantee that $v_i$ will ever receive $B$.
After getting $B$, $v_i$ repeats the pull protocol to get all missing blocks of $causal(B)$. During the pull process, $v_i$ outputs $\dagpoa_i(B.d)$ and $\dagdeli_i(B)$ consecutively once it has synchronized $B$ and all blocks in $causal(B)$.

The \syncopt protocol achieves optimal push latency (G1), but not bounded amplification (G2).
Specifically, it achieves the round latency of $\delta$ with best-effort broadcast under optimistic cases. However, due to its random pull, it introduces uncertain pull latency and unbounded pull requests, thereby potentially unbounded push latency for every future round, despite having $O(1)$ pull complexity.

\section{Pull Induction Attacks and Key Insights}\label{sec-attack-insights}
Despite the rich design space proposed by prior work, none of them have taken a principled approach and achieved all ideal goals. This allows us to exploit their vulnerabilities through a new class of attacks we call pull induction attacks, which deliberately trigger unnecessary pulls to degrade their performance.
This section sheds light on the pain points of existing protocols and provides several key insights that guide the design of \sysname.

\subsection{Pull Induction Attacks}\label{sec-attack}
The goal of pull induction attacks is to induce honest validators to pull blocks from others, thereby increasing the push latency. To this end, the adversary selectively disseminates its blocks to only a subset of honest validators. Consequently, validators that do not receive the adversary's blocks are compelled to pull the missing blocks that are included by the protocol. \Cref{tab:comparison} (adverse case column) presents a performance comparison of different block synchronizer protocols under pull induction attacks.

We give a concrete example of a pull induction attack against the \syncopt synchronizer protocol adopted by Mysticeti and leave the discussion of pull induction attacks against other synchronizer protocols in the full paper~\cite[Appendix H]{kichidis2025beluga}.
In \Cref{fig:pull-induction-attack}, there are four validators $v_1$, $v_2$, $v_3$, and $v_4$, of which $v_4$ is the adversary. In round $r{-}1$, $v_4$ only disseminates its round $r{-}1$ block $B_4^{r{-}1}$ to $v_1$, making $v_1$'s round $r$ block $B_1^r$ reference $B_4^{r{-}1}$. When pushing round $r$ blocks and receiving $B_1^r$, both $v_2$ and $v_3$ miss $B_4^{r{-}1}$, and thus, they invoke the random pull protocol to fetch $B_4^{r{-}1}$ before accepting $B_1^r$ and having enough (i.e., 3 with $n{=}4$ and $f{=}1$) accepted round $r$ blocks to propose their round $r{+}1$ blocks. The latency of round $r$ thus consists of $\delta$ for pushing round $r$ blocks and at least $2\delta$ for pulling $B_4^{r{-}1}$. Similarly, in round $r$, $v_4$ only disseminates its round $r$ block $B_4^r$ to $v_2$, making $v_2$'s round $r{+}1$ block $B_2^{r{+}1}$ reference $B_4^r$. This will induce both $v_1$ and $v_3$ to pull $B_4^r$ before accepting $B_2^{r{+}1}$, thereby increasing the latency of round $r{+}1$ to at least $3\delta$. In round $r{+}1$, $v_4$ only disseminates its round $r{+}1$ block $B_4^{r{+}1}$ to $v_3$, which will induce both $v_1$ and $v_2$ to pull $B_4^{r{+}1}$ before accepting $B_3^{r{+}2}$, thereby increasing the latency of round $r{+}2$ to at least $3\delta$.
Recall that the consensus of Mysticeti requires three rounds of blocks~\cite[Algorithm 3]{babel2025mysticeti}. As a result, the consensus latency of Mysticeti is at least $9\delta$ under this pull induction attack.

\begin{figure}[t]
    \centering
    \includegraphics[width=0.8\linewidth]{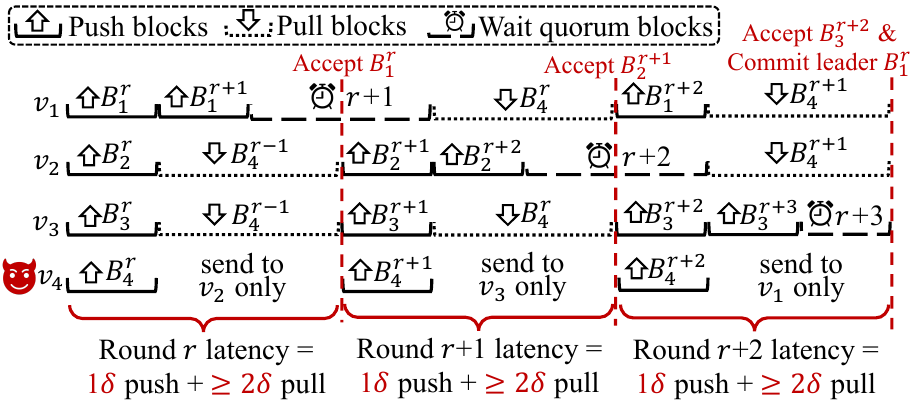}
    \caption{The pull induction attack: by selectively sharing its blocks (e.g.\ $v_4$ sends $B_4^{r{-}1}$ only to $v_1$, so $B_1^r$ references it), the adversary forces honest validators to trigger pulls to fetch the referenced parents, adding at least one pull round-trip of latency per round.}
    \label{fig:pull-induction-attack}
\end{figure}

\subsection{Key Insights}\label{sec-insights}
After reviewing existing block synchronizer protocols and their behavior under pull induction attacks, we have the following insights.

\para{Key insight 1: DAG-based synchronizer protocols can accelerate the consensus latency, but at risk of being delayed by the adversary.} In DAG-based synchronizer protocols, each block references at least $2f{+}1$ parents, and the connections between blocks can serve as proposal votes. This allows validators to complete the consensus during the formation of the DAG without extra communication by interpreting the DAG structure locally. As a result, the consensus latency of the protocols built on DAG-based synchronizer is soundly low under happy cases (e.g., $3\delta$ in Mysticeti). In contrast, the consensus protocols built on \syncmt must rely on a dependent consensus process to order blocks, which introduces extra communication overhead. For instance, apart from $2\delta$ push latency for pushing a round of blocks, Autobahn requires an additional $5\delta$ network latency to reach a consensus on a leader block, leading to the consensus latency of $7\delta$.

However, forcing a block to reference $2f{+}1$ parents enables the adversary to delay the progress of the synchronizer protocol with the pull induction attacks. Specifically, when an honest validator has its block reference some adversary blocks that are not shared with other honest validators, these honest validators must pull the missing adversary blocks before accepting the honest validator's block and collecting enough $2f{+}1$ accepted blocks to move to the next round. This increases the push latency by at least one pull round-trip. As a result, both push latency and consensus latency increase under the pull induction attacks.


\para{Key insight 2: Block certificates allow performing the push and pull protocols separately.} In the \syncmt synchronizer protocol, validators push each block accompanied by a certificate containing a quorum of $\geq f{+}1$ signatures, thereby attesting the block and causal availability. In this case, validators can output $\dagpoa$ for each certified block they receive without pulling any missing blocks in the block's causal history while still ensuring the Causal availability property. This enables the pull protocol to be performed in separation from the push protocol. We call this feature pulling blocks \emph{off the push path}, as pulling blocks can be completed independently from the push protocol. In contrast, blocks are pulled \emph{on the push path} if pulling missing blocks is necessary before validators accept them.

Supporting pulling blocks off the push path prevents Byzantine validators from proactively delaying the progress of the synchronizer protocol via pull induction attacks, since pulling missing blocks does not prevent validators from creating new blocks. However, the \syncmt synchronizer adopts an explicit certificate mechanism, where certificates are created within at least one round-trip latency, leading to higher push and consensus latencies compared to the \syncopt synchronizer protocol.



\para{Key insight 3: The pull protocol introduces a trade-off between push latency and communication complexity.} In certified synchronizer protocols (e.g., \synccert and \syncmt), the pull is deterministic, where validators pull missing blocks from a specific set of validators (with the set size $O(n)$). This ensures validators can fetch the missing blocks within a constant round trip (i.e., $2\delta$ with one for sending requests and the other for receiving blocks). However, this also introduces high communication complexity per validator (i.e., $O(n)$) since each validator might receive redundant blocks from others.

In contrast, the \syncopt synchronizer adopts a random pull protocol, where validators randomly pull the missing blocks from a small set of validators (with the set size $O(1)$). This reduces the communication complexity per validator to $O(1)$. However, such a random pull cannot ensure a constant round trip for synchronizing missing blocks. This is not a problem when the protocol is under happy cases, as pulling blocks does not impede progress. However, under adverse cases, the \syncopt synchronizer protocol requires at least $2\delta$ in the pull protocol, leading to at least $3\delta$ push latency and at least $9\delta$ consensus latency.

\section{The \sysname Protocol}\label{sec-sysname}
\subsection{Overview} \label{sec-sysname-overview}
\sysname is an efficient and robust DAG-based block synchronizer protocol composed of two key components: an AC-based optimistic push protocol (\autoref{sec-sysname-push}) and a hybrid pull protocol (\autoref{sec-sysname-pull}) based on the novel idea of Implicit Proof-of-Availability (ImPoA).

\para{AC-based optimistic push.}
Motivated by Insight 1, \sysname adopts a DAG structure and employs an optimistic push protocol, like \syncopt synchronizer, to achieve (G1) optimal push latency under happy cases and (G2) bounded retransmission. Specifically, validators disseminate blocks using a best-effort broadcast (that results in $\delta$ push latency) and stop retransmission once advancing to a new round.
However, unlike the DAG-based synchronizer protocols, where validators arbitrarily reference parent blocks, \sysname introduces an \emph{admission control (AC)} mechanism to filter out blocks based on the creators' behaviors. With AC, honest validators avoid referencing parent blocks that are
created by suspected Byzantine validators to trigger the pull protocol.
This mechanism effectively safeguards \sysname against pull induction attacks.

\para{ImPoA-based hybrid pull.}
Motivated by Insights 2 and 3, \sysname's pull protocol aims to separate pulling from pushing and balance pull complexity and latency. To this end, \sysname introduces an Implicit Proof-of-Availability (ImPoA) mechanism, which enables validators to identify blocks that can be safely accepted even if some of their ancestors are temporarily unavailable, thereby enabling validators to pull blocks off the push path. Based on ImPoA, we categorize pulling blocks into two types: \emph{live blocks} and \emph{bulk blocks}. Live blocks include missing ancestors that are not proven available, while bulk blocks only include missing ancestors that are proven available. Leveraging this distinction, \sysname employs a hybrid pull strategy: live blocks are pulled deterministically to minimize the latency, while bulk blocks are pulled in a randomized rotating way to reduce pull complexity. This allows \sysname to achieve (G2) bounded pull latency.

\para{\sysname Workflow.}
In \sysname, each validator $v_i$ pushes a block $B$ to the system by invoking $\dagbc_i(B,r)$ every round $r$, which consists of parent selection and broadcast.
Specifically, $v_i$ becomes ready to push a block for round $r$ after outputting $\dagpoa$ for blocks in round $r{-}1$ from at least $2f{+}1$ distinct validators (including itself). $v_i$ outputs $\dagpoa_i(B'.d)$ for a received block $B'$ when, for each parent, it either has received it and all its ancestors, or there exists an implicit certificate in $v_i$'s local view (i.e., the parent is referenced by $\geq f{+}1$ blocks; see Sec.~\ref{sec-sysname-pull-impl-poa}). After that, $v_i$ selects as parents a set of blocks that have been output via $\dagpoa$, subject to the admission control rules, and constructs a new block $B$ for round $r$. $B$ is then disseminated using a best-effort broadcast. Concurrently, $v_i$ executes the ImPoA-based hybrid pull protocol to fetch missing blocks. $v_i$ outputs $\dagdeli_i(B)$ to add $B$ into the DAG once it has received $B$ and output $\dagpoa_i(B.d)$.

\para{Block structure.}
To capture validators' behaviors during the block pushing process and facilitate the pull process, \sysname augments the block structure with three additional fields.
\begin{itemize}
    \item \emph{Weak links}: The $weaklinks$ field references blocks that a validator has received and accepted but not selected as parents. We therefore called $parents$ as strong links, both are used interchangeably throughout the paper.
    \item \emph{Watermark}: The $watermark$ is an $n$-element array maintained by each validator, where the $i$-th entry implies the highest round number of block received from validator $v_i$.
    \item \emph{Ancestors}: The $ancestors$ is an $n$-element array storing, for each validator $v_i$, the highest round number of $v_i$'s blocks reachable from the current block.
\end{itemize}
We note that the $weaklinks$ field has also been adopted by prior DAG-based synchronizer protocols, and thus, \sysname’s additional metadata mainly comes from the two $n$-element arrays, which add $16n$ bytes per block assuming 8-byte round numbers. 
This overhead is small: with $n=50$, the arrays require only 0.8 KB. In comparison, under our experiment (\autoref{sec-experiment}) of 300K TPS, 200 ms push latency, and 512-byte transactions, each block contains about 1,200 transactions, or roughly 0.6 MB of payload. Therefore, the metadata overhead is negligible compared with the block payload.


\subsection{AC-based Optimistic Push Protocol} \label{sec-sysname-push}
\sysname's push protocol specifies how validators create blocks and disseminate their created blocks to others. Block dissemination in \sysname is optimistic and relies on a best-effort broadcast; that is, validators disseminate blocks to all others without waiting for acknowledgments. Block creation is governed by an \emph{Admission Control} (AC) module, which enforces rules that filter blocks according to the creators' behaviors as quantified by a \emph{reputation} mechanism.
We detail the reputation mechanism and the AC module below, and defer the pseudo-code to the full paper~\cite[Appendix E]{kichidis2025beluga}.

\para{Reputation Mechanism.}
Each validator $v_i$ maintains a local reputation table $TR_i$ that records the reputations of all validators and is initialized with 0s. The reputation entry $TR_i[j]$ reflects the contribution of validator $v_j$ to the block pushing process. Specifically,
\begin{itemize}
    \item \emph{Reputation Increase:}
          $v_i$ increases $v_j$'s reputation if it receives $2f{+}1$ blocks (denoted by $\mathcal{B}_r$) in round $r$ that collectively indicate $v_j$'s round $r{-}1$ block has been received. Formally, if for each block $B\in\mathcal{B}_r$, $B.watermark[j] = r{-}1$, then $TR_i[j]$ is incremented.

    \item \emph{Reputation Decrease:}
          $v_i$ decreases $v_j$'s reputation whenever (i) it invokes the pull protocol to fetch a missing block created by $v_j$, or (ii) it receives pull requests for a $v_j$'s block from at least $f{+}1$ distinct validators. A pull request serves as a \emph{report} of $v_j$'s delayed dissemination behavior. Collecting $f{+}1$ such reports constitutes a \emph{blame} against $v_j$, indicating that at least one honest validator failed to receive $v_j$’s block in a timely manner. Thus, $v_i$ will decrease $v_j$'s reputation with a blame.
\end{itemize}
\emph{Intuition behind the reputation mechanism.}
The reputation mechanism is designed to capture validators' behaviors during the block pushing process after GST. Specifically, if a validator $v_i$ consistently pushes its blocks to all other validators, honest validators will frequently observe $2f{+}1$ blocks, indicating that $v_i$'s latest block is disseminated and received timely, thereby increasing $v_i$'s reputation. In contrast, if $v_i$ selectively pushes its blocks to only a subset of validators—performing a pull induction attack that forces others to invoke the pull protocol to retrieve its blocks—its reputation will decrease. Consequently, after GST, honest validators naturally maintain high reputations, whereas malicious validators that frequently launch such attacks accumulate low reputations.

In \sysname, the reputation increase is set by 1, while the reputation decrease is set by a large value $R_L$ (e.g., $R_L{=}10{,}000$). This asymmetry minimizes the number of rounds that the adversary can delay and eventually enables \sysname to achieve decent push and consensus latencies. A detailed analysis is presented in~\cite[Appendix C]{kichidis2025beluga}.

\para{Admission Control.}
The AC module determines which blocks are selected as parents for newly created blocks based on their creators' reputations.
Upon advancing to round $r$ and creating a new block, $v_i$ first collects the latest blocks it has received from all validators with round numbers $\le r{-}1$, denoted by the set $\mathcal{B}^{r{-}1}$. It then selects parent blocks from $\mathcal{B}^{r{-}1}$ through the following steps: (i) filter out any block in $\mathcal{B}^{r{-}1}$ that is not from round $r{-}1$ or is deemed unacceptable (i.e., $v_i$ has not yet output $\dagpoa$ for these blocks); (ii) from the remaining set, select the top $2f{+}1$ blocks in based on their creators' reputations in $TR_i$. A block is considered \emph{acceptable} if $v_i$ has received all of its ancestors or can otherwise ensure its ancestors are available (see Sec.~\ref{sec-sysname-pull-impl-poa} for more details). This AC mechanism ensures that honest validators avoid referencing blocks created by suspected Byzantine validators---specifically, those with low reputations due to previously triggering pull requests. By excluding such malicious blocks, an honest validator can create new blocks whose ancestors have already been received and accepted by all other honest validators, without invoking the pull protocol during the push process. Consequently, \sysname is inherently protected against pull induction attacks.

Apart from parents, a validator also references other acceptable blocks in $\mathcal{B}^{r{-}1}$ as $weaklinks$ in the new block. The $weaklinks$ serve as evidence of block availability to facilitate the pull process. We discuss it in \autoref{sec-sysname-pull}.

\para{Discussion.}
{
    The AC-based optimistic push protocol faces a practical challenge when validators advance rounds.
    On the one hand, due to the asymmetric connectivity, high-reputation blocks may not be received first by honest validators. If validators immediately reference the first received (potentially low-reputation) blocks as parents, the protocol may violate its intended guarantees.
    On the other hand, during asynchrony, reputation updates can be temporarily inaccurate, causing honest validators to appear lower-ranked than malicious ones. If validators wait for high-reputation blocks, malicious validators can delay progress arbitrarily or even stall the protocol by withholding their blocks.
}

{
    \sysname addresses this challenge by introducing a reputation threshold $R_t$, a per-round timeout $T_{rd}$, and and periodic reputation resets. Specifically, a validator $v_i$ advances to round $r$ with at least $2f{+}1$ round $r{-}1$ blocks that have been output via $\dagpoa_i$ and if one of the following conditions becomes satisfied:
    \begin{itemize}
        \item $v_i$ receives $2f{+}1$ blocks from round $r{-}1$ whose creators have reputations above a threshold $R_t{=}R_{2f+1}{-}R_L$, where $R_{2f+1}$ is the $2f{+}1$-th highest reputation in $TR_i$.
        \item $v_i$ is in round $r{-}1$, and the per-round timeout $T_{rd}$ expires, which is set to $5\Delta$ to ensure all honest blocks are received (\Cref{lem-round-liveness}).
        \item $v_i$ is in round ${<}r$ and observes some blocks of round ${>}r$.
    \end{itemize} 
    In brief, after GST and a reputation reset, these allow $v_i$ to advance to new rounds using either honest blocks or blocks that do not hinder protocol progress.
}

\subsection{ImPoA-based Hybrid Pull Protocol}\label{sec-sysname-pull}
\sysname's pull protocol defines how validators fetch missing blocks when they cannot be accepted during the push process. \sysname adopts an implicit proof-of-availability (ImPoA)-based hybrid pull protocol. As illustrated in \Cref{fig:impoa}, it comprises two components: an ImPoA-based pull mechanism and a hybrid pull strategy.

\begin{figure*}
    \centering
    \includegraphics[width=0.9\linewidth]{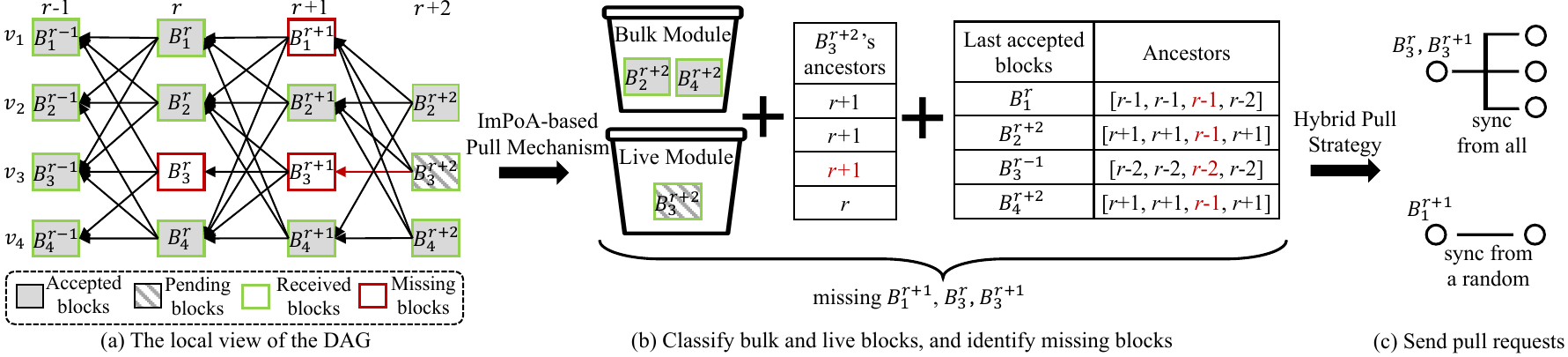}
    \caption{The ImPoA-based hybrid pull protocol for $v_4$: (a) $v_4$ accepts $B_2^{r{+}2}$ and $B_4^{r{+}2}$ whose parent $B_1^{r{+}1}$ is only implicitly available (both reference it); (b) it advertises the blocks referencing missing ones to the bulk and live synchronizer modules, identifying the blocks to fetch; (c) it fetches them via a hybrid pull that balances latency and complexity.}
    \label{fig:impoa}
\end{figure*}

\subsubsection{ImPoA-based Pull Mechanism}\label{sec-sysname-pull-impl-poa}
The ImPoA-based pull mechanism enables validators to pull certain blocks off the push path once their availability can be proven. However, unlike the existing certified synchronizer protocols (such as \syncmt) that create \emph{explicit} block certificates at the cost of additional communication overhead during the push process, \sysname constructs \emph{implicit} block certificates to achieve the same availability guarantees by locally interpreting block patterns.

\para{Implicit PoA.}
In \sysname, a block $B$ is identified as \emph{implicitly available} if it is referenced (via strong or weak links) by at least $f{+}1$ blocks from the subsequent rounds. Note that a validator references $B$ only if it (i) receives $B$, and (ii) can verify the availability of $B$'s causal history. Consequently, these $f{+}1$ referencing blocks collectively form an implicit proof-of-availability (PoA) for $B$, implying that at least one honest validator receives $B$ and for every ancestor of $B$, at least one honest validator has received it. 
For instance, in \Cref{fig:impoa}(a), the missing block $B_1^{r{+}1}$ is identified implicitly available as it was referenced by two received blocks $B_2^{r{+}2}$ and $B_4^{r{+}2}$.

\para{Live and Bulk modules.}
With ImPoA, validators can safely accept blocks even when parts of their causal histories are missing, thereby allowing missing blocks to be pulled off the push path. To accommodate the mechanism, \sysname employs two pull synchronizer modules: (i) \emph{bulk synchronizer module}, which retrieves missing blocks off the push path, and (ii) \emph{live synchronizer module}, which retrieves missing blocks on the push path.

\para{Mechanism specification.}
We now describe the workflow of our pull mechanism. Assume a validator $v_i$ is currently proceeding in round $r$.
Upon receiving a block $B$ containing missing parents during the push process, $v_i$ determines which pull synchronizer module will be used to pull $B$'s missing ancestors based on its local view. We denote $\mathcal{B}_{bk}$ the block set in the bulk synchronizer module and $\mathcal{B}_{lv}$ the block set in the live synchronizer module.



Specifically, $v_i$ first checks whether $B$ can be proven available. If every parent of $B$ has an implicit PoA from $v_i$'s received blocks or has been $\dagpoa$, then $v_i$ ensures $B$ is available. $v_i$ then transmits $B$ to the bulk synchronizer, outputs $\dagpoa$ and $\dagdeli$ for $B$, and include $B$ into the bulk block set $\mathcal{B}_{bk}$. If, otherwise, $B$ is not proven available; $v_i$ then transmits $B$ to the live synchronizer module and includes $B$ into $\mathcal{B}_{lv}$. For example, in \Cref{fig:impoa}(a), $v_4$ is proceeding in round $r{+}2$ and transmits $B_2^{r{+}2}$ and $B_4^{r{+}2}$ to the bulk synchronizer module even though their common parent $B_1^{r{+}1}$ was missing, since $B_1^{r{+}1}$ has an implicit PoA. In contrast, $B_3^{r{+}2}$ is transmitted to the live synchronizer module.

Moreover, $v_i$ can dynamically transmit blocks from $\mathcal{B}_{lv}$ to $\mathcal{B}_{bk}$. In particular, upon receiving a new block, $v_i$ checks whether any block $B' \in \mathcal{B}_{lv}$ is proven available. If yes, $v_i$ transmits such available $B'$ from $\mathcal{B}_{lv}$ to $\mathcal{B}_{bk}$.


\para{Identifying missing blocks.} By utilizing the ImPoA-based pull mechanism, \sysname classifies blocks with missing ancestors into two categories: \emph{Live blocks} (i.e., blocks in $\mathcal{B}_{lv}$) and \emph{Bulk blocks} (i.e., blocks in $\mathcal{B}_{bk}$).
Each validator $v_i$ can then locally identify missing blocks from $\mathcal{B}_{lv}$ and $\mathcal{B}_{bk}$ by checking their $ancestors$.
However, since blocks in the DAG are well-connected, $\mathcal{B}_{lv}$ and $\mathcal{B}_{bk}$ might involve overlapped missing ancestors. For instance, in \Cref{fig:impoa}, the missing block $B_1^{r{+}1}$ is the common ancestor of both the live blocks and the bulk blocks. As a result, $v_i$ might pull the same missing blocks in both the live and bulk synchronizer modules redundantly.

To avoid redundant synchronization, $v_i$ traces the last accepted blocks from all validators and combines them with blocks' \emph{ancestors} to identify missing blocks for distinct synchronizer modules.
Since each block must reference the previous block proposed by the same validator~\cite{babel2025mysticeti}, the last accepted block $B_j^r$ from $v_j$ in round $r$ indicates that all blocks from $v_j$ with a round $r'\leq r$ are proven available and can be accepted by $v_i$. Moreover, if a validator accepts a block, it means that the block's causal history is available and can be accepted.
As a result, with the information, $v_i$ can identify the missing blocks that are \emph{only} required by the live blocks.

As shown in \Cref{fig:impoa}(b), the last accepted blocks and their $ancestors$ information imply that $v_4$ (i) will be able to accept $r{+}1$ block $B_1^{r{+}1}$ from $v_1$, and (ii) has accepted round $r{+}2$ block $B_2^{r{+}2}$ from $v_2$, round $r{-}1$ block $B_3^{r{-}1}$ from $v_3$, and round $r{+}2$ block $B_4^{r{+}2}$ from itself. When processing the pending block $B_3^{r{+}2}$ (which references $v_3$'s round $r{+}1$ block) in the live synchronizer module, $v_4$ can identify $B_3^r$ and $B_3^{r{+}1}$ are only required by the live blocks (i.e., $B_3^{r{+}2}$) but $B_1^{r{+}1}$ does not. As a result, the live synchronizer module only requests the missing blocks $B_3^r$ and $B_3^{r{+}1}$, while the bulk synchronizer module requests the missing block $B_1^{r{+}1}$.

\subsubsection{Hybrid Pull Strategy}\label{sec-sysname-pull-strategy}
After identifying the missing blocks that a validator $v_i$ needs to fetch, $v_i$ deploys a hybrid pull strategy to balance message and round complexities, as shown in \Cref{fig:impoa}(c).

\para{Pulling blocks in the live synchronizer.}
Blocks $\mathcal{B}_{lv}$ in the live synchronizer module are time-sensitive, as their missing blocks can block the push process. Thus, the live synchronizer module adopts a \emph{deterministic pull strategy} to minimize the latency. Specifically, $v_i$ sends the pull requests for all missing blocks specified in $\mathcal{B}_{lv}$ to all validators. Such a pull strategy might pull redundant blocks. However, it guarantees that $v_i$ can receive missing blocks as long as they are $\dagpoa ed$ by one honest validator, and process live blocks within a round-trip delay (i.e., $2\delta$).

\para{Pulling blocks in the bulk synchronizer.}
Pulling blocks in the bulk synchronizer does not block the push process, and thus, is not sensitive to latency. Consequently, the bulk synchronizer module adopts a \emph{randomized rotating pull strategy} to minimize the pull complexity. Specifically, for each missing block specified in $\mathcal{B}_{bk}$, $v_i$ randomly chooses a small set of validators (with the set size of $O(1)$) to send the pull request. If $v_i$ does not receive the requested missing block within some predefined time (say $\Delta_{bk}$), it randomly chooses another set of validators who have not been requested by $v_i$ to send the pull request.

According to the definitions and pull strategies, it is obvious that all honest blocks can become acceptable within a round-trip delay. After getting missing blocks, $v_i$ outputs $\dagpoa$ and $\dagdeli$ for each of them if $v_i$ hasn't done it before.

\subsection{Building BFT Consensus on \sysname} \label{sec-sysname-consensus}
\sysname is inspired by modern multi-proposer BFT protocols and can serve as a modular block synchronization layer for a broad class of partially synchronous BFT consensus protocols.
As illustrated in \Cref{fig:overview}, \sysname interfaces with a generic state machine replication system by supplying accepted blocks (i.e., those that have been output via $\dagpoa$) to the consensus layer for ordering and supplying available blocks (i.e., those that have been output via $\dagdeli$) to the execution layer for execution after these blocks are ordered.
The key enabler of the generality is that the DAG structure constructed by \sysname induces structural \emph{patterns} on blocks which can be used by different consensus protocols to implement their ordering logic.

\para{Block patterns.}
{
    We define two patterns that arise naturally in \sysname. A block forms an \emph{availability pattern} if it is referenced by more than $f$ blocks produced by distinct validators.
    A block forms a \emph{certificate pattern} if it is referenced as parents by more than $2f$ blocks from distinct validators.
    Intuitively, the availability pattern ensures that the block and its causal history can be retrieved by honest validators; while the certification pattern implies uniqueness: for any validator and round, at most one block can become certified (note that an honest validator creates at most one block per round). A block that forms an availability pattern (resp., certificate pattern) is called an \emph{available} block (resp., \emph{certified} block). Both patterns involve two push rounds to generate. Moreover, we say that a block \emph{votes for} a certified block $B_C$ if it references at least $2f{+}1$ blocks as parents that themselves reference $B_C$.
}


\para{Applying consensus on \sysname.}
These patterns enable BFT protocols to operate over the DAG constructed by \sysname while applying their own ordering rules. We illustrate this integration using the protocols in \Cref{tab:comparison}, and summarize the resulting performance (shown in \Cref{tab:comparison}).

For multi-chain protocols such as Dashing~\cite{duan2024dashing} and Autobahn~\cite{giridharan2024autobahn}, validators execute separate consensus instances to ensure non-equivocation and thus only rely on \sysname to ensure block availability. They use availability patterns when performing separate consensus instances. Specifically, a leader validator uses its most recent block that forms an availability pattern as a proposal to coordinate a consensus instance. This design achieves improved push latency in optimistic conditions, at the cost of a modest increase in consensus latency under adverse conditions.

DAG-based protocols perform consensus by locally interpreting the DAG structure. When instantiated over \sysname, they rely on certificate patterns to ensure non-equivocation.
In these protocols, certain blocks are designated as leader blocks to drive consensus progress. During the push phase, \sysname's admission control prioritizes the inclusion of leader blocks and their supporting blocks from validators with benign reputations.
Validators then apply protocol-specific rules to order blocks.

Protocols such as Mysticeti~\cite{babel2025mysticeti} and Cordial Miners~\cite{cordial-miners} employ the same DAG structure as \sysname (i.e., constructing a DAG using a best-effort broadcast approach but distinct admission control and pull mechanisms). Therefore, their ordering rules can be applied to \sysname directly to derive a BFT consensus protocol.
The full paper~\cite[Appendix D]{kichidis2025beluga} details the integration of \sysname into Mysticeti and presents formal security proofs.
These consensus protocols achieve the same optimal push and consensus latencies in optimistic scenarios when instantiated over \sysname, while offering greater robustness under adverse conditions. 

Protocols such as Bullshark~\cite{spiegelman2022bullshark} and successor Shoal~\cite{spiegelman2025shoal} commit a leader certified block once it receives sufficient votes from subsequent certified blocks. When executed over \sysname, this process requires two rounds of certified blocks, resulting in a consensus latency of $4\delta$ in optimistic cases and $8\delta$ in adverse cases.

Protocols such as Shoal++~\cite{arun2024shoal++}, Sailfish~\cite{shrestha2024sailfish}, and Sailfish++~\cite{shrestha2025optimistic} commit a leader certified block using one round of certified blocks and the first messages of the next round. A leader certified block is directly committed once at least $2f{+}1$ first messages of the next-round certified blocks vote for it. In \sysname, these first messages correspond to individual blocks, allowing commitment with three push rounds. This yields a consensus latency of $3\delta$ in optimistic cases and $6\delta$ in adverse cases.



\para{Garbage collection.}
\sysname naturally reuses the garbage collection mechanism introduced by Narwhal~\cite{danezis2022narwhal}; this component allows the protocol to clean up any partially disseminated messages that were not promptly committed, preventing unbounded storage and memory growth. Notably, this module is employed in most implementations, both in DAG-based and linear BFT protocols~\cite{autobahn-code,narwhal-code,sui-code}.

\begin{figure}
    \centering
    \includegraphics[width=0.8\linewidth]{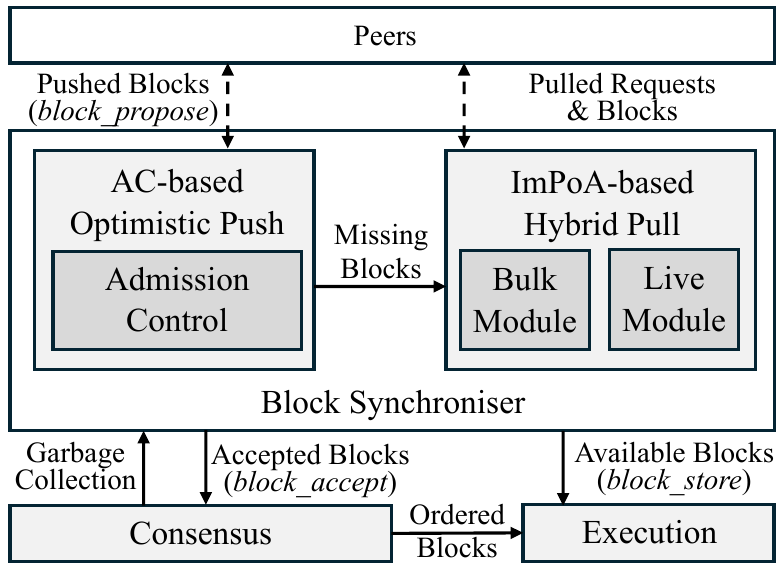}
    \caption{\sysname can be integrated into a generic BFT consensus protocol. By applying the consensus rules on the blocks produced by \sysname, validators derive a consistent order for blocks. \sysname guarantees the availability of ordered blocks for the state machine replication (SMR) execution.}
    \label{fig:overview}
    \Description{}
\end{figure}

\section{Protocol Analysis}  \label{sec-analysis} \label{sec-correctness-analysis}

We prove \sysname satisfies the properties defined in \Cref{def-dag-sync}. The performance analyses of \sysname in the happy case and under adversarial settings are deferred to the full paper~\cite[Appendix C]{kichidis2025beluga}. All theorems in this section have been formalised with their proofs machine-checked in Lean~\cite{Moura021}; the Lean code is provided in the accompanying material and outlined in~\cite[Appendix I]{kichidis2025beluga}

Like previous literature, we ignore the time spent handling (e.g., verifying and executing) a block in proofs, since it is negligible compared to the network delay $\Delta$.
Moreover, combining our push and pull protocols, it is intuitive that every block in the global pool (i.e., the block has been output via $\dagpoa$ by all honest validators) is eventually received by every honest validator.

\begin{theorem}[Block availability]\label{lem-block-availability}
    \sysname satisfies block availability. If an honest validator $v_i$ outputs $\dagpoa_i(B.d)$ for some block $B$ produced $r$, then $v_i$ eventually outputs $\dagdeli_i(B)$
\end{theorem}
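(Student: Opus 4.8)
The plan is to establish that every emission of $\dagpoa_i(B.d)$ by an honest validator $v_i$ is eventually matched by possession of the full block $B$, from which $\dagdeli_i(B)$ follows. First I would enumerate the program points at which $v_i$ can output $\dagpoa_i(B.d)$: the happy path of the AC-based push (\Cref{sec-sysname-push}), where $v_i$ has directly received $B$ over best-effort broadcast and all of $causal(B)$ is already accepted; the ImPoA path (\Cref{sec-sysname-pull-impl-poa}), where $v_i$ has received $B$ and accepts it because each missing parent carries an implicit proof-of-availability; and the completion of a pull in the live or bulk synchronizer module. In the first and third cases the block $B$ is already in hand, so $\dagdeli_i(B)$ is emitted in the same handler; the content of the theorem therefore reduces to verifying that no $\dagpoa_i(B.d)$ is ever emitted for a block that $v_i$ neither holds nor can retrieve.

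The key lemma I would isolate is that implicit availability is \emph{honest-backed}: if $B$ is referenced via strong or weak links by $f{+}1$ blocks from later rounds, then at least one honest validator stores $B$. This holds because an honest validator references $B$ only after receiving $B$ and verifying its causal history (the link rules of \Cref{fig:push-alg}); since the $f{+}1$ referencing blocks have distinct authors and at most $f$ are Byzantine, some honest author holds $B$ and will answer pull requests for it.

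Given this lemma, I would argue that any block $B$ whose $\dagpoa_i(B.d)$ is outstanding is routed to exactly one of the two synchronizer modules and is guaranteed to be fetched. If $B$ is a live block, the deterministic pull sends the request to all validators, so the honest holder replies over a reliable channel and $B$ arrives within one round-trip; if $B$ is a bulk block, the random pull re-samples validators across retries until it queries the honest holder, whose response is eventually delivered. Upon receiving $B$, the protocol emits $\dagdeli_i(B)$ (and $\dagpoa_i(B.d)$ if not already done), discharging the obligation.

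The step I expect to be the main obstacle is establishing termination of the random bulk pull: I must exclude the schedule in which $v_i$ forever samples only Byzantine or stale validators and never reaches the honest holder. I would close this by combining the re-randomization of the retry loop with reliable eventual point-to-point delivery between honest parties, so that the honest holder is eventually queried and its reply eventually arrives. A secondary subtlety is ruling out premature garbage collection of $B$ before the pull completes, which I would handle via the garbage-collection window of \Cref{sec-sysname-consensus} together with the bounded round drift of \Cref{lem-same-round} and \Cref{lem-round-progress}.
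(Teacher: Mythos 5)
Your proposal is correct, and its load-bearing observation is exactly the paper's: in \sysname, $\dagpoa_i(B.d)$ is only ever emitted at program points where $v_i$ already holds $B$ itself (direct reception on the push path, ImPoA-based acceptance of a \emph{received} block with missing parents, or completion of a pull), so $\dagdeli_i(B)$ follows immediately. The paper's proof is essentially those two sentences and stops there. Everything else you develop --- the honest-backed ImPoA lemma (at most $f$ of the $f{+}1$ referencing authors are Byzantine, so some honest validator stores the referenced block and can serve it), termination of the deterministic and random pulls, and the garbage-collection caveat --- is not needed for this theorem, because the block whose $\dagpoa$ is in question is never a block $v_i$ still has to retrieve; what may be missing are its \emph{ancestors}. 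That machinery is precisely what the paper uses to prove Causal availability (\Cref{lem-causal-availability}), where one must show the missing ancestors are eventually fetched and accepted. So your argument buys generality you do not need here at the cost of conflating the two properties; if you keep it, it would be cleaner to state the honest-backed lemma and the pull-termination argument once and cite them from the causal-availability proof, reducing the block-availability proof to the one-line case analysis.
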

\begin{proof}
    In \sysname, an honest validator $v_i$ outputs $\dagpoa_i(B.d)$ for some block $B$ in round $r$ when $v_i$ received $B$. As a result, $v_i$ must have stored $B$ and output $\dagdeli_i(B)$.
\end{proof}

\begin{theorem}[Causal availability]\label{lem-causal-availability}
    \sysname satisfies causal availability. If an honest validator $v_i$ outputs $\dagpoa_i(B.d)$ for some block $B$, then for every block $B' {\in} causal(B)$, $v_i$ eventually outputs $\dagpoa_i(B'.d)$, where $causal(B)$ represents $B$'s causal history.
\end{theorem}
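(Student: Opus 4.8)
The plan is to prove causal availability by induction over the causal-history DAG, using the implicit proof-of-availability (ImPoA) as the linchpin that guarantees an honest source for every missing block always exists. First I would unfold the acceptance rule of the ImPoA-based pull mechanism (\autoref{sec-sysname-pull-impl-poa}): whenever an honest $v_i$ outputs $\dagpoa_i(B.d)$, every strong-link parent of $B$ has either already been $\dagpoa$ed or possesses an implicit PoA in the live block set $\mathcal{B}_{lv}$. Hence it suffices to show that any block carrying an implicit PoA (as observed by $v_i$) is eventually accepted by $v_i$, and then to lift this guarantee to the entire causal closure $causal(B)$.

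The key lemma I would isolate is an \emph{honest-witness} claim: if a block $B'$ has an implicit PoA from $v_i$'s local view---i.e.\ $v_i$ has received at least $f{+}1$ later-round blocks referencing $B'$ via strong or weak links---then there exists an honest validator $v_h$ that has received $B'$ and has already output $\dagpoa$ for every block in $causal(B')$. This follows because among any $f{+}1$ referencing blocks at least one is authored by an honest validator, and by the referencing discipline of the push protocol (lines~\ref{step:link-start}-\ref{step:link-end}, \Cref{fig:push-alg}) an honest author links to $B'$ only after verifying the availability of $B'$'s entire causal history. Thus $v_h$ is a reliable source not only for $B'$ but for all of its ancestors.

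Next I would invoke the pull protocol together with the reliable-channel assumption to conclude retrieval. For a live block the live synchronizer broadcasts the request to all validators, so it necessarily reaches $v_h$; for a bulk block the random strategy retries with a fresh validator every $\Delta_{bk}$, so after finitely many attempts it selects $v_h$. In either case the request and its response are eventually delivered (and, after GST, within $\Delta$), so $v_i$ receives $B'$ and outputs $\dagpoa_i(B'.d)$. Because $v_h$ holds all of $causal(B')$ and $v_i$ identifies the remaining missing ancestors through the $ancestors$ field, the same honest witness underwrites a retrieval of every ancestor of $B'$, letting the argument recurse.

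Finally I would close the induction by ordering $causal(B)$ by round number and applying strong induction: the base case (a round-$0$ block with empty causal history) is immediate, and the inductive step is precisely the lemma above, since accepting a block forces acceptance of each parent, whose own causal history is covered by the induction hypothesis. The main obstacle I anticipate is the termination argument for the bulk synchronizer's \emph{random} pull: one must show that the honest witness $v_h$ furnished by the ImPoA persists as a serviceable source for the whole subtree of missing ancestors, so that the chain of random retries cannot stall at any depth of the recursion. This is exactly where the referencing discipline---an honest validator links to a block only after verifying its full causal availability---does the heavy lifting, guaranteeing that the witness for $B'$ simultaneously witnesses all of $causal(B')$.
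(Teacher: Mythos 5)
Your overall strategy---induction over $causal(B)$, extracting an honest validator from among the $f{+}1$ blocks that form an implicit PoA, and then retrieving each missing block from an honest holder via the pull protocol---is the same as the paper's. But your central ``honest-witness'' lemma is stated too strongly, and as stated it is false: you claim that the honest author $v_h$ of one of the $f{+}1$ referencing blocks ``has already output $\dagpoa$ for every block in $causal(B')$'' and is therefore a reliable source for all of $B'$'s ancestors. The referencing discipline does not give you this. An honest validator references $B'$ once $B'$ is \emph{acceptable}, which only requires that it has received $B'$ and that each parent of $B'$ is either already accepted \emph{or has its own implicit PoA in that validator's local view}; in the latter case the validator outputs $\dagpoa$ for $B'$ while some ancestors of $B'$ are still missing locally and are being fetched off the push path by the bulk synchronizer. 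So the witness for $B'$ need not hold $causal(B')$, and the single-witness claim for the whole subtree---which you explicitly lean on to discharge termination of the recursive random pull---does not go through.

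The fix is exactly the paper's weaker, level-by-level invariant: for every $B'' \in causal(B)$ there exists \emph{some} honest validator (possibly a different one at each depth) that has stored $B''$ and has ensured $B''$'s parents are available, i.e., has either accepted them or observed $f{+}1$ referencing blocks for them. This is proved by re-applying your counting argument afresh at each level: the honest referencer of $B'$ guarantees, for each parent of $B'$, either that it holds the parent or that the parent has $f{+}1$ referencers of its own, among which one is honest, and so on down the DAG. Each missing ancestor is then pulled from \emph{its own} honest holder---the live synchronizer reaches it because it broadcasts requests to all validators, and the bulk synchronizer reaches it after finitely many retries---so your conclusion still follows, just without a single witness underwriting the entire causal closure.
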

\begin{proof}
    In \sysname, an honest validator $v_i$ outputs $\dagpoa_i(B.d)$ for some block $B$ in round $r$ when $v_i$ received $B$ and ensures all its parents in round $r{-}1$ are available---that is, $v_i$ has either outputted $\dagpoa$ for the parent blocks or observed they are referenced by at least $f{+}1$ subsequent blocks (\autoref{sec-sysname-pull}). In the latter case where the parent blocks (denoted by a set $\mathcal{B}^{r{-}1}$) are referenced by at least $f{+}1$ subsequent blocks, according to \sysname's push protocol, the creators of these $f{+}1$ subsequent blocks must have outputted $\dagpoa$ for $\mathcal{B}^{r{-}1}$. This means that at least one honest validator $v_j$ has stored $\mathcal{B}^{r{-}1}$ and ensured all parents of $\mathcal{B}^{r{-}1}$ in round $r{-}2$ are available. As a result, $v_j$ must have either outputted $\dagpoa$ for the parent blocks of $\mathcal{B}^{r{-}1}$ in round $r{-}2$ or observed they are referenced by at least $f{+}1$ subsequent blocks. By induction, we can see that for every block $B' \in causal(B)$, there exists at least one honest validator that has stored $B'$ and ensured all its parents are available. As a result, $v_i$ can eventually receive $B'$ from this honest validator via the pull protocol and output $\dagpoa_i(B'.d)$.
\end{proof}

\begin{lemma} \label{lem-round-liveness}
    After GST, if round $r$ is the highest round that honest validators are at some time $t$, then all honest validators will enter the round $r$ by $t+4\Delta$.
\end{lemma}
\begin{proof}
    Let $v_i$ be the honest validator in round $r$ at time $t$ and $\mathcal{V}_{slow}$ be a set of honest validators proceeding in round $r'<r{-}1$ at $t$. After GST, the message delay between honest validators is bounded by $\Delta$. Thus, each honest validator must receive the latest blocks from all honest validators by $t{+}\Delta$. With our pull protocol, honest validators can accept all these honest latest blocks via either block fetching or imPoAs by time $t{+}3\Delta$. Now consider two cases.
    
    \textit{Case 1:} $\mathcal{V}_{slow}=\emptyset$. This means that all honest validators are proceeding in round $\geq r{-}1$ by $t$. Then by time $t{+}3\Delta$, all validators have entered round $r$ and created their round $r$ blocks.

    \textit{Case 2:} $\mathcal{V}_{slow}\neq\emptyset$. According to \sysname's round advancement rules (\autoref{sec-sysname-push}), validators in $\mathcal{V}_{slow}$ advance to round $r-1$ immediately and create their round $r{-}1$ blocks by time $t{+}3\Delta$. Thus, by time $t{+}4\Delta$, each honest validator has received round $r{-}1$ blocks from all honest validators and enters round $r$.
\end{proof}

\begin{theorem}[Round-Progression]\label{thm-round-progression}
    \sysname satisfies round-progression. For each round $r\geq 0$, at least $2f{+}1$ validators will create and disseminate their round $r$ blocks.
\end{theorem}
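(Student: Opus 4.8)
The plan is to reduce Round-Progression to a liveness statement about honest validators alone. Since $f < n/3$ gives $n \geq 3f+1$, there are at least $n-f \geq 2f{+}1$ honest validators, so it suffices to show that \emph{every} honest validator eventually creates and disseminates a round $r$ block, for every $r \geq 0$. I would prove this by strong induction on $r$, using the strengthened hypothesis that every honest validator eventually enters round $r$ (i.e.\ outputs $\dagpoa$ for at least $2f{+}1$ round $r{-}1$ blocks) and then creates and best-effort-broadcasts its own round $r$ block.

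The base case $r=0$ is immediate: every honest validator starts at round $0$ and creates its round $0$ block with no prerequisites. For the inductive step, fix an honest validator $v_i$ and assume every honest validator eventually creates and disseminates a round $r{-}1$ block. Because there are at least $2f{+}1$ honest validators and honest-to-honest channels are reliable (every honest message is eventually delivered), $v_i$ eventually receives all of these honest round $r{-}1$ blocks; it then remains to argue that $v_i$ can output $\dagpoa$ for at least $2f{+}1$ of them, which triggers its advance to round $r$.

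The crux is showing that $v_i$'s acceptance of an honest round $r{-}1$ block $B$ (created by honest $v_j$) terminates. Here I would exploit that $v_j$ is itself a reliable source for all of $B$'s ancestors: since $v_j$ created $B$ it accepted $B$'s parents, so by causal availability (\Cref{lem-causal-availability}) applied to $v_j$ it eventually outputs $\dagpoa$ for every block in $causal(B)$, and by block availability (\Cref{lem-block-availability}) it stores each of them. Consequently, whatever ancestors of $B$ are missing at $v_i$ can be retrieved by the pull protocol: the live (deterministic) pull broadcasts requests to all validators and so reaches $v_j$, while any ancestor routed to the bulk (random) pull is obtained after finitely many retries. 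Once $v_i$ has made every parent of $B$ available---either by $\dagpoa$-ing it or by observing an implicit PoA---$v_i$ satisfies \sysname's acceptance condition and outputs $\dagpoa_i(B.d)$. Applying this to the $\geq 2f{+}1$ honest round $r{-}1$ blocks, $v_i$ eventually accepts $2f{+}1$ round $r{-}1$ blocks, enters round $r$, and creates and disseminates its round $r$ block, closing the induction.

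A point I would be careful to spell out is that the reputation and admission-control machinery does not obstruct this argument: admission control governs only which acceptable blocks are chosen as \emph{strong-link parents} of a newly created block, whereas round advancement depends solely on the number of round $r{-}1$ blocks for which $v_i$ has output $\dagpoa$, and $\dagpoa$ is never gated by reputation. The main obstacle is therefore the termination of the pull process for $B$'s ancestors; I expect this to follow cleanly from the same availability-of-stored-blocks reasoning that underlies \Cref{lem-causal-availability}, so that Round-Progression reduces to repeated reliable retrieval rather than to any timing guarantee---and thus holds even before $GST$.
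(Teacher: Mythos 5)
Your proof is essentially correct, but it takes a genuinely different route from the paper's. You run a single forward induction from round $0$, never invoking $GST$: you reduce Round-Progression to the stronger claim that \emph{every} honest validator eventually produces a block in every round, and you discharge the inductive step by arguing that acceptance of each honest round-$(r{-}1)$ block terminates, since its creator $v_j$ is (by \Cref{lem-causal-availability} and \Cref{lem-block-availability}) an eventual holder of its entire causal history and is reachable by both the deterministic live pull and the retrying bulk pull. The paper instead splits the claim in two: for rounds after $GST$ it anchors a forward induction at \Cref{lem-same-round} (all honest validators synchronize to a common round within $3\Delta$), and for all earlier rounds it uses a purely \emph{structural} backward argument---any round-$r''$ block must reference $2f{+}1$ round-$(r''{-}1)$ parents, so the mere existence of round-$r''$ blocks forces $2f{+}1$ validators to have produced round-$(r''{-}1)$ blocks. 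Your approach buys generality (it holds under eventual delivery alone, before $GST$) and makes the dependence on pull termination explicit, which the paper's Round-Progression proof leaves implicit in the phrase ``thanks to \sysname's pull protocol.'' The paper's backward argument buys economy for the pre-$GST$ regime: it needs no liveness reasoning at all there, and it also sidesteps a premise your induction quietly relies on---that every honest validator emits a block in \emph{every} round rather than skipping ahead after falling behind---since the structural argument only needs $2f{+}1$ producers per round, not full honest participation. Your closing remark that admission control never gates $\dagpoa$ and hence cannot obstruct round advancement is a worthwhile observation that the paper does not spell out.
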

\begin{proof}
    For the genesis round $0$, all validators will create and disseminate their round $0$ blocks. Thus, the lemma holds for round $0$.
    In addition, according to \Cref{lem-round-liveness}, all honest validators can enter the same round (w.l.o.g. at round $r$) within $4\Delta$ after GST. As a result, all (i.e., at least $2f{+}1$) honest validators must be able to create and disseminate their round $r$ blocks by time $GST{+}4\Delta$. Thanks to \sysname's pull protocol, every honest validator can accept at least $2f{+}1$ round $r$ blocks. With the round advancement rule, validators must advance to the next round $r{+}1$ within a timeout $T_{rd}$, and thus can create their round $r{+}1$ blocks. By induction on the round, we can see that for any future round $r'{>}r$, at least $2f{+}1$ validators will create and disseminate their round $r'$ blocks. Moreover, for any round $1 \leq r'' \leq r$, since \sysname's push protocol requires validators to reference at least $2f{+}1$ parent blocks from the previous round when creating their round $r''$ blocks, at least $2f{+}1$ validators must have created and disseminated their round $r''-1$ blocks. By induction, we can see that for any previous round $1 \leq r'' \leq r$, at least $2f{+}1$ validators must have created and disseminated their round $r''$ blocks. The proof is done.
\end{proof}

\begin{theorem}[Round-Termination]\label{thm-round-termination}
    \sysname satisfies round-termination. For each round $r\geq 0$, each honest validator accepts block proposals, whose assigned round $r$, from at least $2f{+}1$ different validators.
\end{theorem}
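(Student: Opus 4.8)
The plan is to reduce Round-Termination to two ingredients already in hand: the \emph{liveness} of round progression developed in \Cref{lem-same-round}, \Cref{lem-round-progress}, and \Cref{thm-round-progression}, and the \emph{Causal availability} guarantee of \Cref{lem-causal-availability}. The guiding observation is that an honest validator can only \emph{enter} a round after it has accepted $2f{+}1$ parents from the previous round, and those parents pull their entire causal history into view. First I would isolate a purely combinatorial DAG invariant, independent of timing: since every well-formed (hence accepted) block $B$ in \sysname references $2f{+}1$ round-$(B.r{-}1)$ parents from \emph{distinct} validators (this is exactly the $k{=}2f{+}1$ structure fixed in \Cref{def-dag-sync} and enforced by admission control), a downward induction shows that $causal(B)$ contains blocks from at least $2f{+}1$ distinct validators at \emph{every} round $\rho$ with $0 \le \rho \le B.r{-}1$. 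The inductive step is immediate: the parent-closure of $causal(B)$ together with $2f{+}1$ distinct parents per block carries the count from round $\rho$ down to round $\rho{-}1$.

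Next I would establish that after GST every honest validator enters arbitrarily high rounds. \Cref{lem-same-round} brings all honest validators into a common round within $3\Delta$, and \Cref{lem-round-progress} then advances them one round every $3\Delta$. Chaining these inductively yields, for every target round $r$, a finite time by which every honest validator has entered some round $r' \ge r{+}2$. Discharging the precondition of \Cref{lem-round-progress} is routine here: the $\ge 2f{+}1$ honest validators create and disseminate their round-$r$ blocks immediately upon entering round $r$, which is precisely the forward induction already run inside the proof of \Cref{thm-round-progression}, so I would reuse that argument rather than repeat it.

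To finish, fix an arbitrary round $r$ and an honest $v_i$. Having entered some round $r' \ge r{+}2$, validator $v_i$ created a round-$r'$ block, which by the admission-control rule required it to output $\dagpoa_i$ for $2f{+}1$ accepted round-$(r'{-}1)$ parents; let $B$ be any one of them, so $B.r \ge r{+}1$ and $\dagpoa_i(B.d)$ holds. Applying \Cref{lem-causal-availability} to $B$, validator $v_i$ eventually outputs $\dagpoa_i(B'.d)$ for every $B' \in causal(B)$. Instantiating the DAG invariant at round $r \le B.r{-}1$ then shows that $causal(B)$ contains round-$r$ blocks from at least $2f{+}1$ distinct validators, all of which $v_i$ eventually accepts---which is exactly Round-Termination. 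The genesis round $r{=}0$ is the trivial base case, since all validators broadcast parentless round-$0$ blocks and every honest validator accepts $\ge 2f{+}1$ of them once they arrive over the reliable channels.

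The step I expect to be the main obstacle is not any single inequality but justifying that the reputation-driven admission control never \emph{starves} an honest validator of acceptable parents, a fact on which both the liveness argument and the claim ``entering round $r'$ requires $2f{+}1$ accepted parents'' silently depend. Concretely I must show that after GST each honest validator always finds at least $2f{+}1$ round-$(r{-}1)$ blocks that are simultaneously from distinct validators, acceptable in the sense that their causal histories are available or implicitly proven available via ImPoA, and not excluded by the reputation table. This reduces to arguing that honest validators' blocks stay acceptable once the network is synchronous (their causal histories are recoverable through the pull protocol) and that the asymmetry $R_L \gg 1$ never drives an honest validator's reputation below that of the $\le f$ Byzantine validators after GST; since at least $2f{+}1$ validators are honest, the top-$2f{+}1$ selection then always returns a full honest-backed parent set, so progress---and hence the accumulation of causal history needed above---cannot stall.
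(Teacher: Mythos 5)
Your proof is correct but takes a genuinely different route from the paper's. The paper argues forward: by Round-Progression (\Cref{thm-round-progression}) at least $2f{+}1$ validators produce round-$r$ blocks, at least $f{+}1$ of their creators are honest and must each have accepted $2f{+}1$ round-$(r{-}1)$ blocks, and then Block/Causal availability is invoked to claim those blocks become available ``to all honest validators,'' closing an induction on $r$. You instead argue downward from a single accepted block: you isolate a combinatorial invariant (every round $\rho \le B.r{-}1$ is represented in $causal(B)$ by blocks from $\ge 2f{+}1$ distinct authors, by the $k{=}2f{+}1$ parent rule), show honest validators reach arbitrarily high rounds via \Cref{lem-same-round} and \Cref{lem-round-progress}, and then apply \Cref{lem-causal-availability} to one of $v_i$'s own accepted parents. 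Your use of Causal availability is actually the more faithful one: the lemma as stated guarantees that the \emph{same} validator $v_i$ that output $\dagpoa_i(B.d)$ eventually accepts all of $causal(B)$, which is exactly what you invoke, whereas the paper's step from ``$f{+}1$ honest creators accepted these blocks'' to ``they are available to all honest validators'' silently routes through the pull protocol rather than the lemma statement. The cost of your approach is that it leans more heavily on unbounded round progression (and hence on the non-starvation of the admission-control parent selection), which you correctly identify as the load-bearing implicit assumption; the paper's proof depends on the same fact through \Cref{thm-round-progression} but does not surface it. One small point to make explicit if you write this up: the downward induction needs every block in $causal(B)$ to be well-formed (i.e., to actually carry $2f{+}1$ distinct round-$(B.r{-}1)$ parents), which follows because $v_i$ accepts all of them and acceptance entails the structural validity check.
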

\begin{proof}
    For the genesis round $0$, since all honest validators create their round $0$ blocks, and round $0$ blocks do not reference any blocks, each honest validator can accept at least $2f{+}1$ round $0$ blocks. The lemma holds for round $0$.
    In addition, according to \Cref{thm-round-progression}, for each round $r\geq 1$, at least $2f{+}1$ validators will create and disseminate their round $r$ blocks. Each round $r$ blocks consist of at least $f{+}1$ blocks created by honest validators. For these $f{+}1$ honest validators, according to \sysname's push protocol, they must output $\dagpoa$ to accept at least $2f{+}1$ round $r{-}1$ blocks. According to \Cref{lem-block-availability} and \Cref{lem-causal-availability}, these round $r{-}1$ blocks and their causal histories are available to all honest validators. As a result, each honest validator can accept at least $2f{+}1$ round $r{-}1$ blocks. By induction, we can see that for any round $r\geq 1$, each honest validator can accept at least $2f{+}1$ round $r$ blocks.
\end{proof}
\section{Experimental Evaluation}\label{sec-experiment}

\begin{figure*}[t]
    \centering
    \includegraphics[width=0.9\textwidth]{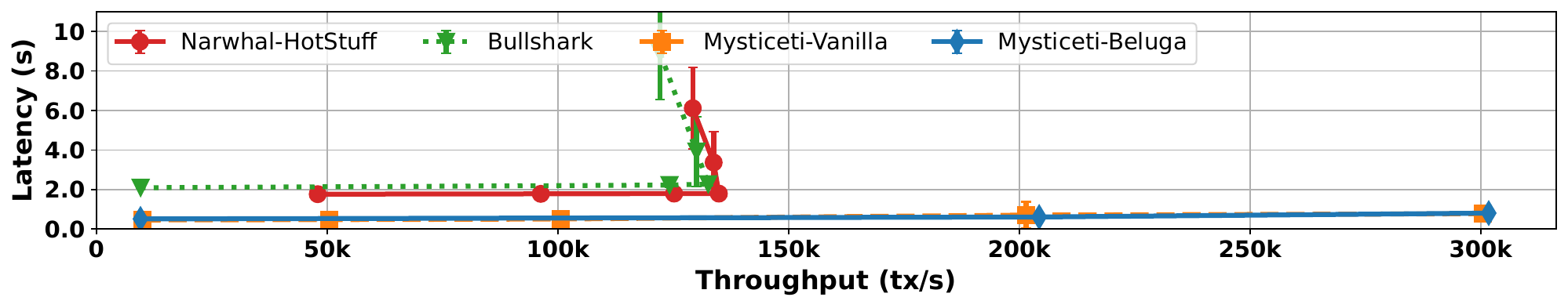}
    \caption{
        Throughput--latency of Mysticeti with \sysname vs.\ the baseline push-pull synchronizer. WAN, 50 validators, no faults, 512B transactions.
    }
    \label{fig:happy-path}
\end{figure*}

We then show its improvements over the baseline Mysticeti implemented with the default push-pull block synchronizer module (\autoref{sec-problem-def}).
We chose to implement \sysname within Mysticeti because (1) it is a state-of-the-art DAG-based BFT consensus protocol deployed in production in multiple blockchains~\cite{sui,iota-consensus}, providing real-world impact, (2) its open-source codebase~\cite{mysticeti-code} is well documented and modular, facilitating implementation, and (3) it builds upon an uncertified DAG, making it particularly vulnerable to the attacks described in \autoref{sec-attack}. These properties make Mysticeti an ideal candidate to demonstrate the effectiveness of \sysname.

Our evaluation demonstrates the following claims:
\begin{enumerate}[label={\bfseries C\arabic*}]
    \item\label{claim:c1} \sysname introduces no noticeable performance overhead when the protocol runs in ideal conditions (no Byzantine parties and synchronous network)
    \item\label{claim:c2} \sysname drastically improves both latency and throughput in the presence of asynchronous network conditions and the attack presented in \autoref{sec-attack}.
\end{enumerate}

Evaluating the performance of BFT protocols in the presence of generic Byzantine faults is an open research question~\cite{twins}, and state-of-the-art evidence relies on formal proofs.

\subsection{Benchmarks in Ideal Conditions}\label{sec:happy-path-benchmark}

\Cref{fig:happy-path} depicts the performance of Mysticeti equipped with both \sysname and the baseline block synchronizer operating through a best-effort push protocol followed by a random pull (described in \autoref{sec-problem-def}) running with 50 honest validators.
As expected, the performance of the baseline Mysticeti is similar to Mysticeti equipped with \sysname. Both systems exhibit a stable throughput up to around 100,000 tx/s while maintaining a latency of around 0.5s, and both scale easily to 300,000 tx/s (512-byte transactions) while maintaining sub-second latency (0.8s). We did not push throughput further for cost reasons, but both systems appear to use less than 10\% of their CPU at that point, indicating that they could likely handle even higher loads.
This performance similarity is due to \sysname's lightweight requirements during the happy path: when all parties are honest and the network is synchronous, \sysname imposes minimal constraints on parent block selection and thus operates similarly to the baseline push-pull synchronizer.
This confirms \Cref{claim:c1} that \sysname does not introduce any significant overhead when the network is synchronous and all parties are honest.
Additionally, \Cref{fig:happy-path} illustrates that \sysname allows to retain the superior performance of Mysticeti with respect to Bullshark~\cite{spiegelman2022bullshark} (an example of certified DAG-based protocol) and Narwhal-HotStuff~\cite{danezis2022narwhal} (an example with explicit separation between data dissemination and consensus).

\subsection{Benchmarks under Attack}\label{sec:attack-benchmark}

\Cref{fig:faults} depicts the performance of Mysticeti with both \sysname and the baseline block synchronizer in a 10-validator deployment with 1 or 3 faults (we limit this benchmark to 10 validators for cost reasons). The Byzantine validators perform the pull induction attack described in \autoref{sec-attack}, creating conditions that can also arise from severe network asynchrony.
The baseline Mysticeti suffers severe throughput degradation and dramatic latency increases. For reference, the graph includes the no-fault performance with 10 validators to illustrate the performance gap under attack. With three Byzantine faults, baseline Mysticeti's throughput drops by over 15x and its latency increases to over 50 seconds, compared to 0.5 seconds in the fault-free case. In contrast, Mysticeti equipped with \sysname maintains substantially higher throughput: \sysname rapidly detects and prioritizes inclusion of blocks from reliable validators in the DAG. The throughput reduction (30\%) is primarily attributable to the loss of faulty validator capacity. Latency increases by approximately 4x to around 2 seconds, which is expected given the need to wait for additional blocks before proceeding to the next round and to recover from attack-induced asynchrony.
This represents a substantial improvement over baseline Mysticeti: over 3x higher throughput and 25x lower latency with three Byzantine faults. These results confirm \Cref{claim:c2} that \sysname significantly improves both latency and throughput under asynchronous network conditions and pull induction attacks.

\begin{figure}[t]
    \centering
    \includegraphics[width=\columnwidth]{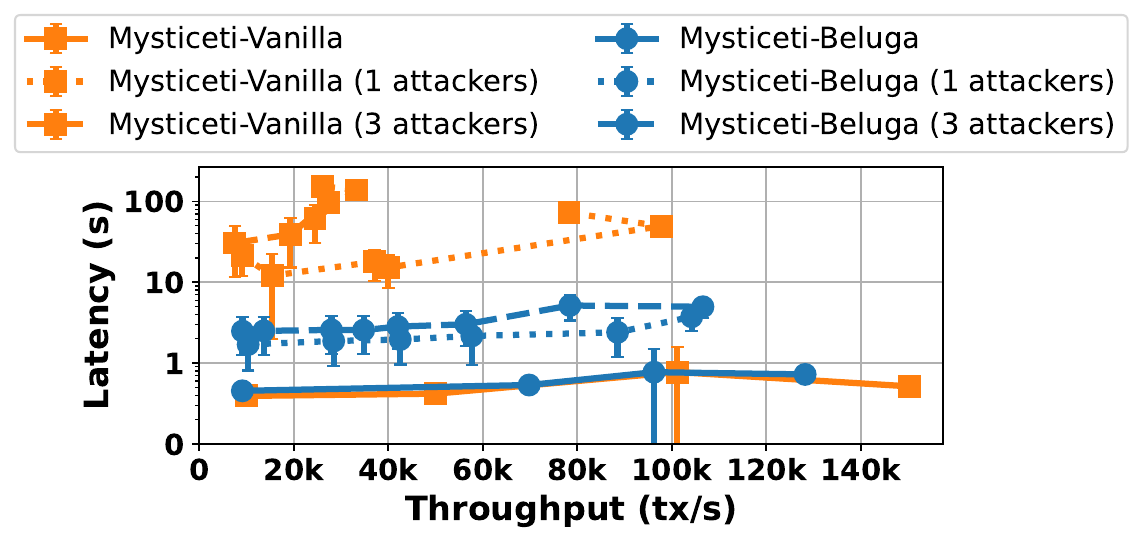}
    \caption{
        Throughput--latency of Mysticeti with \sysname vs.\ the baseline push-pull synchronizer under the pull induction attack. WAN, 10 validators, 0/1/3 faults, 512B transactions; log-scale latency.
    }
    \label{fig:faults}
    \Description{}
\end{figure}

\section{From Paper to Mainnet}

The motivation for developing \sysname arose in March 2024 following a testnet incident on the Sui blockchain, triggered by a fundamental change in the protocol's networking stack. Some validators misconfigured their machines, resulting in a situation where they could receive blocks from other validators but were unable to propagate their own. Consequently, these validators accumulated a large number of blocks that remained unshared with the network. In one extreme case, a single validator locally created approximately 750,000 blocks that had not been disseminated. Once the validator corrected their configuration and broadcast all these blocks simultaneously, the network experienced a stall: fast validators began including these blocks in their DAGs, and once $f+1$ of them did so, the rest of the network had to synchronize as well, causing a substantial performance degradation. Specifically, all 750,000 blocks were committed within one minute, triggering an overwhelming number of pull requests.

This incident revealed the root cause in the block synchronizer component. Its admission control module (see \Cref{fig:overview}) lacked sufficient filtering for poorly performing validators, allowing the most recent hoarded blocks to enter local DAGs. Moreover, the original pull protocol exacerbated the problem by indiscriminately pulling from both poorly performing and random validators, resulting in high synchronization latency. These observations underscored the need for a high-quality DAG that minimizes synchronization from underperforming validators.

We collaborated with the Sui team to experiment with various heuristics inspired by leader scoring~\cite{hammerhead}, but initially, excluding ancestors indiscriminately disrupted block propagation even under benign conditions. This motivated the introduction of optional dependencies (\autoref{sec-sysname}). Subsequently, they embarked on a multi-month effort to collect detailed consensus metrics and evaluate different strategies for scoring validators. This process ultimately led to the design presented in \Cref{sec-sysname}. After nearly a year of testing and tuning, \sysname was deployed on Sui mainnet version \texttt{mainnet-v1.42.0} in January 2025.

\Cref{fig:production-latency} (\autoref{sec:introduction}) shows a production network replicating the Sui mainnet with all 135 validators, in their respective geo-location and with distribution stake, sustaining a constant load of 6,000 transactions per second. The figure compares the original Mysticeti protocol run within the Sui mainnet before \sysname deployment, and the improved version with \sysname. The results demonstrate that \sysname effectively mitigates the latency spikes previously observed in the network under attack, resulting in a stable and predictable transaction commit latency. \sysname effectively prevents poorly performing validators from adversely impacting the overall network performance, resulting in a 5x reduction in the 95th percentile, a 2x reduction in the 75th percentile, and a 25\% reduction in the 50th percentile latency.

\Cref{fig:production-block-proposal} shows the rate at which blocks from an intentionally slow validator (green line) appear in the committed DAG under the same production setup after deploying \sysname. The validator's contribution to the committed DAG falls to zero within roughly 1-2 minutes as its score drops and the admission-control module at other validators discards its blocks. It remains zero throughout the attack, and then returns to its original rate within 3 minutes of recovering. Notice how other validators remain largely unaffected (other colored lined on the figure). This result demonstrates that \sysname rapidly identifies and deprioritizes poorly performing validators, preventing them from degrading network performance during misbehavior, and then promptly restores their ability to participate once they return to normal behavior. Thus, recovering validators rejoin the consensus process without long-term penalties.

\begin{figure}[t]
    \centering
    \includegraphics[width=0.9\columnwidth]{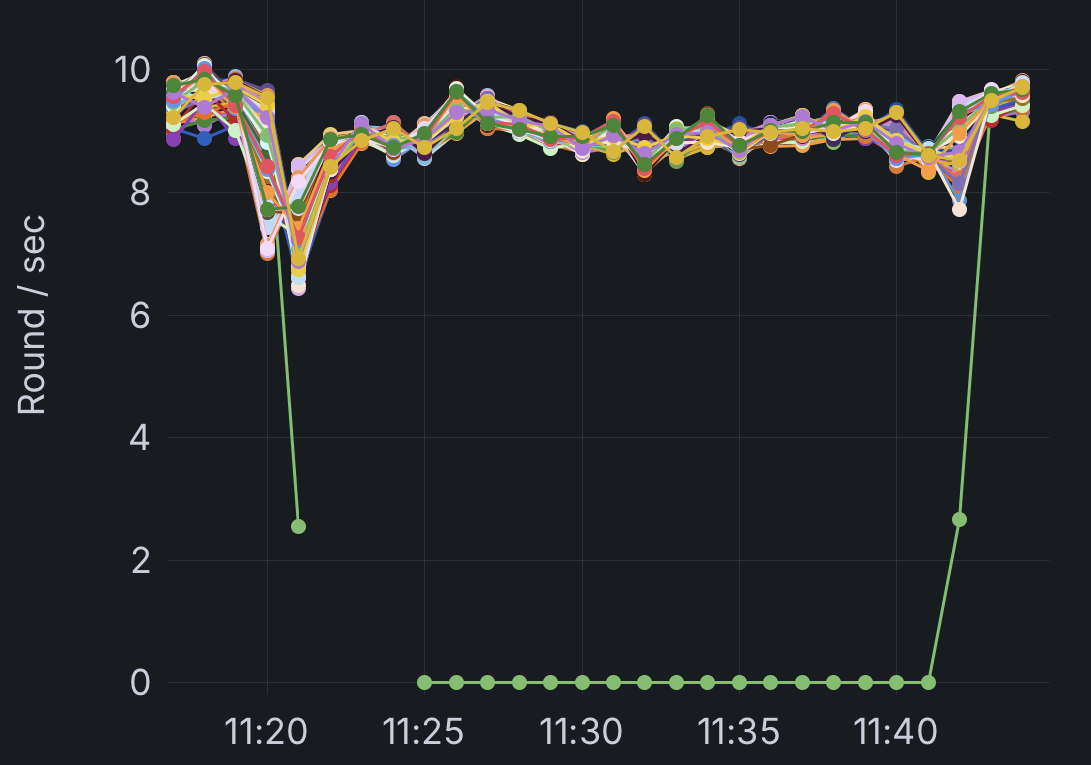}
    \caption{
        Per-validator block proposal rate in a Sui mainnet reproduction (135 validators) with \sysname. The slow validator's rate drops to zero during the attack and recovers once performance is restored.
    }
    \label{fig:production-block-proposal}
    \Description{}
\end{figure}



\section{Related Work}\label{sec-related-work}

A largely unexplored dimension of the consensus literature concerns the design of synchronization primitives themselves. While many protocols assume a specific synchrony model, prior work, to our knowledge, does not isolate, specify, or evaluate such primitives as first-class, modular components of consensus stacks. We address this gap by defining and implementing synchronization abstractions that compose cleanly with existing consensus protocols.

One key source of inspiration is the Pacemaker~\cite{baudet2019state} of leader-based protocols. The pacemaker separates view-change timing, leader rotation, and timeout management from the safety-critical core, which creates clean interfaces and lets liveness mechanisms evolve independently. Our synchronizer plays a similar role: it fits into the modular blockchain architecture~\cite{cohen2023proof}, defines isolated processes that deliver high-performance support to the safety-critical total-ordering algorithm, and lets engineers separate concerns.

A synchronizer protocol can be viewed as a module to ensure totality of reliable broadcast~\cite{rabin1989efficient}, one of the most studied problems in distributed computing. Unlike AVID~\cite{cachin2005asynchronous,alhaddad2022asynchronous} protocols, which focus on communication complexity in asynchronous settings, or optimistic reliable broadcast~\cite{shrestha2025optimistic}, which emphasizes latency under a low number of faults, our design examines performance after GST in partially synchronous or synchronous networks. This approach provides a more practical perspective and has already enabled significant performance improvements in Sui, a top-20 blockchain.

In permissionless networks, GossipSub~\cite{vyzovitis2020gossipsub} addresses related concerns at the gossip overlay layer. Like \sysname, it decomposes message propagation into an eager push phase (via a structured local mesh) followed by a lazy pull phase (gossip-driven requests for missing messages), and it incorporates a peer scoring function that penalizes misbehaving nodes to maintain mesh quality. However, GossipSub targets open, Sybil-prone environments (Filecoin and Ethereum~2.0) and operates purely at the network transport layer, without awareness of causal block structure or consensus semantics. In contrast, \sysname is a consensus-aware module that exploits DAG structure for scarcity detection and provides provable reliable broadcast guarantees.

\ifpublish
    Starfish~\cite{polyanskii2025starfish} decouples availability from ordering, letting validators vote for blocks whose causal history they have not yet received and relying on a separate availability layer to filter unavailable blocks before commit. This removes pull synchronization from the critical path but can lower commit rates, waste bandwidth, and requires every validator to produce a block every round. \sysname instead guarantees availability inline via its admission control and pull protocol, avoiding a separate availability layer while preserving optimistic push performance.
\fi


\ifpublish
    \begin{acks}
    This work is funded by Mysten Labs and was conducted while Jianting Zhang was interning with the company. We thank George Danezis, Adrian Perrig, and Philipp Jovanovic for their insightful discussions that greatly improved this work.
\end{acks}

\fi

\bibliographystyle{ACM-Reference-Format}
\bibliography{references}

\ifarxiv
\appendix
\section{Open Science}

All artifacts necessary for evaluating the contributions of this paper are publicly available at:
\begin{center}
  \codelink
\end{center}
\Cref{sec:tutorial} additionally provides a tutorial to reproduce experiments.

\subsection{Reproducing Experiments} \label{sec:tutorial}

We provide the orchestration scripts\footnote{\codelink} used to benchmark the \sysname codebase on AWS and produce the benchmarks of \autoref{sec-experiment}.

\para{Deploying a testbed}
The file `\texttildelow/.aws/credentials' should have the following content:
\begingroup
\scriptsize
\begin{verbatim}
[default]
aws_access_key_id = YOUR_ACCESS_KEY_ID
aws_secret_access_key = YOUR_SECRET_ACCESS_KEY
\end{verbatim}
\endgroup
\noindent configured with account-specific AWS \emph{access key id} and \emph{secret access key}. It is advise to not specify any AWS region as the orchestration scripts need to handle multiple regions programmatically.

A file `settings.yaml' contains all the configuration parameters for the testbed deployment. We run the experiments of \autoref{sec-experiment} with the following settings:

\begingroup
\scriptsize\begin{lstlisting}[language=yaml]
---
testbed_id: "${USER}-testbed"
cloud_provider: aws
token_file: "/Users/${USER}/.aws/credentials"
ssh_private_key_file: "/Users/${USER}/.ssh/aws"
regions:
  - us-east-1
  - us-west-2
  - ca-central-1
  - eu-central-1
  - eu-west-1
  - eu-west-2
  - eu-west-3
  - eu-north-1
  - ap-south-1
  - ap-southeast-1
  - ap-southeast-2
  - ap-northeast-1
  - ap-northeast-2
specs: m5d.8xlarge
repository:
  url: https://github.com/AUTHOR/REPO.git
  commit: main
node_parameters_path:
  "crates/orchestrator/assets/node-parameters.yml"
client_parameters_path:
  "crates/orchestrator/assets/client-parameters.yml"
benchmark_duration: 1000
\end{lstlisting}
\endgroup

where the file `/Users/\${USER}/.ssh/aws' holds the ssh private key used to access the AWS instances, and `AUTHOR' and `REPO' are respectively the GitHub username and repository name of the codebase to benchmark.

The orchestrator binary provides various functionalities for creating, starting, stopping, and destroying instances. For instance, the following command to boots 2 instances per region (if the settings file specifies 13 regions, as shown in the example above, a total of 26 instances will be created):

\begingroup
\scriptsize\begin{verbatim}
cargo run --bin orchestrator -- testbed deploy --instances 2
\end{verbatim}
\endgroup
The following command displays he current status of the testbed instances
\begingroup
\scriptsize\begin{verbatim}
cargo run --bin orchestrator testbed status
\end{verbatim}
\endgroup
Instances listed with a green number are available and ready for use and instances listed with a red number are stopped. It is necessary to boot at least one instance per load generator, one instance per validator, and one additional instance for monitoring purposes (see below).
The following commands respectively start and stop instances:
\begingroup
\scriptsize\begin{verbatim}
cargo run --bin orchestrator -- testbed start
cargo run --bin orchestrator -- testbed stop
\end{verbatim}
\endgroup
It is advised to always stop machines when unused to avoid incurring in unnecessary costs.

\para{Running Benchmarks}
Running benchmarks involves installing the specified version of the codebase on all remote machines and running one validator and one load generator per instance. For example, the following command benchmarks a committee of 50 validators under a constant load of 1,000 tx/s:
\begingroup
\scriptsize\begin{verbatim}
cargo run --bin orchestrator -- benchmark \
    --committee 50 fixed-load --loads 1000
\end{verbatim}
\endgroup
The nodes and clients configuration files (respectively specified in `crates/orchestrator/assets/node-parameters.yml' and \\
`crates/orchestrator/assets/client-parameters.yml') are used to respectively parametrize the nodes and clients. We run our benchmarks with the nodes configuration file as follows:
{\scriptsize\begin{lstlisting}[language=yaml]
leader_timeout:
  secs: 1
  nanos: 0
\end{lstlisting}}
and the client configuration file as follows:
{\scriptsize\begin{lstlisting}[language=yaml]
initial_delay:
  secs: 400
  nanos: 0
\end{lstlisting}}

\para{Monitoring}
The orchestrator provides facilities to monitor metrics. It deploys a Prometheus instance and a Grafana instance on a dedicated remote machine. Grafana is then available on the address printed on \texttt{stdout} when running benchmarks with the default username and password both set to \texttt{admin}. An example Grafana dashboard can be found in the file `grafana-dashboard.json'\footnote{\dashboardlink}.

\section{Ethical Considerations} \label{sec-ethical-considerations}
This work presents a pull-induction attack and shows how it can be launched to degrade the performance of many modern BFT consensus protocols. We conduct our attack evaluation in a self-deployed environment in a controlled way, which does not affect any production systems. While the present attacks might be exploited in real-world systems that have adopted the corresponding protocols, the primary contribution is a mitigation. We have disclosed this vulnerability to the relevant Sui team and helped them integrate our mitigation into their system.
\section{Performance Analysis} \label{sec-performance-analysis}
We prove the push latency of \sysname both in the happy case and under adversarial settings.

\subsection{Performance Analysis in the Happy Case} \label{sec-performance-happy-case}
In this section, we will show that \sysname can achieve optimal push latency (i.e., $\Delta$) under happy cases and the push latency of nearly $2\Delta$ under adverse cases after GST.



In happy cases, all responsive validators (at least $2f{+}1$) are honest and share their blocks in time. According to \Cref{lem-round-liveness}, all honest validators can enter the same round (w.l.o.g. at round $r$) within $4\Delta$ after GST and can create their round $r$ blocks by time $GST{+}4\Delta$. As a result, every honest validator must be able to receive a quorum $\mathcal{B}^r$ containing at least $2f{+}1$ round $r$ blocks by time $GST{+}5\Delta$. Since all responsive validators are honest in happy cases, every honest validator will receive $\mathcal{B}^r$ at time, w.l.o.g., $t_{sync} {<} GST{+}5\Delta$ and move to the next round $r{+}1$ at $t_{sync}$. After that, all honest validators can receive at least $2f{+}1$ round $r{+}1$ blocks by time $t_{sync}{+}\Delta$ and move to round $r{+}2$ at time $t_{sync}{+}\Delta$. By induction on rounds, we can see that the push latency under happy cases is $\Delta$.

\subsection{Performance Analysis under Adverse cases} \label{appendix-full-performance-analysis-adverse}
In this section, we give a rigorous proof to show that \sysname can achieve a push latency of nearly $2\Delta$ under adverse cases.
The proof relies on the following assumption. (Note that the correctness of \sysname (cf. \autoref{sec-correctness-analysis}) does not rely on these assumptions.)

\begin{assumption}[Latency Triangle] \label{asm-latency-triangle}
    After GST, the direct network latency between any pair of honest validators is always faster than going through an intermediate validator.
\end{assumption}


In adverse cases, there are at most $f$ malicious validators that aim to increase the push latency by inducing honest validators to trigger the pull protocol. In the following, we denote the set of honest validators as $\mathcal{V}_h$ and the set of malicious validators as $\mathcal{A}$.

\begin{lemma} \label{lem-no-blame-honest}
    After GST, each honest validator will not get blamed and have its reputation decreased by honest validators.
\end{lemma}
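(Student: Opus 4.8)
The plan is to rule out both triggers of a reputation decrease against an honest target $v_j$: (i) some honest $v_i$ invoking the pull protocol for one of $v_j$'s blocks, and (ii) some honest $v_i$ collecting $f{+}1$ distinct pull requests (reports) for a $v_j$ block, which together constitute a blame. The crux of the whole argument is a single ordering claim: after GST, every honest validator receives an honest block strictly before it ever encounters a reference to that block. I would establish this claim first and then derive the impossibility of both (i) and (ii) from it.

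For the ordering claim, fix an honest $v_j$ that creates a block $B$ at time $t \ge GST$ and disseminates it to all validators by best-effort broadcast (\autoref{sec-sysname-push}). Let $\delta_{xy}$ denote the direct latency between $v_x$ and $v_y$, so every honest $v_i$ receives $B$ directly at time $t + \delta_{ji} \le t + \Delta$. Now consider any block $B'$ that references $B$ via a strong or weak link. Since $B$ is signed by $v_j$, its creator $c$ cannot hold $B$ before $v_j$ sends it, so $B'$ is created no earlier than $t + \delta_{jc}$ and reaches $v_i$ no earlier than $t + \delta_{jc} + \delta_{ci}$ (a relayed path only adds delay). By the Latency Triangle (\Cref{asm-latency-triangle}), the direct honest path beats any path through an intermediate $c$, i.e. $\delta_{ji} < \delta_{jc} + \delta_{ci}$, so $B$ reaches $v_i$ strictly before $B'$. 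Hence an honest $v_i$ always holds $B$ before it can see a reference to it, and therefore never needs to pull an honest block on the push path; this immediately rules out trigger~(i).

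For trigger~(ii), I would argue by counting reporters. A blame against $v_j$ requires $f{+}1$ \emph{distinct} validators to issue pull requests for a $v_j$ block. By the ordering claim, no honest validator ever issues such a request for an honest $v_j$, so every reporter against $v_j$ must be Byzantine. As at most $f$ validators are Byzantine, the number of distinct reporters against honest $v_j$ never reaches $f{+}1$, and the blame threshold is never met at any honest $v_i$. Combining the two cases, no honest validator decreases an honest $v_j$'s reputation through either mechanism, which is exactly the statement of the lemma.

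The step I expect to be the main obstacle is the ordering claim, specifically justifying the latency-triangle inequality when the referencing block $B'$ is produced or relayed by a Byzantine party: I must read \Cref{asm-latency-triangle} as bounding the honest-to-honest direct path against \emph{any} two-hop path, including one whose midpoint is corrupt, and correctly lower-bound the earliest moment $c$ could have obtained the signed block $B$. I would also need to check that extending the argument by induction over the causal history only ever forces pulls of Byzantine-authored ancestors---which do not affect honest reputations---so that deeper dependencies never reintroduce a pull request against an honest creator.
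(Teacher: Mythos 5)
Your proposal is correct and follows essentially the same route as the paper's proof: both hinge on the Latency Triangle assumption to show that an honest validator always receives an honest block directly before seeing any reference to it, so honest validators never pull honest blocks, and at most $f$ (necessarily Byzantine) reporters can never reach the $f{+}1$ blame threshold. Your version is somewhat more explicit about the two separate decrease triggers and about why the triangle inequality still applies when the referencing block's creator is Byzantine, but the substance is identical.
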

\begin{proof}
    Recall from \autoref{sec-sysname-push} that a validator $v_i$ has its reputation decreased by honest validators only if $f{+}1$ validators report having invoked the pull protocol to synchronize $v_i$'s blocks. An honest validator $v_j$ invokes the pull protocol upon receiving a block $B_k^r$ from a validator $v_k$ that references $v_i$'s round $r{-}1$ block $B_i^{r{-}1}$, while $v_j$ has not yet received $B_i^{r{-}1}$ directly. By \Cref{asm-latency-triangle}, however, after GST, if $v_i$ is honest and sends $B_i^{r{-}1}$ to $v_j$, then $v_j$ receives $B_i^{r{-}1}$ directly from $v_i$ before receiving it through any intermediate validator $v_k$ that references it in $B_k^r$. Hence, $v_j$ does not invoke the pull protocol for $B_i^{r{-}1}$ and does not report $v_i$. Since there are at most $f$ malicious validators, an honest $v_i$ cannot be reported by $f{+}1$ validators and therefore cannot be blamed by honest validators.
\end{proof}

\begin{lemma} \label{lem-optimal-round-latency-condition}
    After GST, if all honest validators enter round $r$ at time $t_r$ and have their reputation higher than that of any malicious validator, then for any future round $r'\geq r$, the latency of round $r'$ is $\Delta$.
\end{lemma}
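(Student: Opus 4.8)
The plan is to prove the statement by induction on the round $r' \geq r$, carrying along two invariants: (i) all honest validators enter round $r'$ simultaneously at a time $t_{r'} = t_r + (r'{-}r)\Delta$, and (ii) in every honest validator's local reputation table, each honest validator outranks every malicious validator. The base case $r'{=}r$ is exactly the hypothesis of the lemma. For the inductive step I would show that invariant (ii) forces honest validators to build a fully honest parent quorum, that this yields a $\Delta$ round latency and re-establishes (i), and that the round's reputation updates preserve (ii).

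First I would extract the parent-selection consequence of invariant (ii). Since $n \geq 3f{+}1$, there are at least $2f{+}1$ honest validators, all of whose round $r'{-}1$ blocks were created and broadcast at $t_{r'-1}$ and hence, after GST, received by every honest validator by $t_{r'-1}+\Delta = t_{r'}$. By the inductive hypothesis these honest blocks reference only honest round $r'{-}2$ parents that are likewise already available, so they are all \emph{acceptable} in the sense of the admission control. Consequently, when an honest validator assembles its round $r'$ block, the top-$2f{+}1$-by-reputation rule together with invariant (ii) selects $2f{+}1$ honest blocks as strong links and never a malicious block. Thus every honest round $r'$ block has only honest, already-disseminated parents.

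Next I would establish the $\Delta$ latency. Each honest validator best-effort broadcasts its round $r'$ block at $t_{r'}$; after GST every honest validator receives all $\geq 2f{+}1$ honest round $r'$ blocks by $t_{r'}+\Delta$. Because these blocks' strong links point only to honest round $r'{-}1$ blocks that were already accepted, every recipient outputs $\dagpoa$ for them immediately, with no pull on the push path (weak links do not gate acceptance, by the causal-availability argument, and any pull they trigger is served off the push path by the bulk synchronizer). Hence each honest validator accepts a $2f{+}1$ quorum and enters round $r'{+}1$ at $t_{r'}+\Delta$, re-establishing invariant (i) with $t_{r'+1}=t_{r'}+\Delta$ and giving round latency exactly $\Delta$.

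Finally, and this is the step I expect to be the main obstacle, I would preserve the reputation invariant (ii). By \Cref{lem-no-blame-honest} no honest validator is blamed after GST, so honest reputations never decrease; moreover, since every honest round $r'{-}1$ block is received by all honest validators, each honest validator's own $2f{+}1$ quorum uniformly witnesses it, so every honest reputation is incremented by exactly one each round. A malicious validator, by contrast, gains at most one per round (its block must be witnessed by a whole $2f{+}1$ quorum to earn an increment) and loses $R_L$ whenever it triggers $f{+}1$ reports. Thus the honest-minus-malicious reputation gap is non-decreasing, and since it is positive at round $r$ it stays positive, closing the induction. The delicate points here are ruling out a malicious validator increasing its reputation strictly faster than an honest one---handled by noting that honest dissemination is always witnessed while a malicious validator can at best match it---and confirming that weak links to withheld malicious blocks cannot inject a pull onto the push path.
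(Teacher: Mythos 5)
Your proposal is correct and follows essentially the same route as the paper's proof: since every honest validator outranks every malicious one, the top-$2f{+}1$-by-reputation admission control selects only honest round-$(r'{-}1)$ parents, which after GST are already received and accepted by all honest validators, so no pull is triggered on the push path and each round completes in $\Delta$, closing by induction. The only difference is that you make explicit (and verify) the preservation of the reputation ordering and of simultaneous round entry across the induction, which the paper's one-line ``by induction'' leaves implicit --- a worthwhile tightening, not a different argument.
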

\begin{proof}
    Since all honest validators have higher reputation than any malicious validator, by the reputation-based push protocol, honest validators reference only round $r{-}1$ blocks from honest validators when creating their round $r$ blocks. After GST, these round $r$ blocks are accepted without invoking the pull protocol. Consequently, round $r$ has latency $\Delta$, and all honest validators enter round $r{+}1$ at time $t_r{+}\Delta$. By induction, the latency of any future round $r'\geq r$ is $\Delta$.
\end{proof}

\begin{lemma} \label{lem-bounded-latency-each-round}
    After GST, if all honest validators enter round $r$ at time $t_r$, then either the expected latency of any future round $r'{>}r$ is within $2\Delta$ or at least one malicious validator is blamed by honest validators, in which case the latency of round $r'$ is at most $4\Delta$.
\end{lemma}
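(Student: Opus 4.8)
The plan is to fix an arbitrary future round $r' > r$ and reduce the analysis to the single source of on-path delay, then resolve it by a counting dichotomy. First I would note that there are at least $2f{+}1$ honest validators, that after GST every honest round-$r'$ block reaches all honest validators within $\Delta$ under best-effort broadcast, and that by \Cref{lem-no-blame-honest} together with \Cref{asm-latency-triangle} no honest validator is ever forced to pull (and hence report) another honest validator's block. Consequently, the only blocks whose absence can stall the push path in round $r'$ are the malicious round-$(r'{-}1)$ blocks that some honest round-$r'$ block references via a strong link; call such a block $B_m$, created by a Byzantine $v_m$. The whole argument then turns on how widely $v_m$ managed to disseminate $B_m$ among honest validators before they assembled their round-$r'$ proposals.

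The core step is a counting dichotomy on each such $B_m$. Let $h$ be the number of honest validators that accept $B_m$ in time to strong-reference it in their round-$r'$ block. If $h \ge f{+}1$, then these $\ge f{+}1$ referencing blocks constitute an implicit proof-of-availability for $B_m$ (\autoref{sec-sysname-pull}): every lagging honest validator, upon processing a block whose parent is $B_m$, finds $B_m$ implicitly available, routes it to the bulk synchronizer, and thus never issues an on-path (live, broadcast-to-all) pull for $B_m$. If instead $h \le f$, then at least $(2f{+}1){-}f = f{+}1$ honest validators lack $B_m$; absent an implicit proof-of-availability they must place $B_m$ in the live set and send a deterministic pull request for it, producing $f{+}1$ distinct reports and therefore a blame against $v_m$ by the mechanism of \autoref{sec-sysname-push}. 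Hence either every referenced malicious block is implicitly available to the laggards (the no-blame case), or some $v_m$ is blamed.

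It then remains to bound the round latency in each branch. In the no-blame branch no honest validator performs an on-path pull; combining the $\Delta$ push latency with the at-most-$\Delta$ staggering that $v_m$ is permitted before triggering a report (the $t_j{-}t_i{<}\Delta$ bound isolated in the proof sketch of \Cref{thm-optimal-round-latency}), all honest validators accept at least $2f{+}1$ round-$r'$ blocks and advance within $2\Delta$, while the random bulk pulls proceed off the push path and influence only the expectation, giving expected round latency at most $2\Delta$. In the blame branch the live synchronizer's deterministic pull broadcasts the request to all validators, and since the referencing honest validator has already accepted $B_m$ it answers within one round-trip, so round $r'$ completes in $\Delta{+}2\Delta = 3\Delta$, matching \Cref{lem-round-progress}.

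The main obstacle I anticipate lies inside the no-blame branch: a lagging validator might receive one referencing block strictly before the full set of $f{+}1$ references (the implicit proof) has arrived, momentarily insert $B_m$ into $\mathcal{B}_{lv}$, and fire a premature report. To close the argument I would exploit that, after GST, all honest round-$r'$ blocks co-arrive within a single $\Delta$ window, so the dynamic $\mathcal{B}_{lv}\!\to\!\mathcal{B}_{bk}$ demotion triggered on each new arrival fires before $f{+}1$ distinct honest reporters can accumulate; formalizing that this co-arrival keeps the report count below $f{+}1$ is the delicate point, whereas the propagation and deterministic-pull bounds needed for the latency accounting are already furnished by \Cref{lem-no-blame-honest}, \Cref{asm-latency-triangle}, and \Cref{lem-round-progress}.
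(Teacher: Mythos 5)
There is a genuine gap: your counting dichotomy is too coarse, and its second branch is false. You claim that if $h \le f$ honest validators strong-reference $B_m$ in round $r'$, then at least $f{+}1$ honest validators ``lack $B_m$'' and must issue live pull requests, forcing a blame. But not referencing $B_m$ in one's round-$r'$ block does not imply one must pull it: a malicious $v_m$ can deliver $B_m$ \emph{directly but late} to an honest validator $v_j$ --- too late for $v_j$ to use it as a parent, yet before $v_j$ receives any block referencing $B_m$ --- so $v_j$ neither references $B_m$ nor issues a pull request. The paper's proof treats exactly this evasion as a separate scenario in its Case~3: $v_m$ partitions the non-referencing honest validators into a set $\mathcal{V}_{h1}^{NoR}$ that receives $B_m$ within $\Delta$ of learning it is referenced (hence never reports) and a set $\mathcal{V}_{h2}^{NoR}$ of size at most $f$ that does pull but cannot muster the $f{+}1$ reports needed for a blame. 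In that scenario there is \emph{no blame}, some honest validators are stalled by a full pull round-trip, and the $2\Delta$ bound is recovered only by an amortization argument: the delayed validators catch up within two rounds (creating rounds $r{+}1$ and $r{+}2$ between $t_r{+}\Delta$ and $t_r{+}5\Delta$), so the \emph{average} latency per round is $2\Delta$. This amortized accounting --- which is why the lemma is stated in terms of \emph{expected} latency --- is entirely absent from your argument, and your dichotomy would let you conclude a blame where none occurs.

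Your first branch and its latency bound are essentially the paper's Case~2 (every malicious block referenced by $\ge f{+}1$ honest blocks yields an implicit PoA, so no on-path pulls), and the $3\Delta$ bound in the blame branch via \Cref{lem-round-progress} matches the paper. The ``delicate point'' you flag --- a laggard firing a premature live pull before the $f{+}1$ references accumulate --- is a real concern but secondary; the paper sidesteps it by reasoning about which validators receive $B_m$ directly and when, rather than about the order in which references arrive. To repair your proof you would need to replace the two-way dichotomy with a three-way case split that isolates the delay-without-blame scenario and then prove the two-round catch-up bound for the at most $f$ stalled validators.
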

\begin{proof}
    Recall that if an honest validator $v_i$ enters round $r$ at time $t_r$, then $v_i$ must have received at least $2f{+}1$ acceptable round $r{-}1$ blocks and can create its round $r$ block at $t_r$.
    Since all honest validators enter round $r$ at time $t_r$, every honest validator will receive at least $2f{+}1$ round $r$ blocks created by honest validators by time $t_r{+}\Delta$. There are three cases.

    \textit{Case 1:} If all honest validators have their reputation higher than that of any malicious validator, then according to \Cref{lem-optimal-round-latency-condition}, the latency of future round $r'{>}r$ is $\Delta$.

    \textit{Case 2:} If it is not the case 1, and for each round $r{-}1$ block created by malicious validators $\mathcal{A}$, it is referenced by at least $f{+}1$ round $r$ blocks created by honest validators, i.e., $\mathcal{A}$ share their round $r{-}1$ blocks with at least $f{+}1$ honest validator. In this case, thanks to the ImPoA-based pull mechanism, every honest validator can accept all round $r$ blocks created by honest validators without synchronizing any missing blocks on the push path and create its round $r{+}1$ block at time $t_r{+}\Delta$. The latency of round $r{+}1$ is $\Delta$. By induction, we can see that for any future round $r'\geq r$, the latency of round $r'$ is $\Delta$.

    \textit{Case 3:} If it is not the case 1, and at least one round $r{-}1$ block $B_m^{r-1}$ created by a malicious validator $v_m$ is referenced by fewer than $f{+}1$ round $r$ blocks created by honest validators, i.e., $v_m$ delays or did not share $B_m^{r-1}$ with more than $f$ honest validators. We denote those honest validators referencing $B_m^{r-1}$ by $t_r$ as $\mathcal{V}_h^{Ref}$, and those who do not as $\mathcal{V}_h^{NoR}$. There are two scenarios, and we show that for any future round $r'\geq r{+}1$, the expected latency of $r'$ is at most $2\Delta$ before $v_m$ is blamed.

    First, $v_m$ will never share $B_m^{r-1}$ with $\mathcal{V}_h^{NoR}$, then $|\mathcal{V}_h^{NoR}|\geq f{+}1$ honest validators will report $v_m$, and $v_m$ will be blamed by honest validators.

    Second, $v_m$ delays sharing $B_m^{r-1}$ with some honest validators $\mathcal{V}_{h1}^{NoR}$ but not the others $\mathcal{V}_{h2}^{NoR}$ to escape being blamed, where $|\mathcal{V}_{h2}^{NoR}=\mathcal{V}_h^{NoR}\setminus \mathcal{V}_{h1}^{NoR}|\leq f$. In this scenario, note that $\mathcal{V}_{h1}^{NoR}$ must have received $B_m^{r-1}$ by time $t_r{+}\Delta$, since otherwise, $\mathcal{V}_{h1}^{NoR}$ learn $B_m^{r-1}$ is missing from $\mathcal{V}_h^{Ref}$'s round $r$ blocks and will report $v_m$. Thus, both $\mathcal{V}_h^{Ref}$ and $\mathcal{V}_{h1}^{NoR}$ can create their round $r{+}1$ blocks by time $t_r{+}\Delta$. The delayed honest validators $\mathcal{V}_{h2}^{NoR}$, instead, must pull $B_m^{r-1}$ at time $t_r{+}\Delta$ (when they receive round $r$ blocks from $\mathcal{V}_h^{Ref}\cup \mathcal{V}_{h1}^{NoR}$) and pull any missing round $r$ blocks at time $t_r{+}2\Delta$ (when they receive round $r{+}1$ blocks from $\mathcal{V}_h^{Ref}\cup \mathcal{V}_{h1}^{NoR}$). As a result, even for the delayed honest validators $\mathcal{V}_{h2}^{NoR}$, they can create their round $r{+}1$ blocks by time $t_r{+}3\Delta$ and create their round $r{+}2$ blocks $t_r{+}4\Delta$. Since the non-delayed $\mathcal{V}_h^{Ref}\cup \mathcal{V}_{h1}^{NoR}$ can create their round $r{+}2$ blocks by time $t_r{+}4\Delta$, such a delaying process can be repeated every two rounds. As a result, the maximum average latency of each future round $r'\geq r{+}1$ is delayed by at most $2\Delta$, if $v_m$ wishes to escape being blamed.

    In addition, when $v_m$ is blamed due to it delaying a round $r''$, according to \Cref{lem-round-liveness}, the latency of round $r''$ is at most $4\Delta$. The proof is done.

\end{proof}

\begin{lemma} \label{lem-bounded-delay}
    After GST, once all honest validators enter round $r$, for future round $r'{>}r$ that malicious validators $\mathcal{A}$ delay, its expected latency is within $2\Delta$, or at least one malicious validator is blamed by honest validators.
\end{lemma}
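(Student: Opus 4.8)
The plan is to obtain \Cref{lem-bounded-delay} essentially as a specialization of \Cref{lem-bounded-latency-each-round}, with the case analysis focused on the rounds that $\mathcal{A}$ actively delays. First I would fix a future round $r' > r$ and observe that, since all honest validators are assumed to enter round $r$ simultaneously at $t_r$, by time $t_r + \Delta$ every honest validator receives at least $2f{+}1$ round-$r$ blocks authored by honest validators (using the $\Delta$ message bound after GST). This reduces the whole question to how the malicious round-$(r'{-}1)$ blocks are referenced by the honest quorum.

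Next I would split on whether the adversary under-shares. If every malicious block from the previous round is referenced (via strong or weak links) by at least $f{+}1$ honest blocks, then the ImPoA-based pull mechanism (\autoref{sec-sysname-pull}) lets each honest validator accept the required quorum off the push path, so round $r'$ is not delayed at all and its latency is $\Delta \le 2\Delta$; this is the non-delaying regime and is immediate. The substance therefore lies in the case where some malicious validator $V_m$ shares its block $B_m$ with fewer than $f{+}1$ honest validators, which is precisely Case 3 of \Cref{lem-bounded-latency-each-round}.

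Within that case I would reuse the two-scenario argument. Either $V_m$ never delivers $B_m$ to the $\ge f{+}1$ honest validators that lack it, in which case each of them issues a pull request and their $f{+}1$ collective reports constitute a blame against $V_m$; here I rely on \Cref{lem-no-blame-honest} to ensure no honest validator is ever mistakenly caught by this same condition. Or $V_m$ withholds $B_m$ from at most $f$ honest validators so as to stay under the blame threshold. In the latter scenario the crux is to bound the amortized per-round latency by $2\Delta$: the non-delayed honest validators keep advancing at the $\Delta$ rate, and by \Cref{lem-round-progress} the delayed validators, once they obtain the triggering round-$r'$ blocks, recover the missing ancestors through the deterministic live pull within one round-trip ($2\Delta$) and can produce two consecutive-round blocks within $4\Delta$. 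Consequently such a delay can be injected at most once every two rounds, yielding an expected per-round latency of at most $2\Delta$ so long as $V_m$ avoids being blamed.

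The main obstacle I anticipate is this amortization step rather than any single-round bound. Because the adversary may rotate which (at most $f$) honest validators it starves from round to round, I need the catch-up-within-two-rounds observation to hold uniformly, so that the \emph{average} latency over an arbitrary window stays bounded by $2\Delta$; and I must simultaneously argue that any attempt to push past this bound---by withholding from $\ge f{+}1$ honest validators---immediately trips the $f{+}1$-report blame condition. Tying these two sides together, the per-attempt $2\Delta$ bound and the inevitability of a blame, is exactly what produces the stated dichotomy, and I would close with a short induction over $r' \ge r{+}1$ that appeals to \Cref{lem-bounded-latency-each-round} at each synchronization point.
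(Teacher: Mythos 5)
There is a genuine gap: you have mis-read the hypothesis and, in doing so, assumed away the entire content of the lemma. You write that ``all honest validators are assumed to enter round $r$ simultaneously at $t_r$,'' but that is the hypothesis of \Cref{lem-bounded-latency-each-round}, not of \Cref{lem-bounded-delay}. The phrase ``once all honest validators enter round $r$'' only guarantees (via \Cref{lem-same-round}) that every honest validator reaches round $r$ within a $3\Delta$ window after GST; they may do so at \emph{different} times, because some of them had to invoke the pull protocol to get there. If entry were simultaneous, \Cref{lem-bounded-delay} would be an immediate corollary of \Cref{lem-bounded-latency-each-round} and would not need a separate proof. The paper's proof therefore spends almost all of its effort on the desynchronized case: it partitions the honest validators into a fast set $\mathcal{V}_h^{fst}$ (who enter round $r$ without pulling) and a slow set $\mathcal{V}_h^{slw}$, argues $\lvert\mathcal{V}_h^{fst}\rvert \geq f{+}1$ (otherwise more than $f$ honest validators report and a blame is already triggered), shows the slow validators catch up by $t_r{+}\Delta$ after receiving and pulling against the fast validators' round-$r$ blocks, and then handles a separate Case~2 in which the fast validators have already advanced to round $r{+}1$ before the slow validators' round-$r$ blocks arrive. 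That last case has its own two-scenario analysis ending with the $(t_r{+}5\Delta) - (t_r{+}\Delta)$ over two rounds computation that yields the $2\Delta$ amortized bound. None of this appears in your proposal.

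The parts you do sketch --- the dichotomy between the adversary sharing with at least $f{+}1$ honest validators (ImPoA rescues the push path) versus under-sharing (either an immediate $f{+}1$-report blame or a rotate-and-starve strategy bounded by a two-round amortization), and the appeal to \Cref{lem-no-blame-honest} --- are faithful to the paper's Case~3 of \Cref{lem-bounded-latency-each-round}, and the paper does reuse that machinery inside \Cref{lem-bounded-delay}. But as written your argument only re-proves the synchronized-entry case. To repair it you must drop the simultaneity assumption, introduce the fast/slow partition, justify $\lvert\mathcal{V}_h^{fst}\rvert \geq f{+}1$, and carry the explicit timing bookkeeping showing that the slow validators produce two consecutive rounds of blocks within $4\Delta$ so that the \emph{average} latency over the affected window is still at most $2\Delta$.
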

\begin{proof}
    Assume round $r$ is the highest round that all honest validators are at some time $t$. By \Cref{lem-round-liveness}
    all honest validator will enter round $r$ by $t_r= t{+}4\Delta$ after GST. 

    Consider two sets of honest validators: slow honest validators $\mathcal{V}_h^{slw}$ and fast honest validators $\mathcal{V}_h^{fst}=\mathcal{V}_h\setminus \mathcal{V}_h^{slw}$. Validators in $\mathcal{V}_h^{slw}$ need to invoke the pull protocol for missing round $r{-}1$ blocks to enter round $r$ by time $t_r$, while validators in $\mathcal{V}_h^{fst}$ do not. Apparently, we have $|\mathcal{V}_h^{fst}| \geq f{+}1$, since otherwise, there are more than $f{+}1$ honest validators in $\mathcal{V}_h^{slw}$ reporting $\mathcal{A}$, and $\mathcal{A}$ would be blamed. 
    Since all honest validators can enter round $r$ by $t_r$, each honest validator will receive round $r$ blocks from all honest validators by $t_r+\Delta$. Then there are two cases:

    \textit{Case 1:} If all honest validators enter round $r{+}1$ at time $t_r{+}\Delta$, according to \Cref{lem-bounded-latency-each-round}, either the expected latency of round $r{+}2$ is within $2\Delta$ or at least one malicious validator is blamed by honest validators.

    \textit{Case 2:} Otherwise, honest validators in $\mathcal{V}_h^{fst}$ enter round $r{+}1$ before they receive $\mathcal{V}_h^{slw}$'s round $r$ blocks at $t_r+\Delta$. There are two scenarios, and we show that the expected latency of round $r{+}2$ is within $2\Delta$ before at least one malicious validator is blamed.

    First, every round $r$ block created by malicious validators $\mathcal{A}$ is referenced by at least $f{+}1$ round $r{+}1$ blocks created by honest validators in $\mathcal{V}_h^{fst}$. In this scenario, thanks to the ImPoA-based pull mechanism, every honest validator in $\mathcal{V}_h^{slw}$ can accept these round $r{+}1$ blocks without synchronizing any missing blocks on the push path. As a result, $\mathcal{V}_h^{slw}$ can create their round $r{+}2$ blocks at time $t_r+2\Delta$. Since, $\mathcal{V}_h^{slw}$ create their round $r{+}1$ blocks at time $t_r+\Delta$ (as mentioned above), the latency of round $r{+}2$ is within $\Delta$.

    Second, at least one round $r$ block created by malicious validators $\mathcal{A}$ is referenced by fewer than $f{+}1$ round $r{+}1$ blocks created by honest validators in $\mathcal{V}_h^{fst}$. This means that at least one malicious validator $v_m$ delays sharing its round $r$ block $B_m^r$ with some validators in $\mathcal{V}_{h1}^{fst}\subset \mathcal{V}_h^{fst}$, because otherwise, all honest validators in $\mathcal{V}_h^{fst}$ can receive enough round $r$ blocks and enter round $r{+}1$ at the same time. However, all honest validators in $\mathcal{V}_{h1}^{fst}$ must be able to create round $r{+}2$ blocks by time $t_r{+}2\Delta$, since the delayed validators in $\mathcal{V}_h^{slw}$ create their round $r{+}1$ blocks at time $t_r+\Delta$ (as mentioned above). As a result, once receiving $\mathcal{V}_{h1}^{fst}$'s round $r{+}2$ blocks at time $t_r{+}3\Delta$, $\mathcal{V}_h^{slw}$ can invoke the pull protocol to fetch any missing round $r{+}1$ blocks within $2\Delta$, and create their round $r{+}3$ blocks by time $t_r{+}5\Delta$. Recall that $\mathcal{V}_h^{slw}$ create their round $r{+}1$ blocks at time $t_r+\Delta$. For $\mathcal{V}_h^{slw}$, there are two rounds $r{+}1$ and $r{+}2$ between time $t_r{+}\Delta$ and $t_r{+}5\Delta$. As a result, the average latency of these two rounds is $(t_r{+}5\Delta-t_r{+}\Delta)/2=2\Delta$.


    By induction, we can see that for any future round $r'{>} r$, the latency of round $r'$ is within $2\Delta$ before any malicious validator $v_m$ is blamed. The proof is done.

\end{proof}

\begin{lemma} \label{lem-bounded-number{-}of-delayed-rounds}
    After GST, malicious validators $\mathcal{A}$ in \sysname can only delay the progress of the protocol in a bounded number of rounds with an expected latency higher than $2\Delta$ every $R_L$ rounds.
\end{lemma}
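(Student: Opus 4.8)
The plan is to reduce the statement to a reputation-accounting argument that counts how often validators in $\mathcal{A}$ can be \emph{blamed}, because by \Cref{lem-bounded-latency-each-round} and \Cref{lem-bounded-delay} every round whose expected latency exceeds $2\Delta$ must coincide with at least one malicious validator being blamed by honest validators (and in that round the latency is at most $3\Delta$). It therefore suffices to bound the number of blames that $\mathcal{A}$ can incur per $R_L$ rounds.

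First I would pin down the reputation dynamics. By \Cref{lem-no-blame-honest}, after GST no honest validator is ever blamed, so each honest reputation is non-decreasing and rises by the reputation-increase step (one unit) in essentially every round in which its block is received by $2f{+}1$ validators; in particular honest reputations remain at the top. By the mechanism of \autoref{sec-sysname-push}, a blame subtracts the large penalty $R_L$ from the blamed validator, whereas good behaviour restores at most one unit per round. The admission-control rule selects parents as the top $2f{+}1$ blocks by reputation, and \Cref{lem-optimal-round-latency-condition} shows that whenever every honest validator outranks every malicious one the latency is only $\Delta$. Hence a malicious validator $v_m$ can be referenced via strong links---and thus trigger a pull and a subsequent blame---only while its reputation stays competitive with the honest validators'.

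The core step is then a spacing argument: between two blames of the same $v_m$, its reputation must first drop by $R_L$ and then climb back to the competitive (top-$2f{+}1$) level. Since the only source of growth is the $+1$-per-round increment, this recovery consumes at least $R_L$ rounds, during each of which $v_m$ must disseminate its block to at least $f{+}1$ honest validators to gain reputation; by \Cref{lem-bounded-delay} the expected latency of every such recovery round is within $2\Delta$. Consecutive delayed rounds attributable to a single malicious validator are therefore separated by at least $R_L$ rounds, so each of the at most $f$ malicious validators contributes at most a constant number of delayed rounds per window of $R_L$ rounds, bounding the total by $O(f)$ per $R_L$ rounds. Equivalently, one can phrase this as a budget argument: over any $T$ rounds the total reputation gained by $\mathcal{A}$ is at most $f\cdot T$, each blame costs $R_L$, so at most $fT/R_L$ blames occur and the delayed-round rate is $O(f/R_L)$.

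The main obstacle I anticipate is making the ``recovery takes at least $R_L$ rounds'' step fully rigorous, because honest reputations are themselves growing: I must argue that the quantity actually gating reselection is $v_m$'s reputation \emph{relative} to the honest validators it would displace, and that the $+1$-versus-$-R_L$ asymmetry forces this relative deficit to be repaid before $v_m$ re-enters the top $2f{+}1$. Handling this cleanly requires either capping honest reputations (so the competitive threshold plateaus) or careful relative-potential bookkeeping, together with the admission-control tie-breaking, so that a freshly blamed $v_m$ provably cannot be selected as a strong-link parent until it has behaved well for $\Omega(R_L)$ rounds. Once that spacing is established, combining it with \Cref{lem-bounded-latency-each-round} and \Cref{lem-bounded-delay} to convert ``blames'' into ``delayed rounds'' is routine.
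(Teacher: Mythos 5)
Your proposal is correct and follows essentially the same route as the paper's proof: both amortize delayed rounds against reputation, observing that each round whose expected latency exceeds $2\Delta$ costs some malicious validator a blame worth $R_L$ points while recovery proceeds at only $+1$ per round, so at most $O(f)$ such rounds occur per window of $R_L$ rounds (the paper packages this as three explicit phases $\mathcal{D}_1,\mathcal{D}_2,\mathcal{D}_3$ covering the post-GST transient, the burn-down of the initial surplus $R_m{-}R_h$, and the steady state, whereas you phrase it as a spacing/budget argument, but the content is the same). The one point where you are more careful than the paper is the ``relative reputation'' obstacle you flag: the paper's $\mathcal{D}_3$ treats the honest baseline $R_h$ as static and simply asserts that regaining $R_L$ points takes $R_L$ rounds of good behaviour, without addressing the fact that honest reputations also grow by $+1$ per round, so your observation that the gating quantity is the deficit relative to the honest validators a blamed $v_m$ would displace is a genuine refinement that the paper leaves implicit (and, if anything, closing it only strengthens the bound, since a validator that can never catch up can never attack again).
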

\begin{proof}
    Recall that, when blamed, a validator has its reputation decreased by $R_L$ by every honest validator, where $R_L \geq 1$ is a predetermined parameter. Without loss of generality, assume that right after GST each malicious validator has reputation $R_m$, and the lowest reputation of any honest validator is $R_h$.

    By \Cref{lem-bounded-delay}, $\mathcal{A}$ can delay the protocol for one round with latency at most $5\Delta$ without being blamed. Denote this period by $\mathcal{D}_1$, and let $|\mathcal{D}_1|$ denote the number of rounds it delays with expected latency exceeding $2\Delta$. Thus, $\mathcal{D}_1$ adds at most $4\Delta$ to the total latency before $\mathcal{A}$ is blamed.

    After period $\mathcal{D}_1$, all honest validators can enter the same round at the same time. According to \Cref{lem-bounded-latency-each-round}, $\mathcal{A}$ might still be able to delay some future rounds with the expected latency of more than $2\Delta$, but each of such round will lead to at least one malicious validator being blamed and getting lose of $R_L$ points. Based on the reputation difference, we can derive that $f$ malicious validators can delay the protocol for $(R_m-R_h)\cdot f/R_L$ rounds before their reputations are lower than that of any honest validator. We denote this period as $\mathcal{D}_2$, and similarly, we have $|\mathcal{D}_2|=(R_m-R_h)\cdot f/R_L$. According to \Cref{lem-bounded-latency-each-round}, each of these rounds will increase the push latency by at most $3\Delta$. As a result, $\mathcal{D}_2$ will increase the latency of the protocol by at most $(R_m-R_h)\cdot f/R_L\cdot 3\Delta$.

    After period $\mathcal{D}_2$, $\mathcal{A}$ have the reputation equal to the lowest reputation of an honest validator. Since according to \Cref{lem-no-blame-honest}, the honest validators will not get their reputations decreased after GST, $\mathcal{A}$ need to perform correctly to get their reputation increased by honest validators. In particular, to delay $f$ rounds with the expected latency of more than $2\Delta$ for each after $\mathcal{D}_2$, these $f$ malicious validators $\mathcal{A}$ perform carefully without getting blamed for at least $R_L$ rounds, during which the expected latency of each round will be $2\Delta$. We denote this period as $\mathcal{D}_3$. During $\mathcal{D}_3$, $\mathcal{A}$ delay $f$ rounds with the expected latency of higher than $2\Delta$ every $R_L$ rounds. Thus, we have $|\mathcal{D}_3|=f$. According to \Cref{lem-bounded-latency-each-round}, each round being delayed will increase the push latency by at most $2\Delta$. As a result, $\mathcal{D}_3$ will increase the latency of the protocol by at most $f\cdot 2\Delta$ every $R_L$ rounds.

    By considering the above three periods, we can conclude that $\mathcal{A}$ can delay the protocol with the expected push latency higher than $2\Delta$ for at most $|\mathcal{D}_1|{+}|\mathcal{D}_2|{+}|\mathcal{D}_3|=1{+}(R_m-R_h)\cdot f/R_L{+}f$ rounds, and the extra latency introduced by $\mathcal{A}$ is at most $(4\Delta+(R_m-R_h)\cdot f/R_L\cdot 3\Delta{+}f\cdot 2\Delta)$ every $R_L$ rounds.
    As $f, R_m, R_h$, and $R_L$ are all constants, the proof is done.

\end{proof}

Finally, we have a proof for \Cref{thm-optimal-round-latency}.
\begin{theorem} \label{thm-optimal-round-latency}
    After GST, \sysname can achieve a push latency of nearly $2\Delta$ under adverse cases.
\end{theorem}
\begin{proof}[Proof for \Cref{thm-optimal-round-latency}]
    According to \Cref{lem-bounded-number{-}of-delayed-rounds}, for every $R_L$ rounds, malicious validators $\mathcal{A}$ can only increase the push latency by more than $2\Delta$ for a bounded number of rounds (corresponding to the $\mathcal{D}_1$, $\mathcal{D}_2$, and $\mathcal{D}_3$ periods). For the remaining rounds, the expected push latency is $2\Delta$. Hence, the total latency over $R_L$ rounds can be upper-bounded by $2\Delta + \Delta_{\text{extra}}$, where $\Delta_{\text{extra}}$ denotes the additional delay introduced by the adversarial periods.

    From \Cref{lem-bounded-number{-}of-delayed-rounds}, this extra delay satisfies $\Delta_{\text{extra}}\le 4\Delta+\frac{(R_m - R_h)f}{R_L}\cdot 3\Delta{+}f\cdot 2\Delta$. Therefore, the average push latency over these $R_L$ rounds is at most $\frac{R_L \cdot  2\Delta + \Delta_{\text{extra}}}{R_L}= 2\Delta \;+\; \frac{\Delta_{\text{extra}}}{R_L}$. Substituting the bound on $\Delta_{\text{extra}}$, we obtain $2\Delta \;+\; \frac{4\Delta}{R_L}\;+\; \frac{(R_m - R_h)f}{R_L^2} \cdot  3\Delta\;+\; \frac{f}{R_L} \cdot  2\Delta$.

    Observe that the delays caused by the $\mathcal{D}_1$ (corresponding to $\frac{4\Delta}{R_L}$ term) and $\mathcal{D}_2$ (corresponding to $\frac{(R_m - R_h)f}{R_L^2} \cdot 3\Delta$ term) periods occur only once after GST. Thus, the latencies caused by $\mathcal{D}_1$ and $\mathcal{D}_2$ become negligible over a sufficiently long execution and when $R_L$ is large. As a result, asymptotically, the average push latency is $2\Delta(1+\frac{f}{R_L})=2\kappa\Delta$ with $\kappa=1+\frac{f}{R_L}$. By setting a sufficiently large $R_L$, the average push latency can be arbitrarily close to $2\Delta$.
\end{proof}
\section{Security Analysis of Mysticeti-\sysname} \label{sec-sec-analysis}
\autoref{sec-sysname-consensus} discusses how to build a BFT consensus on \sysname to achieve a more efficient and robust block synchronization, and in \autoref{sec-experiment}, we instantiate Mysticeti consensus over \sysname to obtain a variant consensus Mysticeti-\sysname, demonstrating the performance improvement in practice. This section gives the security proof of Mysticeti-\sysname.

Like most BFT consensus protocols, we consider two security properties for Mysticeti-\sysname: \emph{liveness} and \emph{safety}. Similar to Mysticeti, the security properties of Mysticeti-\sysname only rely on strong links.\footnote{Mysticeti only defines strong links (or parents) but not weak links.} For simplicity, all references and links used in this section are considered strong links if not stated otherwise.
To help readers better follow the proof and make the section self-contained, we first review the consensus logic of Mysticeti and show how to integrate \sysname into Mysticeti (\autoref{sec-mysticeti-sysname-review}), and then prove the liveness (\Cref{sec-my-sys-liveness}) and safety (\Cref{sec-my-sys-safety}) of Mysticeti-\sysname.

\subsection{Build Mysticeti on \sysname} \label{sec-mysticeti-sysname-review}
\subsubsection{Consensus Logic in Mysticeti} \label{sec-consensus-logic-mysticeti}
Mysticeti orders transactions on top of the DAG constructed by its block synchroniser module using a two-step scheme. Below, we give the necessary definitions and notations required by the security proof, but recommend readers refer to Mysticeti's paper for more details about its consensus logic.

\para{(1) Decide status of leaders.}
First, Mysticeti designates leader validators using a round-robin approach\footnote{We consider single-leader when building Mysticeti-\sysname, where each round contains only one leader.} and employs two decision rules to decide the statuses of leader blocks that are proposed by the designated leader validators.
To achieve this, Mysticeti defines two DAG patterns on a leader block $B_L$: (i) \textit{skip pattern}, if $B_L$ is \textit{not} referenced by $2f{+}1$ blocks of the successive round; (ii) \textit{certificate pattern},\footnote{The same pattern as we defined in \autoref{sec-sysname-consensus} of this paper.} if $B_L$ is referenced by at least $2f{+}1$ blocks of the next round. Any subsequent block that includes a certificate pattern on $B_L$ via strong links in its causal history is called a \textit{certificate} for $B_L$ and is considered to certify $B_L$. For instance, in \Cref{fig:impoa}(a), the round $r{-}1$ block $B_2^{r-1}$ is referenced by $\{B_1^r, B_2^r, B_4^r\}$ from round $r$, and thus forms a certificate pattern. Since $B_2^{r+1}$ from round $r{+}1$ references $\{B_1^r, B_2^r, B_4^r\}$, $B_2^{r+1}$ is a certificate for $B_2^{r-1}$.

Validators then employ the following decision rules to decide leader blocks as \textit{to-commit} status or \textit{to-skip} status:
{
    \makeatletter
    \def\@listi{\leftmargin10pt \labelwidth\z@ \labelsep5pt}
    \makeatother
    \begin{itemize}
        \item \textit{Direct decision rule:} A round $r$ leader block $B_L^r$ is directly decided if it (i) has at least $2f{+}1$ certificate patterns at round $r{+}2$, in which $B_L^r$ is decided as \textit{to-commit}, or (ii) is identified as a \textit{skip pattern}, in which $B_L^r$ is decided as \textit{to-skip}. Otherwise, it remains undecided.
        \item \textit{Indirect decision rule:} For any undecided round $r'$ leader block $B_L^{r'}$, a replica searches for the first subsequent leader block $B_L^{r''}$ (where $r''{>}r'{+}2$) that is either decided as to-commit or is still undecided. In the former case, if $B_L^{r''}$ causally references a certificate for $B_L^{r'}$, then $B_L^{r'}$ is decided as to-commit; otherwise, it is decided as to-skip. In the later case, $B_L^{r'}$ remains undecided for the moment (Mysticeti guarantees that it will eventually be decided).
    \end{itemize}
}

If a leader block is decided as \textit{to-commit}, we call that the validators commit the leader block. Similarly, if a leader block is decided as \textit{to-skip}, we call that the validators skip the leader block.

\para{(2) Order transactions.}
Second, after all leader blocks up to round $r$ are decided, validators order all leader blocks up to round $r$ that have been decided as \textit{to-commit} and their causal history blocks using a deterministic linearization algorithm, e.g., a breadth-first search traversal~\cite{zhang2025no}. Recall that in the Mysticeti's original block synchronizer module (i.e., \syncopt synchronizer protocol), validators reference a leader block only when they receive \emph{all transactions} of the leader block's causal history. As a result, once blocks are ordered, transactions in the blocks are ordered and can be finalized as well. Here, the transaction finalization is defined based on the specific protocols/applications. For instance, in the state machine replication (SMR), transaction finalization indicates that the transaction is executed. Nevertheless, they all require that validators have full transaction data.

\para{Leader timeout.}
Since Mysticeti operates in the partial synchrony network model, it relies on timeouts to ensure liveness after GST. Mysticeti's consensus logic introduces a leader timeout $T_{ld}$ to round advancement. In particular, validators \emph{is ready to} advance to round $r$ (i) upon receiving the round $r{-}1$ leader block and receiving $2f{+}1$ round $r{-}1$ blocks referencing the round $r{-}2$ leader block, or (ii) upon the expiration of $T_{ld}$.

\subsubsection{Integrate \sysname into Mysticeti} \label{sec-integrate-sysname-to-mysticeti}
By replacing Mysticeti's original synchronizer module with \sysname, we can derive a variant consensus protocol Mysticeti-\sysname. Below, we demonstrate how \sysname interfaces with Mysticeti's consensus logic (\Cref{fig:overview} presents the overview of the interfaces). It is worth noting that Mysticeti-\sysname does not modify Mysticeti's consensus logic.

\para{Round advancement.}
In Mysticeti-\sysname, a validator $v_i$ advances to round $r$ and calls $\dagbc_i(B,r)$ to create and disseminate a round $r$ block $B$ to the system. Note that both \sysname and Mysticeti's consensus logic define rules for round advancement. Mysticeti-\sysname incorporates all these rules to preserve the relevant properties. Specifically, $v_i$ advances to round $r$ and calls $\dagbc_i(B,r)$ with at least $2f{+}1$ round $r{-}1$ accepted blocks and if one of the following round advancement rules becomes satisfied: (i) $v_i$ receives $2f{+}1$ blocks from round $r{-}1$ whose creators have reputations above the threshold $R_t$ \emph{and} receives the round $r{-}1$ leader block $B_L^{r-1}$ \emph{and} receives $2f{+}1$ round $r{-}1$ blocks referencing $B_L^{r-2}$; or (ii) $v_i$ is in round $r{-}1$, and a timeout $T_{live}=\max\{T_{ld}, T_{rd}\}$ expires, where $T_{ld}$ is the leader timeout introduced by Mysticeti's consensus logic, and $T_{rd}$ is the pre-round timeout used by \sysname for its round advancements in the synchronizer module; or (iii) $v_i$ is in round ${<}r$ and observes some blocks of round ${>}r$. In Mysticeti-\sysname, $T_{ld}$ is set to $7\Delta$, and both $T_{ld}$ and $T_{rd}$ are per-round timeouts (i.e., $v_i$ reschedules them once advancing to a new round). As a result, we use $T_{live}=7\Delta$ solely in our liveness proof for illustration purposes.

\para{Parent selection.}
When creating a new block, $v_i$ refers to its admission control rules to prioritize selecting high-reputation blocks that have been output via $\dagpoa$ and are considered \emph{consensus-specified}. A block in round $r{-}1$ is considered consensus-specified if it references the round $r{-}2$ leader block or itself is the round $r{-}1$ leader block. Note that including consensus-specified blocks as parents is required by Mysticeti's consensus logic (for liveness) and is not inherently required by \sysname. Nevertheless, \sysname can help validators filter out consensus-specified blocks that are created by suspected Byzantine validators and prevent them from delaying the progress of pushing blocks.


\para{Block ordering and transaction finalization.}
When ordering blocks, $v_i$ executes Mysticeti's decision rules on leader blocks that have been output via $\dagpoa$. In particular, similar to the vanilla Mysticeti, Mysticeti-\sysname employs a two-step scheme to order blocks. First, validators decide the status of every leader block using the two underlying decision rules. Then, validators order all leader blocks that have been decided as \textit{to-commit} status and their causal history blocks. Transactions in a block $B$ in Mysticeti-\sysname are finalized if \emph{all blocks} that are ordered before $B$ have been output via $\dagdeli$ and their transactions have been finalized.

\subsection{Liveness of Mysticeti-\sysname} \label{sec-my-sys-liveness}
We first show that Mysticeti-\sysname ensures liveness under partial synchrony. The liveness states that transactions will eventually be ordered and finalized by validators.


\begin{lemma} \label{lem-leader-proposal}
    In Mysticeti-\sysname, after GST, an honest validator's leader block will be referenced in the next round by every honest validator.
\end{lemma}
\begin{proof}
    After GST, if an honest validator enters a round $r$ at time $t$ and $r$ is the highest round among honest validators, then the honest leader validator (and every other honest validator) will be able to enter the same round $r$ by time $t+4\Delta$ (\Cref{lem-round-liveness}). Then the honest leader validator will directly create and disseminate the round $r$ leader block $B_L^r$, which will take another $\Delta$ to be received by every honest validator. Thus, by time $t+5\Delta$, all honest validators will receive the $B_L^r$, and with our pull protocol, they will accept $B_L^r$ by time $t+7\Delta$. Since the timeout $T_{live}$ is set to $7\Delta$, $B_L^r$ will arrive and become acceptable before the first honest validator times out. Since validators are asked to include leader blocks as parents (see \Cref{sec-integrate-sysname-to-mysticeti}), every honest validator will vote for the leader block.
\end{proof}

\begin{lemma} \label{lem-sufficient-votes}
    In Mysticeti-\sysname, after GST, all honest validators will create a certificate for the leader block proposed by an honest validator.
\end{lemma}
\begin{proof}
    Assume there is an honest leader block $B_L^{r}$ in round $r$, by \Cref{lem-leader-proposal}, all honest validators will vote for $B_L^{r}$ after GST. This means that $B_L^{r}$ is a certified block, and all honest validators will have their round $r{+}1$ blocks referencing $B_L^{r}$ as parents. By \Cref{lem-round-liveness}, every honest validator can receive $2f{+}1$ honest blocks from round $r{+}1$ within $5\Delta$, and these blocks will become acceptable within $7\Delta$. Consequently, according to the round advancements in Mysticeti-\sysname (\Cref{sec-integrate-sysname-to-mysticeti}), where validators wait for $7\Delta$ before giving up the certificate creation in round $r{+}2$, every honest validator can create a round $r{+}2$ block that references these $2f{+}1$ round $r{+}1$ blocks from honest validators. In other words, every honest validator will create a certificate for $B_L^{r}$.
\end{proof}

\begin{lemma} \label{lem-mysticeti-round-robin}
    The round-robin schedule of leader blocks in Mysticeti-\sysname ensures that in any window of $3f{+}3$ rounds, there are three consecutive rounds with honest leader blocks.
\end{lemma}
\begin{proof}
    There are $3f{+}1$ groups of three consecutive rounds. Due to the round-robin schedule, each of the honest validators must be one of the leaders in exactly 3 of these groups. As there are $2f{+}1$ honest validators, due to the pigeonhole principle, one group must contain $\lceil \frac{3*(2f+1)}{3f+1}\rceil=3$ honest leader blocks.
\end{proof}

\begin{lemma} \label{lem-mysticeti-eventual-decide}
    In Mysticeti-\sysname, after GST, any undecided leader block eventually gets decided.
\end{lemma}
\begin{proof}
    Consider an undecided leader block in round $r$. After GST, by \Cref{lem-mysticeti-round-robin}, there will eventually be three honest leader blocks in three consecutive rounds $k$, $k{+}1$, and $k{+}2$ with $k>r$. By \Cref{lem-sufficient-votes}, each of these honest leader blocks will have $2f{+}1$ certificates and can be decided as \textit{to-commit} via the direct decision rule. We now prove that by induction, all leader blocks in rounds $<k$ get decided. For the base case, any undecided leader blocks in rounds $k{-3}$, $k{-}2$, and $k{-}1$ get decided by the \textit{to-commit} leader blocks in rounds $k$, $k{+}1$, and $k{+}2$, respectively, via the indirect decision rule. For the induction step, any undecided leader block in round $r'<k-3$ also gets decided since $k$ is higher than $r'+2$ and there are no undecided leader blocks between $r'$ and $k$ (induction hypothesis).
\end{proof}


\begin{lemma} \label{lem-accept-with-a-quorum}
    In Mysticeti-\sysname, if a block $B$ is referenced by $2f{+}1$ subsequent blocks, then every honest validator will eventually output $\dagpoa$ for $B$.
\end{lemma}
\begin{proof}
    If $B$ is referenced by $2f{+}1$ subsequent blocks, at least $f{+}1$ honest validators reference $B$. These $f{+}1$ honest blocks $\mathcal{B}_{hst}$ will eventually be received by all honest validators. According to the ImPoA-based pull protocol (\autoref{sec-sysname-pull}), $\mathcal{B}_{hst}$ forms an implicit proof-of-availability certificate for $B$ and is output via $\dagpoa$ by every honest validator.
\end{proof}

\begin{theorem}[Consensus Liveness] \label{the-consensus-liveness}
    In Mysticeti-\sysname, after GST, transactions will be ordered and finalized.
\end{theorem}
\begin{proof}
    By \Cref{lem-sufficient-votes}, there will be $2f{+}1$ certificates for each honest leader block after GST, and the honest leader block will be decided as \textit{to-commit}. By \Cref{lem-mysticeti-eventual-decide}, all leader blocks will eventually get decided. Therefore, validators can order all \textit{to-commit} leader blocks and their causal history blocks. Moreover, since each \textit{to-commit} leader block created is referenced by $2f{+}1$ subsequent blocks as parents, by \Cref{lem-accept-with-a-quorum}, every honest validator will output $\dagpoa$ for the leader block. According to block availability and causal availability ensured by \sysname, the leader block and its causal history blocks will eventually be output via $\dagdeli$. This means that all transactions in \textit{to-commit} leader blocks and their causal history blocks can be retrieved, ordered and finalized.
\end{proof}

\subsection{Safety of Mysticeti-\sysname} \label{sec-my-sys-safety}
\sysname only relates to the liveness of protocols and does not interfere with their safety, as it does not change their commit rule. Nevertheless, we then show that Mysticeti-\sysname ensures safety. The safety states that the ordered transaction sequences of any two honest validators are consistent prefixes of each other.

Note that in Mysticeti-\sysname, the following facts are inherently held by definitions: 
(i) Honest validators only create at most one block each round (i.e., honest validators do not equivocate by definition); (ii) Each block must reference as parents $2f{+}1$ blocks created by $2f{+}1$ distinct validators from the immediately preceding round; (iii) A block is valid only if its creator corresponds to a registered validator in $\mathcal{V}$ (i.e., the set of validators in the system), which is attested by the validator's signature; (iv) The block digest is derived from hashing the block and can be used to identify a block, where an identical digest implies the same block.

\begin{lemma} \label{lem-certificate-link}
    In Mysticeti-\sysname, if $2f{+}1$ round $r$ blocks from distinct validators are certificates of a block $B$, then every block in any round $r'{>}r$ must (directly or transitively) reference a certificate for $B$ formed in round $r$.

\end{lemma}
\begin{proof}
    Recall that a block is a certificate for $B$ if it references $2f{+}1$ blocks that themselves reference $B$. Consider round $r{+}1$. Every block in this round references $2f{+}1$ blocks from round $r$. By quorum intersection, any such set intersects the certificate set of $B$ in at least one honest validator. Since honest validators do not equivocate, every round $r{+}1$ block must reference a block that is a certificate for $B$.
    By induction over rounds, this property propagates to all $r'{>}r$.
\end{proof}

\begin{lemma} \label{lem-direct-skip}
    In Mysticeti-\sysname, if an honest validator directly skip a round $r$ leader block $B_L^r$, then no honest validator commits $B_L^r$.
\end{lemma}
\begin{proof}
    A direct skip occurs only if at least $2f{+}1$ blocks in round $r{+}1$ do not reference $B_L^r$. However, if $B_L^r$ is committed, then by the decision rule, at least $2f{+}1$ round $r{+}1$ blocks reference it. Any conflicting observation would violate quorum intersection, implying an honest validator equivocated in round $r{+}1$, a contradiction.
\end{proof}


\begin{lemma} \label{lem-no-skip-commit}
    In Mysticeti-\sysname, if an honest validator directly commits $B_L^r$, then no honest validator (directly or indirectly) decides to skip $B_L^r$.
\end{lemma}
\begin{proof}



    Assume for contradiction that some honest validator commits $B_L^r$ while another skips it.

    First, consider $B_L^r$ is directly skipped. A direct skip occurs only if at least $2f{+}1$ blocks in round $r{+}1$ do not reference $B_L^r$. However, if $B_L^r$ is committed, then by the decision rule, at least $2f{+}1$ round $r{+}1$ blocks reference it. Any conflicting observation would violate quorum intersection, implying an honest validator equivocated in round $r{+}1$, a contradiction.

    Then, consider $B_L^r$ is indirectly skipped. According to the indirect rule, the skip must arise through a later leader block $B_L^{r'}$ with $r' {>} r{+}2$. By the decision rule, such a block must not reference a certificate for $B_L^r$. However, since $B_L^r$ is directly committed, it has $2f{+}1$ certificates formed by distinct validators. By \Cref{lem-certificate-link}, a certificate for $B_L^r$ must be referenced by all future blocks. This contradiction completes the proof.

\end{proof}

\begin{lemma} \label{lem-one-block-certify}
    In Mysticeti-\sysname, at most one leader block can be certified for any round $r$.
\end{lemma}
\begin{proof}

    Suppose two distinct leader blocks $B_{L1}^r$ and $B_{L2}^r$ both obtain $2f{+}1$ references from round $r{+}1$. By quorum intersection, at least one honest validator must belong to both quorums, and thus would have referenced both blocks in round $r$. This contradicts the protocol rule that a validator references at most one block per proposer per round.
\end{proof}

\Cref{lem-one-block-certify} implies the following corollary.
\begin{corollary} \label{coro-commit-same-leader}
    No two honest validators commit distinct leader blocks in the same round.
\end{corollary}

\begin{lemma} \label{lem-consistent-status}
    In Mysticeti-\sysname, all honest validators decide a consistent status for each round leader block.
\end{lemma}
\begin{proof}
    Consider two honest validators $v_i$ and $v_j$. According to the decision rules, a validator commits a leader block only when the direct commit occurs (i.e., either the block or a future leader block is directly committed). Let $n$ and $m$ be the highest rounds in which $v_i$ and $v_j$ directly commit a leader block, respectively. W.l.o.g, $n\leq m$. We then prove by backward induction that for every round $x \le n$, if both validators decide the leader block $B_L^x$, then both assign the same status to $B_L^x$.

    \textbf{Base case ($x{=}n$).}
    Validator $v_i$ directly commits $B_L^n$. By \Cref{lem-no-skip-commit}, $v_j$ cannot skip it. By \Cref{coro-commit-same-leader}, if $v_j$ decides the leader block in round $n$, then $v_i$ and $v_j$ commit the same block.

    \textbf{Inductive step.}
    Assume the statement holds for all rounds in $(k, n]$. Consider round $k$.
    If an honest validator directly commits $B_L^k$, then, similar to the proof in the base case, another honest validator will commit $B_L^k$ if it decides the round $r$ leader block.
    Moreover, by \Cref{lem-direct-skip}, if an honest validator directly skips the round $k$ leader block, another honest validator will skip the round $k$ leader block if deciding it.
    The only remaining case is that both decisions are indirect and derived from a later committed leader. Let $k_i$ and $k_j$ be the rounds of the first such commits for $v_i$ and $v_j$, respectively. By the induction hypothesis, $k_i = k_j$, and both validators commit the same leader block. Since indirect decisions depend only on the causal history of that block, both validators derive the same decision for $B_L^k$.
\end{proof}

\begin{theorem}[Consensus safety] \label{the-consensus-safety}
    In Mysticeti-\sysname, for any two honest validators $v_i$ and $v_j$, let $S_i$ and $S_j$ denote their ordered transaction sequences. Then $S_i$ and $S_j$ are prefix-consistent, i.e., one is a prefix of the other.
\end{theorem}
\begin{proof}
    According to the consensus logic employed by Mysticeti-\sysname, validators schedule ordering transactions up to round $r$ only when all leader blocks with round $r' \leq r$ have been decided. W.l.o.g, for any two honest validators $v_i$ and $v_j$, we assume $n$ and $m$ are such rounds and $n\leq m$.
    By \Cref{lem-consistent-status}, both $v_i$ and $v_j$ decide a consistent status for each round leader block with round $r'' \leq n$, meaning that they decide identical \textit{to-commit} leader blocks up to round $n$. 
    As a result, both $v_i$ and $v_j$ will order these \textit{to-commit} leader blocks and their causal history block consistently.
    Therefore, we can conclude that for any two honest validators, their ordered transaction sequences are consistent prefixes of each other.
\end{proof}

\section{The Pseudocode of the AC-based Push Protocol} \label{sec-push-pseudo-code}
\Cref{fig:push-alg} provides a pseudocode for \sysname's push protocol. The key components consist of a reputation mechanism (lines~\ref{step:reputation-start}-\ref{step:reputation-end}) and an admission control (lines~\ref{step:ac-start}-\ref{step:ac-end}), both of which are detailed in \Cref{sec-sysname-push}.

\begin{figure}[t]
    \footnotesize
    \begin{boxedminipage}[t]{0.48\textwidth}
        \textbf{Variables:}
        {
            \makeatletter
            \def\@listi{\leftmargin17pt \labelwidth\z@ \labelsep5pt}    
            \makeatother
            \begin{itemize}
                \item[] $R_L - $ The score decrease each time
                \item[] $TR_i[] - $ An array of reputations (indexed by validators)
                \item[] struct block $B$ {
                              \begin{itemize}
                                  \item [] $\cdots$ \Comment{original fields}
                                  \item[] $B.weaklinks$ - used to link blocks that $v_i$ has received and accepted but not referenced as parents
                                  \item[] $B.watermark[]$ - an array of the highest round numbers of all validators' blocks received by $v_i$
                                  \item[] $B.ancestors[]$ - an array of the highest round numbers of all validators' blocks reachable from $B$
                              \end{itemize}
                          }
            \end{itemize}
        }
        \begin{algorithmic}[1]
            \Statex {\color{gray}{$\triangleright$  Call $\dagbc_i(B, r)$} to push a round $r$ block $B$}
            \Procedure{create\_new\_block}{$r$, $\mathcal{B}^{r{-}1}$} \label{step:ac-start} \Comment{$\mathcal{B}^{r{-}1}$ is a list of the latest received blocks from all validators with round $\leq r{-}1$}
            \State {Initializes a block $B$ with $B.r=r$, $B.author=i$, and other original fields}
            \State $parents \gets$ \Call{AC\_parent\_selection}{$r$, $\mathcal{B}^{r{-}1}$}
            \State $B.parents \gets$ \text{digests of} $parents$ \label{step:link-start}
            \State $B.weaklinks \gets \{B'.d | B'\in\mathcal{B}^{r{-}1}{\setminus}parents$ \text{is acceptable}\} \label{step:link-end}
            \State $watermark \gets []$
            \For{$\forall B' \in \mathcal{B}^{r{-}1}$}
            \State $watermark[B'.author] \gets B'.r$
            \EndFor
            \State $B.watermark \gets watermark$
            \State $B.ancestors \gets$ \Call{compute\_ancestors}{$parents$}
            \State signs and broadcasts $B$ using best-effort broadcast
            \State \Call{update\_score\_with\_watermarks}{$r{-}1$, $\mathcal{B}^{r{-}1}$}
            \State outputs $\dagpoa_i$ and $\dagdeli_i$ for $B$ and every acceptable block in $\mathcal{B}^{r{-}1}$, if it hasn't done so already
            \EndProcedure
            \vspace{-0.2em}

            \Procedure{AC\_parent\_selection}{$r$, $\mathcal{B}$}
            \State $\mathcal{B} \gets \{B' \in \mathcal{B} | B'.r=r{-}1 \land B' \text{is acceptable}\}$
            \State $parents \gets \text{top } 2f{+}1 \text{ blocks in } \mathcal{B} \text{ by } TR_i[B'.author]$
            \State \Return $parents$
            \EndProcedure
            \vspace{-0.2em}

            \Procedure{compute\_ancestors}{$parents$}
            \State $ancestors \gets []$
            \For{$k \in 1, \cdots, n$}
            \State $ancestors[k] \gets \max(\{B'.r | B' \in parents \land B'.author = v_k\} \cup \{B'.ancestors[k] | B' \in parents\})$
            \EndFor
            \State \Return $ancestors$ \label{step:ac-end}
            \EndProcedure
            \vspace{-0.2em}

            \Procedure{update\_score\_with\_watermarks}{$r$, $\mathcal{B}^r$} \label{step:reputation-start}
            \For{$j \in 1, \cdots, n$}
            \State $count \gets 0$
            \For{$\forall B' \in \mathcal{B}$}
            \If{$B'.watermark[j] == r{-}1$}
            \State $count \gets count + 1$
            \EndIf
            \EndFor
            \If{$count \geq 2f{+}1$}
            \State $TR_i[j] \gets TR_i[j] + 1$
            \EndIf
            \EndFor
            \EndProcedure
            \vspace{-0.2em}

            \Event{pulling or receiving $f{+}1$ pull requests (i.e., blames) for a missing block created by $v_j$}
            \State $TR_i[j] \gets TR_i[j] - R_L$    \label{step:reputation-end}
            \EndEvent
            \vspace{-0.2em}


        \end{algorithmic}
    \end{boxedminipage}
    \caption{\sysname's AC-based optimistic push protocol for validator $v_i$.}
    \label{fig:push-alg}
\end{figure}
\section{\sysname Implementation} \label{sec:implementation}
We implement \sysname in Rust within Mysticeti~\cite{babel2025mysticeti} by forking the Mysticeti codebase~\cite{mysticeti-code}. It leverages \texttt{tokio}~\cite{tokio} for asynchronous networking, utilizing raw TCP sockets for communication, implementing reliable point-to-point channels, necessary to correctly implement the distributed system abstractions without relying on any RPC frameworks. For cryptographic operations, it relies on \texttt{ed25519-consensus}~\cite{ed25519-consensus} for asymmetric cryptography and \texttt{blake2}~\cite{rustcrypto-hashes} for cryptographic hashing. To ensure data persistence and crash recovery, it employs a Write-Ahead Log (WAL) optimizing I/O operations through vectored writes~\cite{writev}, efficient memory-mapped files, and uses \texttt{minibytes}~\cite{minibytes} to minimize copies and serialization.

By default, this Mysticeti implementation uses a traditional optimistic push followed by a random pull protocol (described in \Cref{sec-problem-def}). We modify its block synchronizer module to use \sysname instead. Implementing our mechanism requires adding less than 200 LOC, and does not require any extra cryptographic tools.

In addition to regular unit tests, we inherited and utilized two supplementary testing utilities from the Mysticeti codebase. First, a simulation layer replicates the functionality of the \texttt{tokio} runtime and TCP networking. This simulated network accurately simulates real-world WAN latencies, while the \texttt{tokio} runtime simulator employs a discrete event simulation approach to mimic the passage of time. Second, a command-line utility (called \emph{orchestrator}) which deploys real-world clusters of \sysname on machines distributed across the globe.

We are open-sourcing our \sysname implementation, along with its simulator and orchestration tools, to ensure reproducibility of our results\footnote{\codelink}.
\section{Experimental Setup} \label{sec:experimental-setup}
This section describes the experimental setup used for evaluating \sysname in \Cref{sec-experiment}.

We deploy all systems on AWS, using \texttt{m5d.8xlarge} instances across $13$ different AWS regions: Northern Virginia (us-east-1), Oregon (us-west-2), Canada (ca-central-1), Frankfurt (eu-central-1), Ireland (eu-west-1), London (eu-west-2), Paris (eu-west-3), Stockholm (eu-north-1), Mumbai (ap-south-1), Singapore (ap-southeast-1), Sydney (ap-southeast-2), Tokyo (ap-northeast-1), and Seoul (ap-northeast-2). Validators are distributed across those regions as equally as possible.
Each machine provides $10$\,Gbps of bandwidth, $32$ virtual CPUs (16 physical cores) on a $3.1$\,GHz Intel Xeon Skylake 8175M, $128$\,GB memory, and runs Linux Ubuntu server $24.04$.
We select these machines because they provide decent performance, are in the price range of ``commodity servers'', and match the minimal specifications of modern quorum-based blockchains~\cite{sui-min-specs}.

The \emph{latency} refers to the time elapsed from the moment a client submits a transaction to when it is committed by the validators, and the \emph{throughput} refers to the number of transactions committed per second.
We instantiate several geo-distributed benchmark clients within each validator submitting transactions in an open loop model, at a fixed rate. We experimentally increase the load of transactions sent to the systems, and record the throughput and latency of commits. As a result, all plots illustrate the steady-state latency of all systems under various loads. Transactions in the benchmarks are arbitrary and contain $512$ bytes. We configure both systems with $2$ leaders per round, and a leader timeout of 1 second.
\section{Pull Induction Attacks in \synccert Synchronizer Protocol}
In this section, we demonstrate how pull induction attacks can be extended to the current implementation of \synccert synchronizer protocol~\cite{narwhal-code} that has been widely adopted by certified DAG protocols, such as Narwhal~\cite{danezis2022narwhal}, Bullshark~\cite{spiegelman2022bullshark}, shoal~\cite{spiegelman2025shoal}, and Shoal++~\cite{arun2024shoal++}.

\subsection{Review CBC-based Push Mechanism}
The \synccert protocol adopts a CBC-based push mechanism, where each validator $v_i$ pushes its block $B$ to other validators each round $r$ through a three-step scheme, as shown in \Cref{fig:cbc-happy}. 

\para{Step 1: Broadcast block.}
For each round $r$, the validator $v_i$ creates a new block $B_i^r$, which specifies the round number $r$ and the creator $i$, uses round $r-1$ blocks as $parents$, and includes a list of new transactions $payload$. $v_i$ then signs $B_i^r$ and broadcasts it to other validators.

\para{Step 2: Collect acknowledgments.}
Upon receiving $B_i^r$, each other validator $v_j$ checks its validity. A block is valid if it (i) contains a valid signature from its creator, (ii) is at the local round $r$ of the validator checking it, (iii) contains at least $2f+1$ parent blocks of round $r-1$, and (iv) is the first one received from the creator for round $r$. $v_j$ then checks whether the availability of a valid block $B_i^r$ can be guaranteed. Specifically, if $v_j$ has output $\dagpoa$ for each parent block of $B_i^r$, then $B_i^r$ is guaranteed to be available, and $v_j$ will acknowledge it by signing its block digest, round number, and the creator's identity. Otherwise, $v_j$ has to invoke the pull procedure to synchronize missing ancestors. Eventually, $v_j$ sends its acknowledgment of $B_i^r$ back to $v_i$ for collection.

\para{Step 3: Broadcast certificate.}
Once $v_i$ gets $2f+1$ distinct acknowledgments for $B_i^r$, it combines them into a certificate $C(B_i^r.d)$. $v_i$ then broadcasts the certificate to other validators. Each other validator $v_j$ can independently verify the availability of $B_i^r$ with the certificate, and will output $\dagpoa$ for $B_i^r$.

Once a validator $v_i$ moves to a new round $r'$, it will use blocks as parents from round $r'-1$ that it has output $\dagpoa$. Validators repeat the above steps to continuously push new blocks into the network.

\begin{figure}[t]
    \centering

    \begin{subfigure}{\linewidth}
        \centering
        \includegraphics[width=\linewidth]{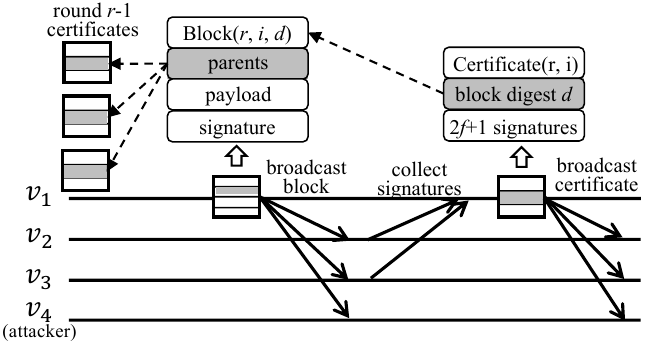}
        \caption{Honest validator $v_1$ pushes blocks.}
        \label{fig:cbc-happy}
    \end{subfigure}

    \vspace{-0.0em}

    \begin{subfigure}{\linewidth}
        \centering
        \includegraphics[width=\linewidth]{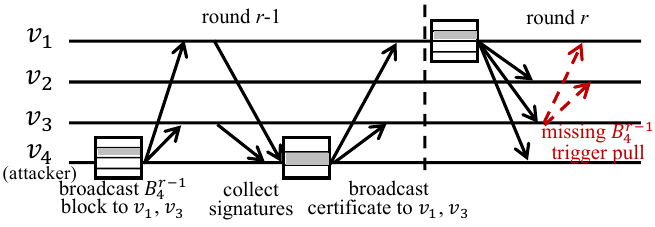}
        \caption{Malicious validator $v_4$ launches the pull induction attack.}
        \label{fig:cbc-attack}
    \end{subfigure}

    \caption{Illustration of \synccert synchronizer protocol under happy cases and attack cases. (a) The honest validator $v_1$ shares its block to all other validators. (b) The malicious validator $v_4$ selectively shares its round $r-1$ block $B_4^{r-1}$ to $v_1$ and $v_3$ but not $v_2$. In round $r$, $v_1$ references $B_4^{r-1}$ as parents and sends its round $r$ block $B_1^r$ to $v_2$. $V_2$ has to invoke the pull procedure to get $B_4^{r-1}$ before it can output $\dagpoa$ for $B_1^r$. As a result, $v_2$ is stuck in round $r$, and the latency of round $r$ is increased by one pull round-trip.}
    \label{fig:cbc-push}
\end{figure}

\subsection{Pull Induction Attacks}
The current implementation of the \synccert protocol, however, is vulnerable to the pull induction attack. \Cref{fig:cbc-attack} illustrates how a malicious validator $v_4$ can launch the pull induction attack to increase the push latency. 

\para{Attack example.}
In round $r-1$, $v_4$ only disseminates its round $r-1$ block $B_4^{r-1}$ to $v_1$ and $v_3$ but not $v_2$. $v_4$ collects $2f+1$ acknowledgments (including itself) and is able to construct a certificate for $B_4^{r-1}$. As a result, $v_1$ and $v_3$ will accept $v_4$'s certificate for $B_4^{r-1}$ and reference $B_4^{r-1}$ as parents of their round $r$ blocks. Once moving to round $r$, $v_1$ broadcasts its block $B_1^r$ (referencing $B_4^{r-1}$) to $v_2$. However, since $v_2$ has not received $B_4^{r-1}$, according to the CBC-based push mechanism (Step 2), it will not acknowledge $B_1^r$ immediately. Instead, $v_2$ triggers the pull procedure to synchronize $B_4^{r-1}$. This will introduce an extra round-trip (i.e., $2\delta$ due to its deterministic pull mechanism) to the round $r$. As a result, the push latency of round $r$ will be increased to $5\delta$.

\para{Discussion.}
The \synccert synchronizer protocol might prevent the above attacks by introducing extra overhead and complexity. For instance, instead of merely including the block digests as parents, validators can include the block digest along with its $2f+1$ acknowledgments/signatures as parents. This allows a validator to verify the availability of each block it receives without triggering the pull procedure on the push path, as these $2f+1$ acknowledgments themselves attest to the availability of the block. However, including $2f+1$ signatures into blocks will increase the protocol communication complexity, i.e., from $O(n^3)$ to $O(n^4)$, making the protocol less scalable and practical. On the other hand, one might use the threshold signature scheme to avoid the communication overhead introduced by the above approach. However, this requires a costly threshold-signature setup with either a trusted assumption or distributed key generation and resharing on every membership change. In contrast, \sysname is free from the explicit certificate and offers better push latency while resisting to pull induction attacks.
\section{Lean Formalisation} \label{sec-lean-formalisation}
We provide a complete machine-checked formalisation of the non-probabilistic results in this paper, developed in Lean~\cite{Moura021}. The proof artefacts are available with the accompanying material (please see `Additional materials' we have uploaded). This appendix records what is formalised, what is not, and the places where the formalisation makes a paper-implicit choice explicit.

\subsection{What is Formalised} \label{sec-lean-prove}

The Lean formalisation features the following components:

\begin{itemize}
    \item \textbf{\autoref{sec-block-sync-def} Definitions.} \Cref{def-dag-sync} and its four projections — Round-Progression (1.1), Round-Termination (1.2), Block-availability (1.3), and Causal-availability (1.4) — together with the network model, the honest and Byzantine partition, the synchronizer interface ($\dagbc$, $\dagpoa$, and $\dagdeli$), and the causal-history relation.
    \item \textbf{\autoref{sec-sysname} Protocol semantics.} The \sysname protocol of \autoref{sec-sysname} is modeled in full: block extensions of \autoref{sec-sysname-overview}, the reputation mechanism and admission control of \autoref{sec-sysname-push}, the ImPoA-based hybrid pull of \autoref{sec-sysname-pull}, and the \autoref{sec-sysname-consensus} availability and certificate patterns. The protocol is given as both a relational specification and an executable step-function that refines it.
    \item \textbf{\autoref{sec-analysis} Main theorems}. \Cref{lem-round-liveness} (round-entry within $4\Delta$), \Cref{lem-block-availability}–\Cref{thm-round-termination} (\sysname satisfies block-availability, causal-availability, round-progression, and round-termination), and the corollary that \sysname is a block synchronizer in the sense of \Cref{def-dag-sync}.
    \item \textbf{\Cref{sec-performance-analysis} deterministic bounds.} \Cref{asm-latency-triangle} (the latency triangle), \Cref{lem-no-blame-honest}, \Cref{lem-optimal-round-latency-condition}, and the deterministic disjunct of \Cref{lem-bounded-latency-each-round} (``after-GST round latency $2\Delta$-or-blame'').
    \item \textbf{\Cref{sec-sec-analysis} Mysticeti-\sysname.} The \Cref{sec-mysticeti-sysname-review} consensus rules (direct/indirect decision, skip and certificate patterns, the round-robin leader schedule), the \Cref{sec-my-sys-liveness} liveness chain (\Cref{lem-leader-proposal}, \Cref{lem-sufficient-votes}, \Cref{lem-mysticeti-round-robin}, \Cref{lem-accept-with-a-quorum} and \Cref{the-consensus-liveness}), and the \Cref{sec-my-sys-safety} safety chain (Lemmas~\ref{lem-certificate-link}–\ref{lem-consistent-status} and \Cref{the-consensus-safety}).
\end{itemize}

\subsection{What is Not Formalised} \label{sec-lean-not-prove}

The formalisation is restricted to the deterministic content of the paper. The following items, all of which involve probabilistic or expected-value reasoning, have been omitted from the formalisation (as they would require building, from scratch, an extensive library for reasoning about probabilistic network semantics, which is out of scope for this work):
\begin{itemize}
    \item The expected-latency disjunct of \Cref{lem-bounded-delay} in \Cref{sec-performance-analysis}.
    \item \Cref{lem-bounded-number{-}of-delayed-rounds} and \Cref{thm-optimal-round-latency} from \Cref{sec-performance-analysis}, providing expected-latency upper bounds.
\end{itemize}

In addition, \Cref{lem-mysticeti-eventual-decide} from \Cref{sec-sec-analysis}  is mechanised in its existential form: after-GST, at every starting round, there \emph{exists} a future round at which the leader's block is direct-committed (rather than in its universal indirect-rule form). The existential form is what the proof of \Cref{the-consensus-liveness} actually requires (\Cref{the-consensus-liveness} derives every honest validator's eventual acceptance from \Cref{sec-analysis} in-pool delivery and \Cref{sec-sysname-push} accept-action liveness, without invoking the indirect-rule chain). The recursive descent of the \Cref{sec-consensus-logic-mysticeti} indirect-decision rule is therefore not mechanised, as it's not required by any downstream proofs.

\subsection{Modelling Details}
The formalisation pins down two minor choices the paper leaves implicit; neither weakens nor strengthens any paper claim.
\begin{itemize}
    \item \textbf{Block representation.} A block in the paper carries ($r$, $d$, $author$, $parents$, $payload$, $signature$, $\cdots$). The formalisation retains the first five fields and routes Byzantine behaviour through the honest / Byzantine partition rather than through signature attribution; no theorem we mechanise invokes signature semantics.
    \item \textbf{Availability pattern.} \autoref{sec-sysname-consensus} defines a block as referenced by another when the latter strongly links or weak links to it. The formalisation counts strong-link references only, which yields a predicate that lower-bounds the paper's; every block forming the formalisation's pattern forms the paper's pattern.
\end{itemize}

\fi

\end{document}
\endinput